\setlist[itemize]{label=--}
\setlist[enumerate]{label=(\arabic*),labelindent=\parindent,leftmargin=*}
\newcommand{\logstar}{\log^{*}}
\newcommand{\eps}{\varepsilon}
\newcommand{\lovasz}{Lov\'{a}sz\xspace}
\newcommand{\proposalAnswerParameter}{\textsf{ProposalParameter}}
\newcommand{\payPerKill}{\textsf{PayPerKill}}
\newcommand{\phases}{\textsf{Phases}}
\newcommand{\totalTokens}{\textsf{totalTokens}}
\newcommand{\levels}{\textsf{Levels}}
\newcommand{\steps}{\textsf{Steps}}
\newtheorem{theorem}{Theorem}[section]
\newtheorem{lemma}[theorem]{Lemma}
\newtheorem{corollary}[theorem]{Corollary}
\newtheorem{definition}[theorem]{Definition}
\newtheorem{claim}[theorem]{Claim}
\newtheorem{observation}[theorem]{Observation}
\newcommand{\LOCAL}{\ensuremath{\mathsf{LOCAL}}\xspace}
\newcommand{\CONGEST}{\ensuremath{\mathsf{CONGEST}}\xspace}
\newcommand{\myemail}[1]{\,$\cdot$\, {\small #1}}
\newcommand{\myaff}[1]{\,$\cdot$\, {\small #1}\par\medskip}
\DeclareMathOperator{\poly}{poly}
\newenvironment{myabstract}
{\list{}{\listparindent 1.5em%
        \itemindent    \listparindent
        \leftmargin    1cm
        \rightmargin   1cm
        \parsep        0pt}%
    \item\relax}
{\endlist}
\newenvironment{mycover}
{\list{}{\listparindent 0pt
        \itemindent    \listparindent
        \leftmargin    1cm
        \rightmargin   1cm
        \parsep        0pt}%
    \raggedright
    \item\relax}
{\endlist}
\begin{document}

\begin{mycover}
  {\huge\bfseries\boldmath Efficient CONGEST Algorithms for the \lovasz Local Lemma \par}
\bigskip
\bigskip
\bigskip
\textbf{Yannic Maus}
\myemail{yannic.maus@cs.technion.ac.il}
\myaff{Technion}

\textbf{Jara Uitto}
\myemail{jara.uitto@aalto.fi}
\myaff{Aalto University}

\bigskip

\end{mycover}
\begin{myabstract}
\noindent\textbf{Abstract.}
	We present a $\poly \log \log n$ time randomized \CONGEST algorithm for a natural class of \lovasz Local Lemma (LLL) instances on constant degree graphs.
	This implies, among other things, that there are no LCL problems with randomized complexity between $\log n$ and $\poly \log \log n$.
	Furthermore, we provide extensions to the network decomposition algorithms given in the recent breakthrough by Rozho\v{n} and Ghaffari [STOC2020] and the follow up by Ghaffari, Grunau, and Rozho\v{n} [SODA2021].
	In particular, we show how to obtain a large distance separated weak network decomposition with a negligible dependency on the range of unique identifiers.
\end{myabstract}

\thispagestyle{empty}
\setcounter{page}{0}
\newpage


\section{Introduction}

 Our main contribution is a $\poly\log\log n$ round randomized distributed \CONGEST algorithm to solve a natural class of \lovasz Local Lemma (LLL) instances  on constant degree graphs. Among several other applications, e.g., various defective graph coloring variants,  this implies that there is no LCL  problem (locally checkable labeling problem) with randomized complexity strictly between $\poly\log\log n$ and $\Omega(\log n)$, which together with the results of~\cite{Balliu2021} implies that the \emph{known world} of complexity classes in the sublogarithmic regime in the \CONGEST and \LOCAL model are almost identical.  As a side effect of our techniques we extend the understanding of the computation of network decompositions in the \CONGEST model. We now explain our contributions on LLL and LCLs in more detail; in the second half of the introduction we explain our techniques, contributions on network decomposition and how they relate to results in the \LOCAL model.
 
 We work in the \CONGEST model of distributed computing, where a network of computational devices is abstracted as an $n$-node graph, where each node corresponds to a computational unit.
 In each synchronous round, the nodes can send messages of size $b = O(\log n)$ to their neighbors.
 In the end of the computation, each node is responsible of outputting its own part of the output/solution, e.g., its color in a graph coloring problem.
 The \LOCAL model is otherwise the same, except that there is no bound on the message size $b$.
 
\paragraph*{Background on LLL and LCLs in \LOCAL.} An \emph{instance} of the \emph{\lovasz Local Lemma problem} is formed by a set of variables and a set of bad event $\mathcal{E}_1,\ldots,\mathcal{E}_n$ that depend on the variables. The famous \lovasz Local Lemma \cite{LLL73} states that if the probability of each event is upper bounded by $p$, each event only shares variables with at most $d$ other events and $epd<1$ holds, then there exists  an assignment to the variables that avoids all bad events. An example of a problem that can be solved via LLL is the \emph{sinkless orientation} problem on graphs with minimum degree $4$. In the sinkless orientation problem the objective is to orient the edges of a graph such that every vertex has at least one outgoing edge. This can be modeled by an LLL as follows: Orient each edge randomly ---each edge represents a variable--- and introduce a bad event for each vertex that holds if and only if all edges are oriented towards it. One obtains $p=2^{-d}$ and $d\geq 4$ such that $epd<1$ holds.  
  The \lovasz Local Lemma has had a huge success in theory of computation. One highlight is the beautiful and simple parallel algorithm by Moser and Tardos~\cite{MoserTardos10}.   
In the distributed version of the problem each event corresponds to a node in the communication network and the assignment of the random variables, that are potentially shared by many events, is done by the corresponding nodes.
The algorithm by Moser and Tardos immediately yields a randomized $O(\log^2 n)$  LLL algorithm in the \LOCAL model. Often one has stronger guarantees on the relation of $p$ and $d$, e.g., a  polynomial criterion $epd^2<1$ (or even larger exponents) instead of only $epd<1$,  and can obtain simpler and faster algorithms. E.g., in the same model, Chung, Pettie, and Su obtained a randomized algorithm that runs in time $O(\log_{epd^2} n)$ whenever the \emph{criterion} $epd^2<1$ holds~\cite{SuLLL2017}. 
  
\noindent  \textit{LCLs through LLL:} The main recent interest in distributed LLL are constant degree graphs motivated by the study of LCLs. LCL problems are defined on constant degree graphs and in an \emph{LCL problem} each node is given an input from a constant sized set of labels and must output a label from a constant sized set of labels. The problem is characterized by a set of feasible constant-radius neighborhoods (for a formal definition see \Cref{sec:LLL}). Many classic problems are LCL problems, e.g., the problem of finding a vertex coloring with $\Delta$ colors in a graph with maximum degree $\Delta$.
  The systematic study of LCLs in the \LOCAL model initiated by Naor \& Stockmeyer \cite{naor95} has picked up speed over the last years leading to an almost complete classification of the complexity landscape of LCL problems in the \LOCAL model \cite{FOCSchang2016a,CP19,BHKLOS18,BBOS18,RG20}.
  One of the most important results in this line of research is by Chang and Pettie  \cite{CP19} who showed that any $o(\log n)$-round randomized \LOCAL algorithm for an LCL problem $P$ implies the existence of a $T_{\mathsf{LLL}}$-round algorithm for $P$ where $T_{\mathsf{LLL}}$ is the runtime of an LLL algorithm with an (arbitrary) polynomial criterion, e.g., for $p(ed)^{100}$.
  Among other implications, the breakthrough result by Rozho\v{n} and Ghaffari provided an $(\poly\log \log n)$-round LLL  algorithm, in the \LOCAL model, for the case that the dependency degree $d$ is bounded. 
  This implies a gap in the randomized complexity landscape for LCLs.  There is no LCL problem with a complexity strictly between $\poly\log \log n$ and $\Omega(\log n)$ in the \LOCAL model. 
  
  \subsection{Our Results on the Distributed \lovasz Local Lemma and LCLs} 
  Motivated by the progress in \LOCAL using the \emph{LCLs through LLL} application, we aim to design \CONGEST algorithms for the LLL problem. The first observation is that one wants to limit the range of the variables, as any reasonable algorithm should be able to send the value of a variable in a \CONGEST message,  and one also wants to limit the number of variables at a node such that nodes can learn assignments of all variables associated with their bad event(s)  efficiently. We call an LLL instance with dependency degree $d$ \emph{range bounded} if each event only depends on $\poly d$ variables and each variable uses values from a range of size $\poly d$. At first sight, this seems like a very restrictive setting, but in fact most instances of LLL satisfy these requirements. Our main result is contained in the following theorem and meets the $\poly\log \log n$ round  state of the art in the \LOCAL model \cite{RG20}.

 \begin{restatable}{theorem}{theoremBoundedLLL}
 	\label{thm:rangeBoundedLLL}
 	There is a randomized \CONGEST algorithm that with high probability solves any range bounded \lovasz Local Lemma instance that has at most $n$ bad events, dependency degree $d=O(1)$ and satisfies the LLL criterion $p(ed)^8<1$ where $p$ is an upper bound on the probability that a bad event occurs,  in $\poly\log\log n$ rounds.
 \end{restatable}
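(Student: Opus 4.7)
The plan is to follow the standard two-phase ``shatter and solve'' paradigm for LLL with a polynomial criterion, executed entirely inside \CONGEST. In the shattering phase each node samples its associated variables uniformly at random and learns the states of its bad events from its $O(1)$-radius dependency neighborhood; since the instance is range bounded and $d = O(1)$, every value exchanged has $O(\log d) = O(1)$ bits, so this step takes $O(1)$ rounds of \CONGEST. A standard witness-tree / Beck-style counting argument using the polynomial slack $p(ed)^8 < 1$ then shows that with high probability every connected component of the dependency graph induced by the still-bad events has size at most $N = \poly\log n$.

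For the post-shattering phase I solve each surviving component deterministically in $\poly\log\log n$ \CONGEST rounds. The engine is a weak network decomposition with $\poly\log\log n$ cluster diameter and $\poly\log\log n$ colors, computed directly in \CONGEST on the subgraph induced by the still-bad events. Plugging $N = \poly\log n$ into the Rozho\v{n}--Ghaffari / Ghaffari--Grunau--Rozho\v{n} decomposition yields a runtime of roughly $\poly(\log N, \log|\mathrm{IDs}|)$; but the surviving vertices retain their original identifiers from a $\poly n$ range, so the $\log|\mathrm{IDs}|$ factor alone would cost $\poly\log n$ and blow the budget. The network-decomposition extension announced in the abstract, whose dependence on the ID range is (almost) absent, is what closes this gap. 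Given the decomposition, I process color classes one by one; within each cluster a leader pipelines the entire local LLL description along a Steiner tree of diameter $\poly\log\log n$. Since each event depends on $O(1)$ variables of $O(1)$-bit range and a cluster has at most $N$ nodes, the instance has an $O(N)$-bit description that fits into $\poly\log\log n$ \CONGEST rounds. The leader then solves the restricted LLL locally---the criterion is inherited by every induced subinstance, so a sequential Moser--Tardos execution succeeds---and broadcasts the assignment back. Same-color clusters are independent in the dependency graph and are handled in parallel, and the interaction between different colors is absorbed in the standard sequential analysis.

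The main obstacle I expect is the \CONGEST network-decomposition step with nearly identifier-range-free runtime: the surviving vertices carry identifiers from a range of size $\poly n$, yet only $\poly\log\log n$ rounds are affordable, so the usual $\poly\log(\text{ID range})$ factor must be shaved down to $\poly\log\log n$. This is precisely the decomposition contribution advertised in the abstract and will occupy the bulk of the technical work. By contrast, the shattering analysis is essentially the same as in \LOCAL, and the pipelined gather-and-solve inside a cluster is straightforward once range boundedness is exploited; those are not the bottleneck.
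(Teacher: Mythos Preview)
There are two genuine gaps.

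\textbf{Pre-shattering.} Sampling once and taking connected components of the violated events does give small components, but it does not leave a well-posed residual LLL. To repair a component $C$ you must resample variables in $vbl(C)$, and this can violate events in $\partial C$; those boundary events are currently satisfied, but their conditional probability under the already-fixed outside assignment can be arbitrarily close to $1$, so ``the criterion is inherited by every induced subinstance'' fails the moment boundary variables are pinned. The paper uses the Fischer--Ghaffari pre-shattering instead: iterate through a distance-$2$ coloring of $H$, sample the non-frozen variables of each color class in turn, and whenever the conditional probability of some event rises to $\sqrt{p}$, unset and permanently freeze all of its variables. This guarantees that the events with at least one unset variable induce an honest LLL with parameter $\sqrt{p}$ (criterion $\sqrt{p}(d+1)<1$), and it is that instance whose components have size $O(\log n)$.

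\textbf{Post-shattering.} With $O(\log N)=O(\log\log n)$ colors, ``the interaction between different colors is absorbed in the standard sequential analysis'' does not go through: that sequential argument (Theorem~3.5 in \cite{FGLLL17}) needs the criterion $p(ed)^{\lambda}<1$ for $\lambda$ color classes, which is incompatible with $p(ed)^{8}<1$ when $\lambda=\Theta(\log\log n)$; and switching to $\lambda=O(1)$ colors forces cluster diameter $N^{1/\lambda}\gg\poly\log\log n$. The paper therefore does \emph{not} solve the residual LLL cluster by cluster via Moser--Tardos. It derandomizes the $T(N)=O(\log N)$-round Chung--Pettie--Su algorithm by conditional expectation: compute a decomposition of the component whose same-color clusters are at distance strictly greater than $4(T(N)+r)$ (not merely non-adjacent), gather at each cluster leader the topology, inputs, and already fixed random bits of a $2(T(N)+r)$-neighborhood of the cluster, and have the leader fix the \emph{random bits} of the nodes in its cluster so that the invariant $E\big[\sum_v X_v\mid \phi_X\big]<1$ is preserved. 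The large separation is exactly what makes the choices in different same-color clusters independent; this is why the identifier-light decomposition must also support cluster distance $k\gg 1$, a requirement your plan omits.
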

On the negative side, it is know that a double logarithmic dependency cannot be avoided.
There is an $\Omega(\log\log n)$-round lower bound for solving LLL instances with randomized algorithms in the \LOCAL model  \cite{LLL_lowerbound}, that holds even for constant degree graphs,  and it naturally applies to the \CONGEST model as well. It stems from a lower bound on the aforementioned sinkless orientation problem.  
 
\paragraph*{Implications for the Theory of LCLs.}
As the reduction from \cite{CP19},  that reduces sublogarithmic time randomized algorithms to LLL algorithms, immediately works in \CONGEST, our fast LLL algorithm implies a gap in the complexity landscape of LCLs.
A more precise definition of LCLs and a proof for the following corollary will be presented in the end of \Cref{sec:LLL}.
 
 \begin{restatable}{corollary}{corLCLgap}
 	\label{cor:LCLgap}
 	There is no LCL problem with randomized complexity strictly between $\poly\log\log n$ and $\Omega(\log n)$ in the \CONGEST model. 
 \end{restatable}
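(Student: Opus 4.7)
The plan is to combine the Chang--Pettie reduction \cite{CP19} with the \CONGEST LLL algorithm of \Cref{thm:rangeBoundedLLL}. It suffices to show that every LCL problem $P$ with randomized \CONGEST complexity $T(n) = o(\log n)$ can actually be solved in $\poly\log\log n$ rounds. Fix such a $P$. Since every \CONGEST algorithm is in particular a \LOCAL algorithm, the hypotheses of \cite{CP19} hold, and that reduction produces a constant-radius randomized algorithm whose per-node failure events form an LLL instance. The polynomial exponent in the resulting LLL criterion is a free parameter of the reduction; we choose it large enough that the produced instance satisfies $p(ed)^8 < 1$, as required by \Cref{thm:rangeBoundedLLL}. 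We then (i) spell out the reduction, (ii) invoke \Cref{thm:rangeBoundedLLL} to compute a satisfying assignment in $\poly\log\log n$ \CONGEST rounds, and (iii) let each node read the assignment in its constant-radius neighborhood and output its label in $O(1)$ additional rounds.

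The second step is to verify that the LLL instance produced by \cite{CP19} is range bounded in the sense of the paper. Since $P$ is an LCL, the host graph has constant degree and both the input and output alphabets have constant size; the constant-radius algorithm from the reduction can moreover be taken to draw a random seed at each node from a domain of constant size. Consequently each variable takes values from a set of size $O(1) = \poly(d)$ with $d = O(1)$, each bad event depends only on the $O(1)$ seeds and inputs in a constant-radius ball, and the dependency degree is $O(1)$. Range boundedness is therefore immediate and the instance is exactly of the form handled by \Cref{thm:rangeBoundedLLL}.

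The main---and essentially only---obstacle is to confirm that the Chang--Pettie reduction can be carried out under the \CONGEST message-size constraint. The reduction itself is a structural/existential statement that produces an LLL instance; the only parts that actually need to be executed on the network are the evaluation of the bad events and the LLL solver. The former takes $O(1)$ \CONGEST rounds because each node only needs to gather the $O(1)$ constant-size seeds and inputs from a constant-radius neighborhood, each of which fits in a single $O(\log n)$-bit message, and the latter is precisely what \Cref{thm:rangeBoundedLLL} delivers. Composing these two components yields the claimed $\poly\log\log n$-round \CONGEST algorithm for $P$, from which the gap asserted in \Cref{cor:LCLgap} follows immediately.
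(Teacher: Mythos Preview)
Your proposal is correct and follows essentially the same approach as the paper: invoke the Chang--Pettie reduction (packaged in the paper as \Cref{lem:LCLSpeedup}) to obtain a constant-degree, range-bounded LLL instance with a polynomial criterion of your choice, solve it with \Cref{thm:rangeBoundedLLL}, and then run the residual $O(1)$-round \LOCAL algorithm in \CONGEST using the constant-degree assumption. The paper additionally remarks that the LLL has to be solved on a constant power of the communication graph (so the dependency graph is not literally $G$), but since $\Delta=O(1)$ this costs only a constant factor in \CONGEST and is implicitly covered by your ``constant-radius ball'' discussion.
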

In fact, \Cref{cor:LCLgap} together with the results of \cite{Balliu2021}  implies that the \emph{known world} of complexity classes in the sublogarithmic regime in the \CONGEST and \LOCAL model are almost identical. In fact, a difference can only appear in the extremely small complexity regime between $\Omega(\log\logstar n)$ and $O(\logstar n)$ and in the important regime of complexities that lie between $\Omega(\log \log n)$ and $\poly\log \log n$  where complexity classes are not  understood in the \LOCAL model. An immediate implication of \Cref{cor:LCLgap} is a $\poly\log\log n$-round randomized \CONGEST algorithm for $\Delta$-coloring when $\Delta$ is constant, a result that was previously only known in the \LOCAL model \cite{GHKM18}.
Chang and  Pettie \cite{CP19} conjecture that the runtime of LLL in the \LOCAL model is $O(\log\log n)$ on general bounded degree graphs which would further simplify the complexity landscape for LCLs. On trees this complexity can be obtained \cite{CHLPU18} in the \LOCAL model and for the specific LLL instances that arise in the study of LCLs on trees an $O(\log \log n)$-round algorithm has also been found in the \CONGEST model \cite{Balliu2021}.

\subsection{Technical Overview and Background on our Methods} 

Our core technical contribution to obtain \Cref{thm:rangeBoundedLLL} is a bandwidth efficient derandomization of the LLL algorithm by Chung, Pettie, and Su \cite{SuLLL2017} that we combine with the shattering framework of Fischer and Ghaffari \cite{FGLLL17}. To explain these ingredients and how we slightly advance our understanding of network decompositions in the \CONGEST model on the way, we begin with explaining the background on distributed derandomization and network decompositions and the relation to results in the \LOCAL model. 

\paragraph*{Background on Network Decompositions and Distributed Derandomization.} Network decompositions are powerful tools with a range of applications in the area of distributed graph algorithms and were introduced by Awerbuch, Luby, Goldberg, and Plotkin~\cite{awerbuch89}. A $(C,D)$-\emph{network decomposition} (ND) is a partition of the vertices of a graph into $C$ collections of clusters (or color classes of clusters) such that each cluster has diameter\footnote{For the sake of this exposition it is enough to assume that a cluster $\mathcal{C}$ has diameter $D$, that is, any two vertices in the cluster are connected with a path within the cluster of length at most $D$. Actually, often these short paths are allowed to leave the cluster which may cause congestion when one uses these paths for communication in two clusters in parallel. The details of the standard way to model this congestion and its impact are discussed in \Cref{sec:graphDecompositions}.} at most $D$.   Further,  it is required that the clusters in the same collection are \emph{independent}, i.e., are not connected by an edge. This is extremely helpful in the \LOCAL model, e.g., to compute a $(\Delta+1)$-coloring of the network graph one can iterate through the $C$ collections, and within each collection process all clusters in parallel (due to their independence). Due the unbounded message size dealing with a single cluster is trivial. A cluster leader can learn all information about the cluster in time that is proportional to the cluster diameter $D$, solve the problem locally and disseminate the solution to the vertices of the cluster. Thus the runtime scales as $O(C\cdot D)$.  Hence, the objective has been to compute such decompositions as fast as possible and with $C$ and $D$ as small as possible, optimally, all values should be polylogarithmic in $n$. 
 Awerbuch et al.  gave a deterministic \LOCAL algorithm with round complexity and $C$ and $D$ equal to $2^{O(\sqrt{\log n\log\log n})}\gg \poly\log n$. Panconesi and Srinivasan improved both parameters and the runtime to $2^{O(\sqrt{\log n})}$ \cite{panconesi95}. Linial and Saks showed that the optimal trade-off between diameter and the number of colors is $C=D=O(\log n)$  and they provided an $O(\log^2 n)$-round randomized algorithm to compute such decompositions \cite{LS93} . 
 These algorithms remained the state of the art for almost three decades even though the need for an efficient deterministic algorithm for the problem has been highlighted in many papers, e.g.,  \cite{barenboimelkin_book,GKM17,GHK18,CP19,GK19}. 
 
 A few years ago, Ghaffari, Kuhn, and Maus \cite{SLOCAL17} and Ghaffari, Harris, and Kuhn \cite{newHypergraphMatching} highlighted the importance of network decompositions by showing that an efficient deterministic \LOCAL algorithm for network decompositions with $C=D=\poly\log n$ would immediately show that an efficient randomized \LOCAL algorithm for any efficiently verifiable graph problem would yield an efficient deterministic algorithm. Here, all occurrences of \emph{efficient} mean polylogarithmic in the number of nodes of the network. 
 Then, in the aforementioned breakthrough Rozho\v{n} and Ghaffari \cite{RG20} devised such an efficient deterministic algorithm for network decompositions, yielding efficient deterministic algorithms for many problems. The result also had an immediate impact on randomized algorithms.  Many randomized algorithms in the area use the \emph{shattering technique}  that at least goes back to Beck \cite{beck1991LLL} who used the technique to solve LLL instances in centralized settings. It has been introduced to distributed computing by Barenboim, Elkin, Pettie and Schneider in \cite{BEPSv3}.  The shattering technique usually implements the following schematic:
 First, the nodes use  a randomized process and the guarantee is that with high probability \emph{almost} all nodes find a satisfactory output.
 The remainder of the graph is \emph{shattered} into \emph{small components}, that is, after this so called \emph{pre-shattering} phase the unsolved parts of the graph induce small---think of $N=\poly\log n$ size--- connected components.  
 In the \emph{post-shattering phase} one wishes to use an efficient deterministic algorithm, e.g., a deterministic algorithm with complexity $T(n)=\poly\log n$ applied to each small component results in a complexity of $T(N)=T(\poly\log \log n)$. Combining the shattering technique, e.g., \cite{BEPSv3,FGLLL17,ghaffari16_MIS}, the network decomposition algorithm from \cite{RG20} and the derandomization from \cite{GKM17,GHK18} the randomized complexities for many graph problems in the \LOCAL model are $\poly\log \log n$. 
 
\paragraph*{The Challenges in the \CONGEST Model.}
The holy grail would be to obtain a similar derandomization result in the \CONGEST model. But,  even though the network decomposition algorithm from \cite{RG20} immediately works in \CONGEST , we are probably far from  obtaining such a result. Even for simple problems like computing a maximal independent set or a $(\Delta+1)$-coloring one has to work much harder to even get $\poly\log n$ round deterministic algorithms in the \CONGEST model, even if a $(\log n, \log n)$-network decomposition is given for free \cite{CPS20,BKM20}. 
For LLL obtaining such a bandwidth efficient algorithm seems much harder. 
Even in the \LOCAL model one either has to go through the aforementioned derandomization result or one can solve an LLL instance with criterion $p(ed)^{\lambda}$, e.g., think of $\lambda=10$,  if the network decomposition only has $\lambda$ color classes. 
This immediately implies that the cluster diameter is $\Omega(n^{1/\lambda})$ and algorithms that rely on aggregating all information about a cluster in a cluster leader must  use at least $\Omega(n^{1/\lambda})$ rounds~\cite{LS93}. 
In the \LOCAL model there are black box reductions \cite{GKM17,RG20,BEGav18} to compute such decompositions. We provide an analysis of the algorithm by Rozho\v{n} and Ghaffari (and of another algorithm by Ghaffari, Grunau and Rozho\v{n} \cite{GGR20}) where we carefully study the trade-off between number of colors, the diameter of the clusters, and the runtime of the algorithm to obtain such decompositions in \CONGEST.
\medskip

\noindent\textbf{\Cref{thm:constantColors}} (informal)\textbf{.}
\emph{For $k\geq 1$ and $1\leq \lambda <\log n$ there is an $O(k\cdot n^{1/\lambda}\poly\log n)$-round deterministic \CONGEST algorithm to compute a $(\lambda, (k\cdot n^{1/\lambda}\poly\log n)$-network decomposition such that any two clusters with the same color have distance strictly more than $k$. The algorithm works with a mild dependence on the ID space\footnote{To be precise, the dependency on an ID space $\mathcal{S}$ space is a $\logstar|\mathcal{S}| $ factor.}.}

\medskip

For $k=1$, decompositions with few colors similar to \Cref{thm:constantColors} could already be obtained (using randomization) through the early works by Linial and Saks, who also showed that their trade-off of the number of colors and the cluster diameter in \Cref{thm:constantColors} is nearly optimal \cite{LS93}. 
In fact, \Cref{thm:constantColors} does not just provide an improved analysis of the algorithm of \cite{RG20} but it also comes with a mild dependence on the identifier space and an arbitrary parameter to increase the distance between clusters. Both ingredients are also present in our second result on network decompositions (\Cref{thm:mainCongest}) where they are crucial for the proof of \Cref{thm:rangeBoundedLLL}.

\subparagraph*{Our Solution: Range bounded LLLs in \CONGEST.}
We use the shattering framework for LLLs of \cite{FGLLL17} whose pre-shattering phase, as we show, works in the \CONGEST model for range bounded LLLs. As a result we obtain small remaining components of size $N\ll n$, in fact, we obtain $N=O(\log n)$. Furthermore, one can show that the remaining problem that we need to solve on the small components is also an LLL problem. We solve these small instances via a bandwidth efficient derandomization of the LLL algorithm by Chung, Pettie, Su \cite{SuLLL2017}.  Note that we cannot solve the small components without derandomizing their algorithm, as running their randomized algorithm on each component for $T(N)=O(\log N)=O(\log\log n)$ rounds would imply an error probability of $1/N$ which is exponentially larger than the desired high probability guarantee of $1/n$. Thus, we desire to find \emph{good random bits} for all nodes with which we can execute the algorithm from \cite{SuLLL2017} without any error. The goal is to use a network decomposition algorithm to partition the small components into $O(\log N)=O(\log\log n)$ collections of independent clusters. Then, we iterate through the collections and want to obtain good random bits for the vertices inside each cluster. Since the randomized runtime of \cite{SuLLL2017} on a small component would be $T(N)=O(\log N)=O(\log \log n)$, we observe that the random bits of a node $v$ cannot influence the correctness at a node $u$ if $u$ and $v$ are much further  than $T(N)$ hops apart. Thus, similar to \Cref{thm:constantColors} we devise the following theorem to compute network decompositions with large distances between the clusters.  In fact, if we apply the theorem with  $k>2\cdot T(N)$, we obtain independent clusters in each color class of the decomposition. 

\medskip

\noindent\textbf{\Cref{thm:mainCongest}} (informal)\textbf{.}
\emph{For $k\geq 1$  there is an $O(k\cdot \poly\log n)$-round deterministic \CONGEST algorithm to compute a $(\log n, \poly\log n)$-network decomposition  such that any two clusters with the same color have distance strictly more than $k$. The algorithm works with a mild dependence on the ID space.}

\medskip

While \cite{RG20} provided a modification of their algorithm that provides a larger distance between clusters its runtime and cluster diameter depend polylogarithmically not only on the number of nodes in the network but also on the ID space. In \cite{GGR20} Ghaffari, Grunau and Rozho\v{n} have reduced the ID space  in the special case in which the cluster distance $k$ equals one. However, our bandwidth efficient LLL algorithm requires $k\gg 1$ and identifier independence at the same time.  Thus, one can either say we add the ID space independence to \cite{RG20} or we extend the results of \cite{GGR20} to $k>1$.

We already explained why we require that we obtain a network decomposition with a large cluster distance to derandomize an algorithm. The mild dependence on the identifier space in \Cref{thm:mainCongest} is also essential as the small components with $N$ nodes on which we want to use the network decomposition algorithm, actually live in the original communication network $G$. Thus they are equipped with identifiers that are polynomial in $n$, that is, exponential in $N$. 
The fact that the small components live in the large graph makes our life harder when computing a network decomposition but it helps us when designing efficient \CONGEST algorithms. The bandwidth when executing an algorithm on the small components is still $O(\log n)$ bits per edge per round while the components are of size $N\ll n$. Ignoring details for the sake of this exposition,  we use the increased bandwidth to gather enough information about a cluster in a single cluster leader such that it can select good random bits for all nodes of the cluster and in parallel with all other clusters of the same color class due to the large distance between clusters.

We emphasize that there have been several other approaches to derandomize algorithms in the \CONGEST model and discussing all of them here would be well beyond the scope of this work. However, we still believe that our derandomization technique in the post-shattering phase might be of independent interest as it applies to a more general class of algorithms than just the aforementioned LLL algorithm from \cite{SuLLL2017}.

\noindent \textit{Other Implications.} As the runtime of \Cref{thm:mainCongest} is a $\log n$ factor faster than the result in \cite{RG20}, it also improves the complexity of deterministic distance-$2$ coloring with $(1+\eps)\Delta^2$ colors in the \CONGEST model to $O(\log^7 n)$ rounds when plugged into the framework of \cite{HKM20}.
Similar improvements carry over to the approximation of minimum dominating sets \cite{DKM19} and spanner computations \cite{GK18}.
 Due to the identifier independence and possibility to increase the distance between clusters   \Cref{thm:mainCongest}  yields an improved randomized complexity of distance-$2$ coloring. Using \Cref{thm:mainCongest} in the shattering based algorithm from \cite{HKMN20} improves the runtime for $\Delta^2+1$ colors from $2^{O(\sqrt{\log\log n})}$ to $O(\log^7\log n)$ rounds.

\paragraph*{Roadmap:}
In \Cref{ssec:relatedWork}, we provide pointers for further related work, mainly on network decompositions. In \Cref{sec:definitions} we define the models and introduce notation. In \Cref{sec:graphDecompositions} we formally state the result on network decompositions and indicate the main changes to prior work. The formal proofs of these results appear in \Cref{app:redblue,app:RG,app:levels}.  The main part of the paper deals with proving \Cref{thm:rangeBoundedLLL} and appears in \Cref{sec:LLL}.

\subsection{Further Related Work}
\label{ssec:relatedWork}
We already mentioned that \cite{GGR20} provides an efficient deterministic \CONGEST algorithm with a mild dependence on the ID space. More precisely, they provided a $(O(\log n, \log^2 n))$-network decomposition in $O(\log^5 n+\log^4 n\cdot \logstar b)$ rounds, where $\logstar b$ is the dependency on the identifier length $b$~\cite{GGR20}. One drawback that we have ignored until now--- and that also applies to all of our results---is that these network decompositions only have so called \emph{weak diameter}, that is, the diameter of a cluster is only guaranteed to be small if it is seen as a subset of the communication network $G$, that is, the distance between two vertices is measured in $G$ and not only in the subgraph induced by a cluster. In contrast, in the so called \emph{strong network decompositions} the diameter in the subgraph of $G$ that is induced by each cluster has to be small. Very recently, at the cost of increasing the polylogarithmic factors in the runtime the results of \cite{GGR20} were extended to obtain strong $(O(\log n, \log^2 n))$-network decompositions \cite{CG21}. Earlier works by Elkin and Neiman provided an $O(\log^2 n)$ randomized algorithm for computing strong $(O(\log n), O(\log n))$-decompositions \cite{elkin16_decomp}.  

Most previous works in the \CONGEST model do not ensure large distances between clusters. Besides the result in \cite{RG20} that we have already discussed there are two notable exceptions. Ghaffari and Kuhn matched the complexity of the \LOCAL model algorithm by Awerbuch et al. by giving a deterministic $k \cdot 2^{O(\sqrt{\log n \log \log n})}$ round algorithm \cite{GK18}. Later, this was improved  to $k\cdot 2^{O(\sqrt{\log n})}$  rounds  by Ghaffari and Portmann \cite{GP19}. In both cases, the decomposition parameters were identical to the runtimes. None of these results is the state of the art anymore, except for the fact that they compute strong network decompositions with large distances between the clusters.


Barenboim, Elkin, and Gavoille provide various algorithms with different trade-offs between the number of colors and the cluster diameter, most notably a randomized algorithm to compute a strong network decomposition with diameter $O(1)$ and $O(n^{\eps})$ colors \cite{BEGav18}. A similar result with $O(n^{1/2+\eps})$ colors was obtained by Barenboim in \cite{B12}.

Brandt, Maus, and Uitto and Brandt, Grunau, and Rozho\v{n} have shown that LLL instances with an exponential LLL criterion, that is, $p2^d<1$ holds, can be solved in $O(\logstar n)$ rounds on bounded degree graphs \cite{BMU19,BGR20}.  Furthermore, it is known that $\Omega(\logstar n)$ rounds cannot be beaten under any LLL criterion that is a function of the dependency degree $d$ \cite{SuLLL2017}. 
For an exponential criterion that satisfies $p2^d \geq 1$, it follows from the works of Brandt et al.~and Chang, Kopelowitz, and Pettie~\cite{FOCSchang2016a} that there is a lower bound of $\Omega(\log n)$ rounds for deterministic LLL. Hence \cite{BMU19,BGR20} and \cite{FOCSchang2016a} show that there is a sharp transition of the distributed complexity of LLL at $2p^d=1$. 
Before the result in \cite{RG20} improved the runtime to $\poly\log\log n$,  Ghaffari, Harris, and Kuhn gave the state of the art randomized LLL algorithm in the \LOCAL model. On constant degree graphs its runtime was described by a tower function and lies between $\poly\log\log n$ and $2^{O(\sqrt{\log\log n})}$~  \cite{GHK18}. We are not aware of any works that explicitly studied LLL in the \CONGEST model before our paper.


\section{Models, LCLs \& Notation}
\label{sec:definitions}

Given a graph $G=(V,E)$ the hop distance in $G$ between two vertices $u,v\in V$ is denoted by $dist_G(u,v)$. 
For a vertex $v\in V$ and a subset $S\subseteq V$ we define $dist_G(v,S)=\min_{u\in S} \{dist(v,u)\}\in \mathbb{N}\cup \{\infty\}$. For two subsets $S,T\subseteq V$ we define $dist_G(S,T)=\min_{v\in T}dist_G(v,S)$.  For an integer $n$ we denote $[n]=\{0,\ldots,n-1\}$.

\subparagraph*{The \LOCAL and \CONGEST Model of distributed computing~\cite{linial92,peleg00}.} In both models the graph is abstracted as an $n$-node network $G=(V, E)$ with maximum degree at most $\Delta$. Communication happens in synchronous rounds. Per round, each node can send one message to each of its neighbors. At the end, each node has to know its own part of the output, e.g., its own color. In the \LOCAL model there is no bound on the message size and in the \CONGEST model messages can contain at most $O(\log n)$ bits. Usually, in both models nodes are equipped with $O(\log n)$ bit IDs (polynomial ID space) and initially, nodes know their own ID or their own color in an input coloring but are unaware of the IDs of their neighbors. Randomized algorithms do not use IDs (but they can create them from space $[n^c]$ for a constant $c$ and with a $1/n^c$ additive increase in the error probability), have a fixed runtime and need to be correct with probability strictly more than $1-1/n$ (Monte Carlo algorithm). Actually, algorithms are defined such that a parameter $n$ is provided to them, where $n$
represents an upper bound on the number of nodes of the graph and the algorithm's guarantees have to hold on any graph with at most $n$ nodes. 
For technical reasons in our gap results, we assume that the number of random bits used by each node is bounded by some finite function $h(n)$. We note that the function $h(n)$ can grow arbitrarily fast and that this assumption is made in previous works as well~\cite{FOCSchang2016a, Balliu2021}.

\medskip

\noindent \textbf{LCL definition \cite{naor95}.} An \emph{LCL problem $\Pi$} is a tuple $(\Sigma_{\mathrm{in}},\Sigma_{\mathrm{out}},F,r)$ satisfying the following.
\begin{itemize}
	\item Both $\Sigma_{\mathrm{in}}$ and $\Sigma_{\mathrm{out}}$ are constant-size sets of labels,
	\item the parameter $r$ is an arbitrary constant, called the \emph{checkability radius} of $\Pi$,
	\item $F$ is a finite set of pairs $(H = (V^H,E^H),v)$, where:
	\begin{itemize}
		\item $H$ is a graph, $v$ is a node of $H$, and the radius of $v$ in $H$ is at most $r$;
		\item Every pair $(v,e) \in V^H \times E^H$ is labeled with a label in  $\Sigma_{\mathrm{in}}$ and a label in $\Sigma_{\mathrm{out}}$.
	\end{itemize}
\end{itemize}
In an instance of an LCL problem $\Pi=(\Sigma_{\mathrm{in}},\Sigma_{\mathrm{out}},F,r)$ on a graph $G=(V,E)$ each node receives an \emph{input label} from $\Sigma_{\mathrm{in}}$. 
An algorithm solves $\Pi$ if, for each node $v\in V$, the radius-$r$ hop neighborhood of $v$ together with the input labels and computed outputs for the nodes lies in the set of \emph{feasible labelings $F$}. 



\section{Graph Decompositions}
\label{sec:graphDecompositions}
In this section we state our results on network decompositions and provide the respective definitions. At the end of the section we provide a short overview of the techniques that we use to prove the results. The formal proofs appear in \Cref{app:redblue,app:RG,app:levels}.

\subsection{Cluster Collections and Network Decompositions with Congestion}

Classically, one defines a \emph{(weak) $(c,d)$-network decomposition} as a coloring of the vertices of a graph $G$ with $c$ colors such that the connected components of each color class have weak diameter at most $d$. The  \emph{weak diameter} of a subset $\mathcal{C}\subseteq V$ is the maximum distance in $G$ that any two vertices of $\mathcal{C}$ have.   

We use the following definition that augments one color class of a network decomposition with a communication backbone for each cluster. A \emph{cluster $\mathcal{C}$} of a graph is a subset of nodes. 
A \emph{Steiner tree} is a rooted tree with nodes labeled as \emph{terminal} and \emph{nonterminal}. 
\begin{definition}[cluster collection]
	A collection of clusters of a graph $G=(V,E)$ consists of subsets of vertices $\mathcal{C}_1,\dots,\mathcal{C}_p\subseteq V$  and has Steiner radius $\beta$ and unique $b$-bit cluster identifiers if it comes with associated Steiner subtrees $T_1,\ldots,T_p$ of $G$ such that clusters are disjoint, i.e., $\mathcal{C}_i\cap \mathcal{C}_j=\emptyset$ for $i\neq j$ and for each $i \in 1, 2, \ldots, p$ we have
	\begin{enumerate}
		\item cluster $\mathcal{C}_i$ has a unique $b$-bit identifier $id_{\mathcal{C}_i}$, 
		\item the terminal nodes of  Steiner tree $T_i$ of cluster $\mathcal{C}_i$ are formed by $\mathcal{C}_i$ ($T_i$ might contain nodes $\notin \mathcal{C}_i$ as non terminal nodes),
		\item the edges of Steiner Tree $T_i$ are oriented towards a \emph{cluster leader} $\ell_{\mathcal{C}_i}$,
		\item Steiner tree $T_i$ has diameter at most $\beta$ (Steiner tree diameter). 
	\end{enumerate}
	The collection has \emph{congestion} $\kappa$ if each edge in $E$ is contained in at most $\kappa$ Steiner trees. 	
	When we compute a cluster collection in the \CONGEST model, we require that each node of a cluster knows the cluster identifier.
	Furthermore, we require that for each edge of $T_i$ that is incident to some $v\in V$, node $v$ knows $id_{\mathcal{C}_i}$ and additionally $v$ knows the direction of the edge towards the root $r_{\mathcal{C}_i}$.
\end{definition}
We now define our notion of a network decomposition with a communication backbone.

\begin{definition}[network decomposition with congestion]
	\label{def:nd}
	Let $k\geq 1$. 	A \emph{$(C,\beta)$-network decomposition} with cluster distance $k$ and \emph{congestion} $\kappa$ of a graph $G=(V,E)$ is a partition of $V$ into $C$ cluster collections with Steiner radius $\beta$ and congestion $\kappa$ such that any two clusters $\mathcal{C}\neq \mathcal{C}'$ in the same cluster collection have distance strictly more than $k$, i.e., $dist_G(\mathcal{C},\mathcal{C}')>k$. 
	
	The  $C$ cluster collections are also called the \emph{$C$ color classes} of the decomposition.
\end{definition}
Instead of a network decomposition with cluster distance $k$, we often just speak of a network decomposition of $G^k$. In all cases the Steiner trees are given by vertices and edges in $G$ and not by edges in $G^k$, which would corresponds to paths in $G$. The default value for $k$ is $1$ which we use whenever we do not specify its value. in this case the definition corresponds with the classic network decomposition.

\subsection{Network Decomposition and Ball Carving Algorithms}
At the core of our network decomposition algorithm is the following \emph{ball carving} result to form one color class of the decomposition, i.e., to form one cluster collection. The name ball carving stems from the original existential proof for network decompositions \cite{awerbuch89}. 
\begin{restatable}[ball carving, $k\geq 1$]{lemma}{theoremSlowBallCarving}
	\label{thm:ballCarvingSlow}
	Let $k\geq 1$ be an integer and $x \geq 1$ (both potentially functions of $n$). Then there is a deterministic distributed \CONGEST (bandwidth $b$ and $b$-bit identifiers) algorithm that, given a graph $G=(V,E)$ with at most $n$ nodes and a subset $S\subseteq V$, computes a cluster collection $\mathcal{C}_1,\ldots, \mathcal{C}_p\subseteq S$ with
	\begin{itemize}
		\item $|\mathcal{C}_1\cup \ldots \cup \mathcal{C}_p|\geq (1 - 1/x) \cdot |S|$,
		\item Steiner radius $\beta=O(k \cdot x\cdot\log^3 n)$ and congestion $\kappa=O(\log n \cdot \min\{k,x\cdot \log^2 n \})$,
		\item the pairwise distance in $G$ between $\mathcal{C}_i$ and $\mathcal{C}_j$ for $1\leq i\neq j\leq p$ is strictly more than $k$. 
	\end{itemize}
	The runtime  is $O(k\cdot x\cdot \log^4 n\cdot \logstar b) + O(k\cdot x^2\cdot \log^6 n\cdot \min\{k,x\cdot \log^2  n\})$ rounds. 
\end{restatable}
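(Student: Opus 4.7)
The plan is to adapt the Rozho\v{n}--Ghaffari ball-carving routine with three complementary modifications: (i) execute the ball growth in the power graph $G^k$, so that any two disjoint balls there are automatically at $G$-distance strictly greater than $k$; (ii) replace the usual constant shell-shrinkage threshold by the tighter threshold $1/x$, so that the aggregated set of discarded shell vertices accounts for at most a $1/x$ fraction of $S$; and (iii) drive the phase schedule by a precomputed $\poly\log n$-color coloring of $G^k$ rather than by raw identifiers, to obtain the claimed mild $\logstar b$ dependence on the identifier length $b$.

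First I would precompute a coloring of $G^k$ with $\poly\log n$ colors using a Linial-style reduction whose dependence on the ID space is $\logstar b$; simulating this procedure on $G$ pays a factor of $k$ per $G^k$-hop and contributes the additive $O(k\cdot x\cdot\log^4 n\cdot \logstar b)$ term in the runtime. Inside each phase, indexed by one of the $\poly\log n$ colors, the currently nominated leader grows its ball in $G^k$ one unit at a time and halts as soon as the most recent shell contains at most a $1/x$ fraction of the current ball. A geometric argument then bounds the number of growth units per ball by $O(x\log n)$, since otherwise the ball would exceed $n$ vertices. Summed over the $O(\log n)$ color-phases and scaled by the per-unit cost of $k$ hops in $G$, this yields the target Steiner radius $O(k\cdot x\cdot\log^3 n)$. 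The coverage guarantee follows directly: every vertex of $S$ either becomes a cluster interior vertex or lies in a halted shell, and by the shrinkage threshold all shells together account for at most $|S|/x$ vertices.

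The congestion bound $O(\log n\cdot\min\{k,\, x\log^2 n\})$ arises from two complementary arguments: each $G$-edge is used by at most $O(\log n)$ Steiner trees across distinct color-phases, and within a single phase each $G$-edge is traversed by either $O(k)$ concurrent $G^k$-paths (the trivial simulation bound from the fact that unit-length $G^k$-paths are at most $k$-hops in $G$) or by $O(x\log^2 n)$ paths (obtained by charging the aggregate ball-growth against the shrinkage potential). Simulating one round of $G^k$ in the \CONGEST model of $G$ then costs a factor equal to the current congestion, which combined with the $O(k\cdot x\cdot \log^3 n)$ total growth rounds produces the second summand $O(k\cdot x^2\cdot \log^6 n\cdot\min\{k, x\log^2 n\})$.

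The main obstacle I expect is orchestrating the simultaneous growth of many Steiner trees inside \CONGEST: each ball-growth step broadcasts $O(b)$-bit cluster identifiers along its tree, and bandwidth conflicts arise whenever many trees share edges. The crucial ingredient will be a pipelined routing of leader-to-frontier control messages that respects the congestion bound above and remains compatible with both the $G^k$-to-$G$ simulation and the coloring-based scheduling, while preserving the $(1-1/x)$ coverage guarantee. I anticipate the cleanest implementation grows balls one color class at a time, so that within a class the Steiner trees are sufficiently separated to avoid worst-case bandwidth contention, followed by a final merging step that assigns any unclassified-but-eligible vertex to the appropriate cluster.
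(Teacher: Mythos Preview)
Your proposal has a genuine gap at the very first step. You claim to ``precompute a coloring of $G^k$ with $\poly\log n$ colors using a Linial-style reduction whose dependence on the ID space is $\logstar b$.'' But Linial's procedure produces $O(\Delta^2)$ colors, and the maximum degree of $G^k$ can be as large as $n-1$ for general graphs. Obtaining a $\poly\log n$-coloring of $G^k$ in $\poly\log n$ rounds is essentially equivalent to computing a network decomposition --- which is exactly what this lemma is a building block for --- so your argument is circular. Without this coloring, your phase schedule collapses and you have no mechanism to bound the number of phases independently of the identifier length $b$.

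The paper's route is quite different. It follows the Rozho\v{n}--Ghaffari paradigm where \emph{all} vertices of $S$ start as singleton clusters and clusters interact simultaneously via a propose/accept mechanism, rather than the sequential Awerbuch-style ball growing you sketch. To remove the dependence on $b$, the paper does \emph{not} precompute any global coloring; instead, at the start of each of the $O(\log n)$ phases it computes a fresh \emph{two}-coloring (red/blue) of the current clusters that is balanced within every distance-$k$ connected component of the cluster graph. The $\logstar b$ term arises only inside this red/blue subroutine, specifically from an MIS computation on an auxiliary graph of \emph{constant} degree (the ``light'' clusters), where Linial is legitimately applicable. Red vertices then propose to nearby blue clusters; a blue cluster accepts if it grows by a $(1+\Theta(1/(x\log n)))$ factor and otherwise kills the proposers and stalls. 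The balanced coloring guarantees that each distance-$k$ connected component of clusters shrinks by a constant factor per phase, which is what yields the $O(\log n)$ phase bound and hence the claimed Steiner radius and runtime.
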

By using the aggregation tools for overlapping Steiner trees of \cite{GGR20} \Cref{thm:ballCarvingSlow}  is by a $\log n$ factor faster than the corresponding result in \cite{RG20}. 
The ball carving result immediately imply the following network decomposition results. 
\begin{theorem}[$O(\log n)$ colors]
	\label{thm:mainCongest}
	For any (potentially non constant) $k\geq1$, there is a deterministic \CONGEST algorithm with bandwidth $b$ that, given a graph $G$ with at most $n$ nodes and unique $b$-bit IDs from an exponential ID space, computes  a (weak) $(O(\log n), O(k \cdot \log^3 n))$-network decomposition with cluster distance $k$ and with congestion $O(\log^2 n \cdot \min \{ k, \log^2 n \})$ in $O(k\cdot \log^7 n\cdot \min\{k,\log^2 n\})$ rounds. 
\end{theorem}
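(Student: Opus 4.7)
The plan is to construct the decomposition one color class at a time by iteratively invoking \Cref{thm:ballCarvingSlow} with parameter $x := 2$, a standard peeling strategy. I set $S_0 := V$ and, at iteration $i \geq 1$, run the ball-carving algorithm on the pair $(G, S_{i-1})$ to obtain a cluster collection $\mathcal{C}^{(i)}_1,\ldots,\mathcal{C}^{(i)}_{p_i} \subseteq S_{i-1}$ that covers at least $|S_{i-1}|/2$ vertices and whose clusters have pairwise distance strictly greater than $k$ in $G$. I then set $S_i := S_{i-1} \setminus \bigcup_j \mathcal{C}^{(i)}_j$ and iterate. Since $|S_i| \leq |S_{i-1}|/2$, after $C := \lceil \log_2 n \rceil + 1 = O(\log n)$ iterations we have $S_C = \emptyset$, so the $C$ cluster collections form a partition of $V$. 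Each collection serves as one color class, and because distances are measured in the fixed host graph $G$, the within-class distance guarantee is inherited verbatim from the lemma, yielding a $(O(\log n), O(k\log^3 n))$-network decomposition with cluster distance $k$.

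The remaining parameters are obtained by combining over the $C$ iterations. With $x = 2$ the Steiner radius of a single iteration is $O(k\log^3 n)$, and taking the maximum across iterations does not change this bound. Since a single edge of $G$ may appear in Steiner trees produced in many iterations, the per-edge congestion must be summed across color classes: each iteration contributes congestion $O(\log n \cdot \min\{k,\log^2 n\})$, and $O(\log n)$ iterations give total congestion $O(\log^2 n \cdot \min\{k,\log^2 n\})$, matching the claim. The per-iteration runtime supplied by \Cref{thm:ballCarvingSlow} with $x = 2$ is $O(k\log^4 n \cdot \logstar b) + O(k\log^6 n \cdot \min\{k,\log^2 n\})$, and summing over the $O(\log n)$ iterations gives $O(k\log^5 n \cdot \logstar b) + O(k\log^7 n \cdot \min\{k,\log^2 n\})$. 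Under the exponential-ID-space hypothesis $\logstar b = O(\logstar n)$, so the first summand is dominated by the second and the overall round complexity collapses to $O(k\log^7 n \cdot \min\{k,\log^2 n\})$.

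The main technical effort is already discharged in \Cref{thm:ballCarvingSlow}, whose proof is deferred to the appendix and is the true obstacle: producing even a single color class requires simultaneously (i) cluster distance strictly greater than $k$, (ii) a $\logstar b$ rather than $\log b$ dependence on the identifier space, and (iii) a Steiner-tree backbone with controlled congestion so that later aggregation along these trees remains affordable. Once the lemma is in hand, the iterative peeling above is essentially mechanical; the only things to check are that the shrinking set $S_{i-1}$ can be maintained distributedly at no additional cost (each surviving node knows at the start of iteration $i$ whether it lies in $S_{i-1}$ merely by not having been clustered in a previous iteration), that the congestion is summed rather than maximised across color classes, and that the chosen constant $x = 2$ is small enough for the per-iteration overhead to compose without breaking the stated trade-off.
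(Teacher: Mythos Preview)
Your proposal is correct and follows exactly the same approach as the paper: set $x=2$, iterate \Cref{thm:ballCarvingSlow} for $O(\log n)$ rounds on the remaining unclustered vertices, and let each iteration define one color class. The paper's proof is a terse three-sentence version of what you wrote; your additional bookkeeping of the Steiner radius, congestion, and runtime across iterations (and the observation that the $\logstar b$ term is absorbed under the exponential-ID hypothesis) simply fills in the details the paper leaves implicit.
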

\begin{proof}
	Set $x=2$ and apply \Cref{thm:ballCarvingSlow} for $\log_2 n$ iterations, always with the set of nodes that have not been added to a cluster in one of the previous iterations.
Notice that the bandwidth demand by \Cref{thm:ballCarvingSlow} is satisfied since the IDs have at most $b$ bits.
Each iteration forms a separate color class and as the number of remaining vertices is reduced by a factor $2$ in each iteration each vertex is contained in one cluster at the end. 
\end{proof}

We also analyze a more involved ball carving algorithm of \cite{GGR20} for different parameters in order to obtain decompositions with fewer colors.  Then \Cref{thm:ballCarvingSlow,thm:ballCarvingFast} imply the following result for network decompositions with few colors. 

\begin{restatable}[faster ball carving, $k=1$]{lemma}{theoremFastBallCarving}
	\label{thm:ballCarvingFast}
	Let $x \geq 1$ (potentially a function of $n$). 	Then there is a deterministic distributed \CONGEST (bandwidth $b$ and $b$-bit identifiers) algorithm that, given a graph $G=(V,E)$ with at most $n$ nodes and a subset $S\subseteq V$, computes a cluster collection $\mathcal{C}_1,\ldots, \mathcal{C}_p\subseteq S$ with
	\begin{itemize}
		\item $|\mathcal{C}_1\cup \ldots \cup \mathcal{C}_p|\geq (1 - 1/x) \cdot |S|$,
		\item Steiner radius $\beta=O(x\cdot\log^2 n)$ and congestion $\kappa=O(\log n)$,
		\item the pairwise distance in $G$ between $\mathcal{C}_i$ and $\mathcal{C}_j$ for $1\leq i\neq j\leq p$ is strictly more than $1$. 
	\end{itemize}
	The runtime of the algorithm is	$O(x^2 \log^4n)$ rounds.
\end{restatable}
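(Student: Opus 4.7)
The plan is to instantiate the Rozho\v{n}--Ghaffari bit-by-bit ball carving template, with the low-congestion Steiner-tree aggregation primitive of \cite{GGR20} used inside each ball-growing step. At initialization every node of $S$ is its own singleton cluster, with itself as both Steiner tree and leader. The procedure then sweeps through the $O(\log n)$ bits of the identifiers from most to least significant, maintaining a set of \emph{live} clusters; a cluster may be merged into an adjacent one (keeping its nodes) or killed (removing its nodes from the output collection). At iteration $i$, live clusters sharing the top $i-1$ bits form a ``bucket'' and split into a \emph{blue} class (bit $i = 0$) and a \emph{red} class (bit $i = 1$). Inside each bucket we run $r := \Theta(x \log n)$ levels of ball growing: at each level, blue clusters try to absorb red clusters that lie one further hop away, attaching the red Steiner tree to a chosen blue one by a single bridge edge, with ties broken lexicographically by cluster ID. A red cluster still unabsorbed after the $r$ levels is killed.

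Two invariants drive correctness. First, any two surviving live clusters are non-adjacent in $G$: this holds initially, and the merge rule only fuses clusters across a single boundary edge, while kills only remove nodes. After all $O(\log n)$ bits, each live cluster corresponds to a unique identifier, so distinct surviving clusters have distance strictly more than $1$, yielding the required cluster separation for $k = 1$. Second, by a standard telescoping argument on the $r$ ball-growing levels, only an $O(1/(x \log n))$ fraction of live nodes is killed per iteration; summing over $O(\log n)$ iterations bounds the total loss by $1/x$, giving $|\bigcup_i \mathcal{C}_i| \geq (1 - 1/x)|S|$. The Steiner radius grows by at most $r$ per iteration, yielding the claimed bound $O(x \log^2 n)$.

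The main obstacle, and the reason this variant improves on \Cref{thm:ballCarvingSlow}, is simultaneously (i) keeping the output congestion at $O(\log n)$ and (ii) executing each ball-growing level in time proportional to the current Steiner diameter rather than to the number of clusters that share a Steiner edge. For (i), merges only introduce one bridge edge per pair, and since any given edge can be touched by at most one merge per iteration and there are $O(\log n)$ iterations, each edge ends up in at most $O(\log n)$ final Steiner trees. For (ii), the naive implementation, in which each red boundary node broadcasts its cluster identifier to adjacent blue clusters and a chosen absorbee is aggregated up the blue Steiner tree, overloads individual tree edges because many clusters can share the edge as non-terminal node. The GGR aggregation primitive handles these overlapping Steiner trees by pipelining the required broadcast/convergecast so that each tree edge transmits only $O(\log n)$ bits per ball-growing level; I expect the delicate part of the formal proof to be verifying that the scheduling of this primitive across concurrent buckets does not produce inconsistent merge decisions, which is handled exactly as in \cite{GGR20} with the blue/red label serving as the per-bucket filter. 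Granting this, each level costs $O(x \log^2 n)$ rounds, and multiplying by $r = O(x \log n)$ levels per iteration and $O(\log n)$ iterations gives the claimed runtime $O(x^2 \log^4 n)$.
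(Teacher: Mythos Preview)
Your proposal has two real gaps.

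First, sweeping bit-by-bit through the identifiers makes the number of phases equal to $b$, not $O(\log n)$. The lemma is stated for arbitrary $b$-bit identifiers with bandwidth $b$, and its intended use in the paper is precisely the regime $b\gg\log n$ (running on components of size $N$ that still carry identifiers of length $\Theta(\log n_{\mathrm{orig}})\approx N$). With $b$ phases your Steiner radius, congestion bound, and runtime all scale with $b$, and you no longer get $O(x^2\log^4 n)$. The paper avoids this entirely: it does not read identifier bits but instead recomputes a balanced red/blue colouring of the cluster graph in each phase via the procedure of \cite{GGR20}, so the number of phases is $O(\log n)$ irrespective of $b$ (with only an additive $\logstar b$ hidden in the colouring step).

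Second, even granting $b=O(\log n)$, the Rozho\v{n}--Ghaffari accept/reject template you sketch is what the paper uses for the \emph{slow} ball carving (\Cref{thm:ballCarvingSlow}), not for this lemma. To kill only an $O(1/(x\log n))$ fraction per phase the acceptance threshold must be $\Theta(x\log n)$, but then a blue cluster can keep growing for $\Theta(x\log^2 n)$ steps before it is forced to stall, since $(1+1/(x\log n))^{s}>n$ requires $s=\Theta(x\log^2 n)$. Hence each phase needs $\Theta(x\log^2 n)$ steps and $\beta$ comes out $O(x\log^3 n)$; your choice $r=\Theta(x\log n)$ is too few steps for the separation argument to go through, and with the correct step count the runtime becomes $O(x^2\log^6 n)$. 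The saving of a $\log n$ factor in $\beta$ (and $\log^2 n$ in the runtime) is exactly what \Cref{thm:ballCarvingFast} is about, and the paper obtains it through the \cite{GGR20} \emph{levels-and-tokens} mechanism: each cluster carries a level and a token counter, progress is asynchronous (in every phase a cluster either doubles its tokens or levels up while retaining at least half of them), and only $\steps=O(x\log n)$ steps per phase suffice because a cluster need not stall in every phase. That mechanism, together with its token-potential analysis, is the substantive content of the proof and is absent from your plan.
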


\begin{theorem}[few colors]	
	\label{thm:constantColors}
	For any  $\lambda\leq \log n$ there is a deterministic \CONGEST algorithm with bandwidth $b$ that, given a graph $G$ with at most $n$ nodes and unique $b$-bit IDs from an exponential ID space, computes  a weak $(\lambda, n^{1/\lambda} \log^2 n)$-network decomposition of $G$ in $O(\lambda \cdot n^{2/\lambda}\cdot \log^4 n)$ rounds with congestion $\kappa=O(\log n)$.  
	
	For any (possibly non constant) $k\geq 1$ and any  $\lambda\leq \log n$ there is a deterministic \CONGEST algorithm that, given a graph $G$ with at most $n$ nodes and unique IDs from an exponential ID space, computes  a weak $(\lambda, k\cdot n^{1/\lambda} \log^3 n)$-network decomposition of $G^k$ in $O(k\cdot n^{2/\lambda}\cdot \log^6 n \cdot \min\{k,x\log^2 n\})$ rounds.  
\end{theorem}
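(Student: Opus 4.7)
The plan is to build the $\lambda$ color classes of the decomposition one at a time, in direct analogy with the proof of \Cref{thm:mainCongest}: at each iteration, invoke the appropriate ball carving lemma on the subset $S\subseteq V$ of vertices that have not yet been placed in any cluster, and take the resulting cluster collection as the next color class. Since each application covers at least a $(1-1/x)$ fraction of $S$, choosing $x=\Theta(n^{1/\lambda})$ with enough constant-factor slack (say $x=2\lceil n^{1/\lambda}\rceil$) ensures that after $\lambda$ iterations at most $n/x^{\lambda}<1$ vertices remain, so every vertex ends up in some cluster.

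For the first statement ($k=1$), each iteration calls \Cref{thm:ballCarvingFast} with this choice of $x$. The resulting color class has Steiner radius $O(x\log^2 n)=O(n^{1/\lambda}\log^2 n)$, congestion $O(\log n)$, and is produced in $O(x^2\log^4 n)=O(n^{2/\lambda}\log^4 n)$ rounds. Because different color classes are constructed sequentially, the congestion bound is per iteration and hence is the final bound; summing the runtimes over the $\lambda$ iterations yields the claimed $O(\lambda\cdot n^{2/\lambda}\log^4 n)$ rounds and the stated $(\lambda, n^{1/\lambda}\log^2 n)$-network decomposition.

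For the second statement (general $k$), the plan is identical but uses \Cref{thm:ballCarvingSlow} in place of \Cref{thm:ballCarvingFast}. Within each color class the ball carving guarantee is that distinct clusters are at distance strictly more than $k$ in $G$, i.e., they are non-adjacent in $G^k$, so the $\lambda$ color classes do form a network decomposition of $G^k$. The per-iteration Steiner radius is $O(kx\log^3 n)=O(k\cdot n^{1/\lambda}\log^3 n)$, and the per-iteration runtime is $O(kx\log^4 n\cdot\logstar b)+O(kx^2\log^6 n\cdot\min\{k,x\log^2 n\})$, which is dominated by the second term. Summing this dominant term over the $\lambda\leq \log n$ iterations and absorbing the $\lambda$ factor into a $\log n$ factor gives the stated round complexity.

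The only real bookkeeping obstacle is verifying that exactly $\lambda$ iterations suffice to exhaust $V$, which is handled by the constant-factor slack in the choice of $x$; had we been sloppy, a single leftover vertex could simply be added as a trivial singleton cluster in the final color class with a trivial Steiner tree, without affecting any of the bounds. Beyond this, the proof is a direct iteration of the ball carving subroutines stated in \Cref{thm:ballCarvingFast,thm:ballCarvingSlow}, with no additional algorithmic ingredient required.
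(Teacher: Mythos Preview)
Your proposal is correct and follows essentially the same approach as the paper: iterate the ball carving lemma $\lambda$ times with $x=\Theta(n^{1/\lambda})$, using \Cref{thm:ballCarvingFast} for the first statement and \Cref{thm:ballCarvingSlow} for the second. The paper simply sets $x=n^{1/\lambda}$ without the constant-factor slack you introduce, but your extra bookkeeping (and the fallback of a singleton cluster) is harmless and arguably more careful.
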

\begin{proof}
For the first result apply \Cref{thm:ballCarvingFast} with $x=n^{1/\lambda}$ and for $\lambda$ iterations, always with the set of nodes that have not been added to a cluster in one of the previous iterations.
Notice that the bandwidth demand by \Cref{thm:ballCarvingFast} is satisfied since the IDs have at most $b$ bits. 
For the second result do the same with \Cref{thm:ballCarvingSlow}. In both cases the clusters created in each iteration form a separate color class. The number of remaining vertices decreases by a factor $x$ in each iteration and after $\lambda$ iterations all vertices have been clustered. 
\end{proof}

\Cref{thm:mainCongest,thm:constantColors} also work with more general ID spaces, as long as the IDs fit into a \CONGEST message. To simplify the statements and because exponential IDs are (currently)  the main application we do not explicitly state their mild dependence on the size of the ID space $\mathcal{S}$ in theorems. It can be quantified by $O(x\cdot \poly\log n \cdot \logstar |\mathcal{S}|)$ where the polylogarithmic terms are strictly dominated by the ones in the theorems.

\subparagraph*{Summary of Our Ball Carving Contributions.} Both, the ball carving results in \cite{RG20} and \cite{GGR20} begin with each vertex of the to be partitioned set $S$ (see \Cref{thm:ballCarvingSlow,thm:ballCarvingFast}) being its own cluster inheriting its node ID as the cluster ID. Then they implement a distributed \emph{ball growing} approach in which vertices leave their cluster, join other clusters, or become dead, i.e., are disregarded; even whole clusters can dissolve, e.g., if all their vertices join a different cluster. The goal is to ensure that at the end no two alive clusters are neighboring, that is, in distance $k$, that the fraction of vertices of $S$ that are declared dead is small and that the diameter of the clusters does not become too large. In fact, as vertices can leave clusters one only obtains guarantees on the weak diameter of clusters. To this end the algorithm of \cite{RG20} iterates through the bits of the cluster IDs and in phase $i$, vertices change their clusters or become dead such that at the end of the $i$-th phase no two neighboring clusters have the same $i$-th bit in their cluster ID. If cluster IDs have $b$ bits, then after the $b$-th phase each remaining cluster is not connected to any other cluster. Thus, the runtime crucially depends on the size of the cluster IDs. The main change in \cite{GGR20} to remove this ID space dependence is to replace the bits ($0$ or $1$) used in phase $i$ with a coloring of the clusters with two colors, red and blue. A new coloring is computed in every phase and the crucial property of the coloring that ensures progress towards the separation of the clusters is that each connected component of clusters has roughly the same number of red and blue clusters. Thus, intuitively,  at the end of a phase red and blue clusters are separated and the sizes of all connected components decrease. 

In our algorithms we follow the same high level approach but extend the techniques (for identifier ``independence") to work with a larger separation between the clusters, and also for the computation of decompositions with fewer colors. 
A crucial difficulty is that we cannot quickly disregard all dead vertices and reason that each connected components of clusters in $G$ can be treated independently. 
It might even be that two clusters are connected by a path with $\leq k$ hops and the path might contain dead vertices. Instead, we consider connected components in $G^k$ and always keep all vertices of $G$ in mind. As two clusters that are adjacent in such a component might not be adjacent in the original graph we ensure that no congestion appears \emph{in between} the clusters. 
Careful algorithm design and reasoning is needed when computing a balanced coloring of clusters in such components. 

To obtain the network decompositions with fewer colors, e.g., with $\lambda=10$ colors,  we analyze our algorithm for a suitable choice of parameters and show that all congestion parameters are not affected by the choice of $\lambda$. 


\section{Distributed Range Bounded LLL and its Implications}
\label{sec:LLL}
The objective of this section is to prove the following theorem. 

\theoremBoundedLLL*


\subparagraph*{Distributed \lovasz Local Lemma (LLL).} In a distributed \lovasz Local Lemma instance we are given a set of independent random variables $\mathcal{V}$ and a family $\mathcal{X}$ of (bad) events $\mathcal{E}_1,\ldots,\mathcal{E}_n$  on these variables. Each bad event $\mathcal{E}_i\in \mathcal{X}$ depends on some subset $vbl(\mathcal{E}_i)\subseteq \mathcal{V}$ of the variables. Define the dependency graph $H_{\mathcal{X}}= (\mathcal{X},\{(\mathcal{E},\mathcal{E}') \mid vbl(\mathcal{E}) \cap vbl(\mathcal{E}')\neq \emptyset \})$ that connects any two events
which share at least one variable. Let $\Delta_H$ be the maximum degree in this graph, i.e., each event $\mathcal{E}\in\mathcal{X}$ shares variables with at most $d=\Delta_H$ other events $\mathcal{E}'\in \mathcal{X}$. The \lovasz Local Lemma \cite{LLL73}  states that $Pr(\bigcap_{\mathcal{E}\in \mathcal{X}} \mathcal{E})>0$ if $epd<1$. In the \emph{distributed LLL problem} each bad event and each variable  is assigned to a vertex of the communication network such that all variables that influence a bad event $\mathcal{E}$ that is assigned to a vertex $v$ are either assigned to $v$ or a neighbor of $v$.
The objective is to compute an assignment for the variables such that no bad event occurs.  At the end of the computation each vertex $v$ has to know the values of all variables that influence one of its bad events (by definition these variables are assigned to $v$ or one of its neighbors).  In all  results on distributed LLL in the \LOCAL model that are stated below  the communication network is identical to the dependency graph $H$. Most result in the distributed setting require stronger LLL criteria.

\begin{algorithm} \caption{The algorithm from \cite{SuLLL2017} iteratively resamples all variables of local ID minima of violated events. IDs can be assigned in an adversarial manner. The runtime is with high probability $O(\log_{epd^2} n)$ rounds under LLL criterion $epd^2<1$.} \label{alg:CPSLLL}
	Initialize a random assignment to the variables \\
	Let $\mathcal{F}$ be the set of bad events under the current variable assignment\\
	\While{$\mathcal{F} \neq  \emptyset$}{
		Let $I = \big\{A \in\mathcal{F} \mid  ID(A) = \min\{ID(B) | B \in  N^+_{\mathcal{F}}(A)\}\big\}$ \\
		Resample $vbl(I) =  \cup_{A\in I}vbl(A)$
	}
\end{algorithm}

Despite its simplicity---the algorithm simply iteratively resamples local ID minima of \emph{violated events}, i.e., bad events that hold under the current variable assignment ---Algorithm~\ref{alg:CPSLLL} results in the following theorem. 
\begin{theorem}[\cite{SuLLL2017}, Algorithm~\ref{alg:CPSLLL}]
	Given an LLL instance with condition $epd^2 < 1$, the \LOCAL Algorithm~\ref{alg:CPSLLL} run for  $O(\log_{epd^2} n)$ rounds and has error probability $<1/n$ on any LLL instance with at most $n$ bad events. 
\end{theorem}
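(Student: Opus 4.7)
The plan is to adapt the witness-tree analysis of Moser--Tardos to the parallel resampling procedure of Algorithm~\ref{alg:CPSLLL}. The core idea is that if a bad event $A \in \mathcal{X}$ is still violated after $T$ rounds, one can extract from the execution a witness tree certifying that roughly $T$ ``unlucky'' resamplings took place, and a union bound over witness trees controls the failure probability.

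First I would formalize the witness tree. For an event $A$ still in $\mathcal{F}$ after round $T$, I would construct a rooted tree $\tau_A$ of depth at most $T$ by a backwards procedure: the root is $A$, and for each node $B$ at level $t$ I scan the rounds $t' < t$ in reverse order and attach as children the events $C \in N^+_{H_{\mathcal{X}}}(B)$ that lay in the resample set $I$ at round $t'$, taking the latest such $t'$ for each $C$. As in Moser--Tardos, this construction ensures $\tau_A$ has at least $\Omega(T)$ nodes, that labels of siblings at a common level are distinct, and that the tree uniquely encodes the relevant portion of the execution's history.

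Second I would prove the classical witness-tree probability bound: for any fixed labeled tree $\tau$, the probability that the execution produces $\tau$ is at most $\prod_{v \in \tau} p_v \leq p^{|\tau|}$. The argument is the standard one---the resamplings corresponding to nodes of $\tau$ use disjoint, independent batches of fresh random bits, and at each node $v$ the bad event $\mathcal{E}_v$ must hold on the bits sampled immediately before $v$ is processed, contributing a factor of at most $p_v$.

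The hard part will be the third ingredient: counting the distinct witness trees of size $s$ rooted at $A$. In the sequential Moser--Tardos analysis this count is at most $(ed)^s$, yielding convergence under $epd<1$; in the parallel setting of Algorithm~\ref{alg:CPSLLL}, because two events in the resample set $I$ can lie at distance $2$ in $H_{\mathcal{X}}$, the combinatorial encoding of the tree must also record ``why'' a given event was the local ID minimum at its step, which inflates the branching factor. A careful Galton--Watson style enumeration then yields a bound of $(ed^2)^s$ trees of size $s$, the source of the $d^2$ in the criterion. Combining the probability and counting bounds,
\[
\Pr[A \in \mathcal{F}\text{ after } T \text{ rounds}] \;\leq\; \sum_{s \geq \Omega(T)} (epd^2)^s \;\leq\; (epd^2)^{\Omega(T)},
\]
using $epd^2 < 1$. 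Choosing $T = \Theta(\log_{1/(epd^2)} n)$ makes this at most $n^{-2}$ per event, and a union bound over the $\leq n$ bad events yields the claimed $<1/n$ error.
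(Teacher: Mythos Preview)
The paper does not prove this theorem; it is quoted from \cite{SuLLL2017} and used as a black box, so there is no in-paper proof to compare your argument against.

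On its own merits, your sketch has the right architecture (witness structure, per-structure probability $\le p^{|\tau|}$, enumeration, union bound), but the step you treat as routine is exactly the one that fails. You assert that ``as in Moser--Tardos'' the tree $\tau_A$ has at least $\Omega(T)$ nodes. In the parallel Moser--Tardos algorithm this holds because the resample set is a \emph{maximal} independent set of $\mathcal{F}$: any violated $A$ is guaranteed to have some event in $N^+(A)$ resampled in every round, which forces the witness tree to gain a node per round. In Algorithm~\ref{alg:CPSLLL} the set $I$ of local ID minima is \emph{not} maximal. A violated event $A$ can have its smallest-ID violated neighbour $B$ itself blocked from $I$ by a still-smaller-ID violated neighbour of $B$, and so on down a chain; then nothing in $N^+(A)$ lies in $I$ that round, and under your construction (children restricted to $N^+_{H_{\mathcal{X}}}(B)$) the tree acquires no node. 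This can persist for many consecutive rounds, so $|\tau_A|$ need not grow with $T$, and your bound $\sum_{s\ge \Omega(T)}(epd^2)^s$ is unjustified---summing over all tree sizes instead gives only a constant, not something that decays with $T$. This non-maximality of $I$ is precisely the obstacle that forces the weaker criterion $epd^2<1$ in \cite{SuLLL2017}, and it is handled there by a different witness object than the one you describe; your informal explanation of the extra factor of $d$ (that one must ``record why an event was the local ID minimum'') is also inconsistent with your own construction, since a tree whose parent--child edges are $H$-edges would be enumerated by $(ed)^s$, not $(ed^2)^s$.
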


Next, we define the notion of a range bounded LLL instances. 

\begin{definition}[Range bounded \lovasz Local Lemma]
	An instance $(\mathcal{V}, \mathcal{X})$ of the \lovasz Local Lemma with bad events $\mathcal{E}_1,\ldots,\mathcal{E}_n$ and dependency graph $H$ is \emph{range bounded} if 
	$|vbl(\mathcal{E}_i)|= \poly \Delta_H$ for each $1\leq i\leq n$ and the value of each random variable $x\in \mathcal{V}$ can be expressed by at most $O(\log \Delta_H)$ bits. 
\end{definition}
Note that most applications of the \lovasz Local Lemma in the distributed form are range bounded instances. Algorithm~\ref{alg:CPSLLL}  can be executed with the same asymptotic runtime for range bounded LLL instances and uses few random bits while doing so.  

\begin{lemma}
	\label{lem:LLLboundedRangeRandomized}
	For a range bounded LLL instance with condition $epd^2<1$ with constant dependency degree there is a randomized \CONGEST (constant bandwidth) algorithm  that runs in $O(\log_{epd^2} n)$ rounds and has error probability $<1/n$ on any (dependency) graph with at most $n$ nodes. Furthermore, in the algorithm each node only requires access to $O(\log n)$ random values from a bounded range. The unique IDs can be replaced by an acyclic orientation of the edges of the graph. 
\end{lemma}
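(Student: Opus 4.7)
The plan is to run Algorithm~\ref{alg:CPSLLL} verbatim while implementing each of its iterations in $O(1)$ \CONGEST rounds. The range-boundedness assumption is the key enabler: with constant dependency degree $d=O(1)$, each bad event depends on $\poly d=O(1)$ variables, each of which takes $O(\log d)=O(1)$ bits. Hence a node can broadcast to its $O(1)$ dependency-graph neighbors all the values of the variables it owns in $O(1)$ rounds using only $O(1)$ bits per edge per round. After this exchange, every node can locally determine which of its bad events are violated under the current assignment, which is all the information needed to compute the resampling set $I$.

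The next ingredient is to form $I$ without ever communicating $O(\log n)$-bit IDs. I would precompute, in a preprocessing step, an acyclic orientation of the dependency graph: if unique IDs are present, orient every edge from the smaller to the larger ID; otherwise use the orientation supplied as input. A violated event is then placed in $I$ if and only if it is a sink in the subgraph induced on the currently violated events. Since the orientation is acyclic, sinks always exist whenever $\mathcal{F}\neq\emptyset$, and any two sinks must be non-adjacent (otherwise the edge between them would need to be oriented both ways). The resulting $I$ is therefore a non-empty independent set of violated events, exactly as in the original algorithm, and each node only needs to send a single ``I am currently violated'' bit per edge per iteration to decide whether it is a sink. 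Resampling the $O(1)$ variables owned by a selected node consumes $O(1)$ random values of bounded range; pooling these over $O(\log_{epd^2} n) = O(\log n)$ iterations gives the claimed $O(\log n)$ random values from a bounded range per node.

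The main obstacle is to verify that the CPS runtime analysis still yields the $O(\log_{epd^2} n)$-round bound with error $<1/n$ when tie-breaking by ID is replaced by tie-breaking by acyclic orientation. The Moser--Tardos / CPS witness-tree argument uses IDs only to impose a consistent local order on the conflict graph so that a canonical witness tree is well-defined; any acyclic orientation induces such a (topological) order, and the ``pick all sinks in the violated subgraph'' rule coincides with the ``pick all local ID-minima'' rule once IDs are chosen consistently with a topological sort of the orientation. Hence the witness-tree construction and the Galton--Watson coupling apply verbatim to conclude the runtime and error bounds, and since no other step of the algorithm uses IDs, the acyclic orientation is a complete replacement.
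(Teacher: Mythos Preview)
Your proposal is correct and takes essentially the same approach as the paper: implement Algorithm~\ref{alg:CPSLLL} in \CONGEST by using the range-bounded assumption to fit all variable values into $O(1)$ bits per round, and replace ID-based tie-breaking by an acyclic orientation. The paper's proof is terser---it simply asserts that the algorithm is oblivious to ID values---whereas you add the witness-tree justification; the minor sink/source flip in your orientation convention (your sinks are local maxima rather than the paper's local minima) is immaterial by symmetry.
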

\begin{proof}
	Let $\Delta$ be the degree of the dependency graph. 
	 In Algorithm~\ref{alg:CPSLLL} local minima with regard to an arbitrary ID assignment resample their variables. The algorithm is oblivious to the values of the IDs and the IDs can be replaced with an arbitrary acyclic orientation on the edges. To check which events need to be resampled nodes only need to learn the currently sampled values of adjacent nodes (and the orientation of the edges). As the LLL is range bounded the values of variables fit into $O(1)$ bits and can be communicated in one round. In each iteration of the algorithm, each vertex resamples only its own random variables (or none). To resample one variable it requires a random value from the bounded range $\poly \Delta=O(1)$, and in each iteration it resamples at most $\poly \Delta=O(1)$ variables.
\end{proof}

The currently fastest randomized algorithm for LLL on bounded degree graphs with polynomial criterion in the \LOCAL model is based on two main ingredients.  The first ingredient is a deterministic $\poly\log n$-round LLL algorithm obtained via a derandomization (cf. \cite{GKM17,GHK18}) of the $O(\log^2 n)$-round randomized algorithm \cite{MoserTardos10} using the breakthrough efficient network decomposition algorithm \cite{RG20}. The second ingredient is the (randomized) shattering framework for LLL instances by \cite{FGLLL17}, which \emph{shatters} the graph into small unsolved components of logarithmic size. Then, these components are solved independently and in parallel with the deterministic algorithm. 

We follow the same general approach of shattering the graph via the methods of \cite{FGLLL17}, but need to be more careful to obtain a bandwidth efficient algorithm when dealing with the small components. 

We begin with a lemma that describes how fast we can gather information in a cluster leader with limited bandwidth. Its proof uses standard pipelining techniques (see e.g. \cite[Chapter 3]{peleg00} and fills each $b$-bit messages with as much information as possible.
\begin{lemma}[Token learning]
	\label{lem:learningTokens}
	Let $G$ be a communication graph on $n$ vertices in which each node can send $b$ bits per round on each edge. Assume a cluster collection with Steiner radius $\beta$ and congestion $\kappa$ in which each cluster $\mathcal{C}$ is of size at most $N$ and each vertex holds at most $x$ bits of information. Then, in parallel each cluster leader $\ell_{\mathcal{C}}$ can learn the information of each vertex of $\mathcal{C}$ in $O\big(\kappa\cdot (\beta+N\cdot x/b)\big)$ rounds.
	In the same runtime the leader $\ell_{\mathcal{C}}$ can disseminate $x$ bits of distinct information to each vertex of $\mathcal{C}$. 
\end{lemma}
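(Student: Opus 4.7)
The approach is classical pipelined convergecast on each rooted Steiner tree $T_{\mathcal{C}}$, together with a time-division schedule that resolves the $\kappa$-fold congestion on shared edges; dissemination is handled by reversing the roles of leader and terminals.

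First, fix a cluster $\mathcal{C}$ and momentarily ignore congestion. The tree $T_{\mathcal{C}}$ is oriented toward $\ell_{\mathcal{C}}$ and, since its diameter is at most $\beta$, has depth at most $\beta$ from the leader. The $|\mathcal{C}|\le N$ terminals hold at most $Nx$ bits in total, which I cut into $M := \lceil Nx/b\rceil$ messages of $b$ bits each. Since $b=\Omega(\log n)\ge \log N$, a short originator tag fits alongside the payload in the same message (absorbed in the $O(\cdot)$). Each non-root node of $T_{\mathcal{C}}$ maintains a FIFO buffer containing its own messages and those received from its children, and forwards one buffered message toward its parent per round. The textbook pipelined-convergecast analysis (cf.~\cite[Ch.~3]{peleg00}) then bounds the arrival time of the last message at $\ell_{\mathcal{C}}$ by $\beta + M - 1$ rounds, i.e.\ $O(\beta + Nx/b)$ for a single tree in isolation.

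Second, to handle congestion I would time-share each edge among the (at most $\kappa$) Steiner trees containing it. By the definition of a cluster collection, both endpoints of every tree edge $e$ know the identifier of the cluster whose Steiner tree contains $e$; consequently, each endpoint can locally enumerate the identifiers of all trees through $e$, sort them, and assign each tree a slot in $\{0,\ldots,\kappa-1\}$. Since the two endpoints apply the same deterministic rule to the same multiset of identifiers, they agree without any communication. A tree then uses $e$ only in rounds $r$ with $r\equiv \mathrm{slot}\pmod{\kappa}$, so no edge is ever used by two trees in the same round and each tree obtains one usable round out of every $\kappa$. Simulating the single-tree convergecast of the previous paragraph under this schedule, in parallel across all clusters, multiplies the runtime by $\kappa$ and yields the stated $O(\kappa(\beta + Nx/b))$ bound.

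Dissemination is symmetric: the leader partitions its $Nx$ bits into $\lceil Nx/b\rceil$ chunks tagged by destination and injects them into $T_{\mathcal{C}}$; each non-leaf forwards each chunk toward the unique child lying on the tree-path to the tagged destination (routing is oblivious on a tree). The same pipelining estimate and the same congestion schedule give the same $O(\kappa(\beta + Nx/b))$ bound. The main potential obstacle is the congestion scheduling across overlapping trees, but the two observations above resolve it cleanly: the payload-plus-tag fits into a single $b$-bit message because $b=\Omega(\log n)$, and the per-edge slot assignment requires no communication thanks to the identifier-based tie-break. Everything else is the classical pipelining argument.
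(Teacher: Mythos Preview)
Your proposal is correct and matches the paper's approach: the paper gives no detailed proof and only states that the lemma follows from ``standard pipelining techniques (see e.g.\ \cite[Chapter~3]{peleg00})'' together with packing each $b$-bit message full, which is exactly the pipelined convergecast plus time-division over the $\kappa$ overlapping trees that you spell out. Your write-up is simply a fleshed-out version of that one-line justification.
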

Next, we prove our core derandomization result that we use to obtain efficient deterministic algorithms for the small components that arise in the post-shattering phase. It might be of independent interest. 
\begin{lemma}[Derandomization]
	\label{lem:derandomization}
	Consider an LCL problem $P$, possibly with promises on the inputs. Assume a $T(n)$-round randomized \CONGEST (bandwidth $b=\Theta(\log n)$) algorithm $\mathcal{A}$ for $P$ with error probability $<1/n$ on any graph with at most $n$ nodes that uses at most $O(\log n)$  random bits per node. 
	
	Then for any graph on at most $N$ nodes there is a deterministic $T(N)\cdot \poly\log N$ round \CONGEST (with bandwidth $b=\Theta(N)$) algorithm $\mathcal{B}$ to solve $P$ under the same promises, even if the ID space is exponential in $N$.
\end{lemma}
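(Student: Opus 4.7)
The plan is to combine a network decomposition of $G^{k}$ with large $k$ and the method of conditional expectations, processing one color class at a time so that all same-color clusters fix their random bits in parallel.

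First I would invoke \Cref{thm:mainCongest} with cluster distance $k=2(T(N)+r)+1$, where $r$ is the checkability radius of $P$, to obtain in $T(N)\poly\log N$ rounds a weak $(O(\log N),\,O(k\log^3 N))$-network decomposition of $G^{k}$ with congestion $\poly\log N$. Its identifier-independence is essential because the instance inherits an exponential (in $N$) ID space, so its $\Theta(N)$-bit IDs fit into single messages only under the enlarged bandwidth $b=\Theta(N)$. Writing $\tau:=T(N)+r$, any two same-color clusters are more than $2\tau$ apart in $G$. For each node $v$ let $X_v$ indicate that $\mathcal{A}$'s output violates $P$ inside $N_r(v)$; then $X_v$ depends only on the random bits of nodes in $N_\tau(v)$, and since $\Pr[X_v=1]\leq \Pr[\mathcal{A}\text{ fails}]<1/N$, linearity yields $\Phi:=\mathbb{E}[\sum_v X_v]<1$.

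I would then iterate through the $O(\log N)$ color classes. In color class $c$ every cluster leader in parallel fixes its cluster's random bits by the classical sequential method of conditional expectations: iterate over the bits and at each step pick the value that does not increase $\mathbb{E}[\sum_v X_v\mid\text{bits fixed so far}]$. Because same-color clusters are at distance $>2\tau$, the sets $N_\tau(\mathcal{C})$ and $N_\tau(\mathcal{C}')$ are disjoint for any two same-color clusters $\mathcal{C}\neq\mathcal{C}'$. Hence the simultaneous choices of different leaders in color class $c$ influence disjoint subsums of $\Phi$, so minimizing each subsum locally keeps the global $\Phi$ non-increasing. After the last color class all bits are fixed and $\Phi<1$ forces $\sum_v X_v=0$; each node then runs $\mathcal{A}$ with the fixed bits for $T(N)$ rounds and outputs a valid answer.

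The implementation inside one color class is what I expect to be the main obstacle. In parallel for each cluster $\mathcal{C}$, the leader must gather the topology, inputs, and previously-fixed random bits of every node in $N_\tau(\mathcal{C})$. Since randomized algorithms in this model do not read input IDs, only local symbolic labels are needed, so the payload is $O(\log N)$ bits per node and $O(N\log N)$ bits per extended cluster. Routing this through the Steiner tree extended by $\tau$ BFS layers (these extensions are pairwise disjoint within color $c$ by the separation guarantee, so congestion stays $\poly\log N$ and diameter stays $T(N)\poly\log N$), via \Cref{lem:learningTokens}, takes $O\!\bigl(\kappa\cdot(\beta+\tau+N\log N/b)\bigr)=T(N)\poly\log N$ rounds precisely because of the bandwidth $b=\Theta(N)$; with the standard $b=\Theta(\log N)$ this term would be $\Omega(N)$. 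Local computation at the leader (exhaustive evaluation of conditional expectations over later-color unfixed bits in each $N_\tau(v)$) is unrestricted in the model. Dissemination of the chosen bits back costs the same, so summing over the $O(\log N)$ color classes yields the claimed $T(N)\poly\log N$ total.
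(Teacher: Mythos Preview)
Your approach is the same as the paper's, but the distance parameter and the gathering radius are both too small by a factor of two, and this is a genuine gap rather than a cosmetic one.

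You set the cluster separation to $k=2\tau+1$ (so same-color clusters are at distance $>2\tau$) and have each leader collect the topology, inputs, and previously fixed bits of $N_\tau(\mathcal{C})$. It is true that the bits of $\mathcal{C}$ influence only $\{X_v:v\in N_\tau(\mathcal{C})\}$, and with separation $>2\tau$ these index sets are disjoint across same-color clusters. However, the leader cannot ``minimize each subsum locally'' from the information you allow it: for a vertex $v$ at distance $\tau$ from $\mathcal{C}$, the value of $E[X_v\mid\cdot]$ depends on the topology, inputs, and (fixed or random) bits of all nodes in $N_\tau(v)\subseteq N_{2\tau}(\mathcal{C})$. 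With only $N_\tau(\mathcal{C})$ in hand the leader cannot evaluate the conditional expectation it is supposed to not increase, so the method of conditional expectations step is not implementable as stated. The paper handles exactly this by taking cluster distance strictly more than $4(T(N)+r)$ and letting the leader learn all of $W_{\mathcal{C}}=W_{\mathcal{C}}^0\cup W_{\mathcal{C}}^1\cup W_{\mathcal{C}}^2=N_{2\tau}(\mathcal{C})$; the third layer $W_{\mathcal{C}}^2$ is precisely what is needed to compute the influence of $\mathcal{C}$'s bits on the $X_v$'s in the first two layers.

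Note also that your justification ``these extensions are pairwise disjoint within color $c$'' breaks if you try to patch the argument by gathering $N_{2\tau}(\mathcal{C})$ while keeping separation $>2\tau$: then the extended BFS regions can overlap and the congestion bound you rely on for \Cref{lem:learningTokens} is lost. Both the gathering radius and the separation must be doubled together. With that single correction the rest of your outline matches the paper's proof.
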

\begin{proof}
	For an execution of algorithm $\mathcal{A}$ on a graph with $N$ nodes define for each $v\in V$  an indicator variable $X_v$ that equals to $1$ if the verification of the solution fails at node $v$ and $0$ otherwise. The value of $X_v$ depends on the randomness of nodes in the $(T(N)+r)$ hop neighborhood of $v$, where $r = O(1)$ is defined by the LCL problem. 

We design an efficient algorithm $\mathcal{B}$ that deterministically fixes the (random bits) of each node $v\in V$ such that $X_v=0$ for all nodes. Thus, executing $\mathcal{A}$ with these random bits solves problem $P$. During the execution of $\mathcal{B}$ we fix the random bits of more and more nodes. At some point during the execution of algorithm $\mathcal{B}$ let  $X\subseteq V$ denote the nodes whose randomness is already fixed and let $\phi_X$ be their randomness (we formally define $\phi_X$ later). The crucial invariant that we maintain for each cluster $C$ at all times is the following
\begin{align}
	\label{eqn:invariant}
	\textbf{Invariant: ~} & E\left[\sum_{v\in V}X_v \mid \phi_X\right]<1. 
\end{align}
The invariant will be made formal in the rest of the proof. Initially (when $X=\emptyset$), it holds with the linearity of expectation and as the error probability of the algorithm implies that $E[X_v]=P(X_v=1)<1/n$ holds.  

\paragraph*{Algorithm $\mathcal{B}$:} Compute a weak network decomposition with cluster distance strictly more than $4(T(N)+r)$ with $O(\log N)$ color classes, $\beta=T(N)\cdot \poly\log N$ Steiner tree cluster radius and congestion $\kappa=O(\log N)$. Iterate through the color classes of the decomposition and consider each cluster separately. We describe the process for one cluster $\mathcal{C}$ when processing the $i$-th color class of the network decomposition. We define three layers according to the distance to vertices in $\mathcal{C}$. Denote the vertices in $\mathcal{C}$ as $W_{\mathcal{C}}^0$, the vertices in $V\setminus W_{\mathcal{C}}^0$ in distance at most $r+T(N)$ from $\mathcal{C}$ as $W_{\mathcal{C}}^1$ and the vertices in $V\setminus (W_{\mathcal{C}}^0\cup W_{\mathcal{C}}^1)$ with distance at most $2(r+T(N))$ from $\mathcal{C}$ as $W_{\mathcal{C}}^2$. Define $W_{\mathcal{C}}=W_{\mathcal{C}}^0\cup W_{\mathcal{C}}^1\cup W_{\mathcal{C}}^2$. Note that we have $W_{\mathcal{C}}\cap W_{\mathcal{C}'}=\emptyset$  for two distinct clusters $\mathcal{C}, \mathcal{C}'$ in the same color class of the decomposition due to the cluster distance. Further, even if \Cref{lem:derandomization} is applied to a subgraph $H$ (with size at most $N$) of a communication network $G$ we have $|W_{\mathcal{C}}|\leq N$, as $W_{\mathcal{C}}$ only contains nodes of $H$.

\noindent \textit{Dealing with one cluster $\mathcal{C}$:} Extend the Steiner tree of $\mathcal{C}$ to $W_{\mathcal{C}}$ using a BFS, ties broken arbitrarily. Notice that no two BFS trees interfere with each other since we handle different color classes separately and two clusters of the same color are in large enough distance.
Use the Steiner tree to assign new IDs (unique only within the nodes in $W_{\mathcal{C}})$) from range $[N]$ to all nodes in $W_{\mathcal{C}}$. Different clusters use the same set of IDs and each ID can be represented with $O(\log N)$ bits. Each node of $W_{\mathcal{C}}$ learns  about the new ID of each of its (at most) $\Delta$ neighbors in $W_{\mathcal{C}}$ in one round and uses $O(\Delta\cdot \log N)=O(\log N)$ bits to store its adjacent edges. 
Using \Cref{lem:learningTokens}, the cluster leader $\ell_{\mathcal{C}}$ learns the whole topology, inputs and already determined random bits of $G[W_{\mathcal{C}}]$ in $O(\kappa\cdot (\beta + (N\cdot \poly\log N)/b))$ rounds. 
In this step, the input also includes an acyclic orientation of the edges between vertices of $W_{\mathcal{C}}$, where an edge is oriented from $u\in W_{\mathcal{C}}$ to $v\in W_{\mathcal{C}}$   if the original ID of $v$ is larger than the original ID of $v$. 

For a node $v$ let $R_v$ describe its randomness. When we process cluster $\mathcal{C}$ we determine values  $\{r_v \mid v\in \mathcal{C}\}$ for the random variables $\{R_v \mid v \in \mathcal{C}\}$. Let $X$ be a  set of vertices $v$ for which we have already determined values $r_v$. Then we denote $\phi_X=\bigwedge_{v\in X} (R_v=r_v)$.

\begin{claim}Assume that Invariant~(\ref{eqn:invariant}) holds for $X=\bigcup_{\mathcal{C}\text{ has color $<i$}}\mathcal{C}$ before processing the clusters of color $i$. Then the cluster leaders $\ell_{\mathcal{C}}$ of all clusters $\mathcal{C}$ of color $i$, can in parallel,  find values  $\{r_v \mid v \in \mathcal{C}\}$ such that Invariant~(\ref{eqn:invariant}) holds afterwards for $X'=\bigcup_{\mathcal{C}\text{ has color $\leq i$}}\mathcal{C}$. 
\end{claim} 
	If we considered just a single cluster $\mathcal{C}$, independently from all other clusters with color $i$, then there are choices for the values of $r_v$ satisfying the claim due to the law of total probability, e.g., as used in the method of conditional expectation,  and as $E\big[\sum_{v\in V}X_v\mid \phi_X\big]<1$ holds before we fix any randomness of vertices inside $\mathcal{C}$.
\begin{proof}

  The only information that cluster leaders need to perform the necessary calculations is information on the topology of $G[W_{\mathcal{C}}]$, inputs to nodes in $W_{\mathcal{C}}$, the relative order of adjacent IDs of nodes in $G[W_{\mathcal{C}}]$ and already determined randomness of nodes in $X\cap W_{\mathcal{C}}$. 
  
  We show that the randomness of all vertices in clusters with color $i$ can be fixed with this knowledge.   
  To this end the cluster leader $r_{\mathcal{C}}$ fixes the randomness $\phi_{\mathcal{C}}$ such that 
	\begin{align*}
		E\big[\sum_{v\in W_{\mathcal{C}}}X_v\mid \phi_{\mathcal{C}}\wedge  \phi_{X\cap W_{\mathcal{C}}}\big]\leq E\big[\sum_{v\in W_{\mathcal{C}}}X_v\mid \phi_{X\cap W_{\mathcal{C}}}\big]
	\end{align*} 
holds. Such a choice for $\phi_{\mathcal{C}}$ exists, again due to the law of total probability, and $\ell_{\mathcal{C}}$ has full information to compute such values as all values in the inequality only depend on information that vertices in $W_{\mathcal{C}}$ sent to $\ell_{\mathcal{C}}$. In particular, the randomness of nodes in $\mathcal{C}$ only influences the random variables $X_v$ of nodes in $W_{\mathcal{C}}^0\cup W_{\mathcal{C}}^1$ and computing  this influence only requires knowledge from $W_{\mathcal{C}}^0\cup W_{\mathcal{C}}^1\cup W_{\mathcal{C}}^2$.

After all cluster leaders of clusters with color $i$ fix the randomness of the vertices in their clusters according to the above method the invariant is still satisfied since the randomness $R_v$ for $v\in \mathcal{C}$ does not influence the random variable $X_u$ for $u\in V\setminus W_{\mathcal{C}}$. 
More formally, let $X$ be the set of vertices with fixed randomness before we process the $i$-th color class of clusters and let $Y$ be the set of vertices whose randomness we fix when processing the $i$-th color class. Let $X'=X\cup Y$. Let $W=\bigcup_{\mathcal{C}\text{ has color $i$}}W_{\mathcal{C}}$. Recall that $W_{\mathcal{C}}\cap W_{\mathcal{C}'}=\emptyset $ for $\mathcal{C}\neq\mathcal{C}'$. 
Due to the linearity of expectation and the aforementioned reasons we obtain
\begin{align*}
	E\left[\sum_{v\in V}X_v\mid \phi_{X'}\right] 
	& = E\left[\sum_{v\in V\setminus W}X_v\mid \phi_{X'}\right]+ \sum_{\mathcal{C}\text{ has color $i$}} E\left[\sum_{v\in W_{\mathcal{C}}}X_v\mid \phi_{X'\cap W_{\mathcal{C}}}\right] \\
	& \leq  E\left[\sum_{v\in V\setminus W}X_v\mid \phi_{X}\right]+ \sum_{\mathcal{C}\text{ has color $i$}}E\left[\sum_{v\in W_{\mathcal{C}}}X_v\mid \phi_{X\cap W_{\mathcal{C}}}\right] \\
	& = E\left[\sum_{v\in V}X_v\mid \phi_{X}\right]<1.
\end{align*}
	All clusters can be processed in parallel as their distance is strictly more than $4(T(N)+r)$ and the choices in cluster $\mathcal{C}$ do not change the probability for  $X_v=1$ for  $v\in\mathcal{C}'\neq \mathcal{C}$.
\end{proof}
	Once the cluster leader $\ell_{\mathcal{C}}$ has fixed the randomness for the vertices in $\mathcal{C}$, it disseminates them to all vertices of its cluster via \Cref{lem:learningTokens} and we continue with the next color class of the network decomposition. At the end of the algorithm each node knows the values of its random bits and one can execute $T(N)$ with it, as the initial algorithm $\mathcal{A}$ also works with bandwidth $b$. 
	
	Due to the Invariant~(\ref{eqn:invariant}) we obtain that the distributed random bits are such that	$E[\sum_{v\in V}X_v\mid \phi_V]<1$ at the end of the algorithm, but as all $X_v$'s are in $\{0,1\}$ and there is no randomness involved (each vertex is processed in at least one cluster and thus we fix $R_v=r_v$ for each $v\in V$) we obtain that $X_v=0$ for all $v\in V$, that is, the algorithm does not fail at any vertex.

\noindent \textbf{Runtime:} Computing the network decomposition takes $T(N)\cdot\poly\log N$ rounds via \Cref{thm:mainCongest}. The runtime for processing one color class of the network decomposition is bounded as follows.  Collecting the topology and all other information takes $T(N) \cdot \poly\log N$ rounds due to \Cref{lem:learningTokens}, fixing randomness locally does not require communication. Disseminating the random bits to the vertices of each cluster takes $\poly\log N$ rounds, again due to \Cref{lem:learningTokens}. 
As there are only $\log N$ color classes the runtime for computing good random bits for all nodes is $T(N)\cdot\poly\log N$. Executing algorithm $\mathcal{A}$ with these random bits takes $T(N)$ rounds. 
\end{proof}

\begin{lemma}
	\label{lem:deterministicLLL}
	There is a deterministic LLL algorithm for range bounded LLL instances with constant dependency degree $d$ and LLL criterion $epd^2<1$ that runs in $O(\poly\log\log n)$ rounds on any graph with at most $n$ nodes if the communication bandwidth is $b=\Theta(n)$.  
\end{lemma}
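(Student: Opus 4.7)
The plan is to exploit the bandwidth $b=\Theta(n)$ to deterministically simulate, in \CONGEST, a $\poly\log\log n$-round deterministic \LOCAL LLL algorithm for constant-degree graphs. In the regime of the lemma ($d=O(1)$ and $epd^2<1$), such a \LOCAL algorithm is available by combining the derandomized network decomposition of \cite{RG20} with the randomized \LOCAL LLL of \cite{SuLLL2017} and the standard derandomization scaffolding; alternatively, for small constant $d$, the $O(\logstar n)$-round algorithm of \cite{BMU19,BGR20} applies directly after absorbing $d$ into the exponential criterion $p\cdot 2^d<1$. Call its round complexity $T=\poly\log\log n$. Because the \LOCAL algorithm is deterministic and of radius $T$, the value it assigns to any variable depends only on the $T$-hop ball around that variable's owner, so it suffices to let each node gather its own $T$-ball and then locally replay the \LOCAL algorithm.

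First I would have each node $v$ run $T$ rounds of pipelined BFS, forwarding in round $k$ the labeled $(k{-}1)$-ball it has accumulated---topology, identifiers, inputs, variable ranges, and the descriptions of all attached LLL events---to each of its neighbors. After $T$ steps $v$ holds $B_T(v)$ verbatim and locally runs the \LOCAL algorithm on $B_T(v)$, outputting the values of any variables it owns. Determinism together with the radius-$T$ property guarantee that any two nodes which both observe the $T$-ball around a given variable's owner compute the same value for that variable, so the joint assignment is globally consistent and, by correctness of the simulated algorithm, avoids every bad event. The range-boundedness hypothesis is exactly what keeps inputs and variable values within $O(\log n)$ bits so they can ride along in the BFS messages.

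The bandwidth calculation is the crux and also the step I expect to be the main obstacle. Since $d=O(1)$ and $T=\poly\log\log n$, the ball has at most $d^T = 2^{(\log\log n)^{O(1)}}$ nodes, and each node contributes $O(\log n)$ bits of identifier together with $O(\log n)$ bits of range-bounded payload, giving a per-message size of $2^{(\log\log n)^{O(1)}}\cdot O(\log n)$ bits. Because $(\log\log n)^{O(1)}\ll \log n$ for all sufficiently large $n$, this is $o(n)$ and hence fits in a single $\Theta(n)$-bit \CONGEST message; for $n$ below that threshold the whole graph already has $O(n)\le\poly\log\log n$ nodes and can be aggregated naively within the budget. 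Each of the $T$ BFS steps therefore costs one \CONGEST round, and the overall simulation runs in $O(T)=\poly\log\log n$ rounds. The only additional subtlety is matching the criterion $epd^2<1$ to the \LOCAL algorithm chosen: for small constant $d$ the exponential reduction $p\cdot 2^d<1$ is immediate from $ed^2\ge 2^d$, while for larger constant $d$ one falls back on the polynomial-criterion \LOCAL LLL obtained from \cite{RG20}.
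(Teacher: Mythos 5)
There is a genuine gap, and it is the central one: your proof assumes the existence of a \emph{deterministic} $\poly\log\log n$-round \LOCAL algorithm for range bounded LLL under the criterion $epd^2<1$ on constant-degree graphs, and no such algorithm exists. The combination of \cite{RG20} with \cite{SuLLL2017}/\cite{MoserTardos10} and ``standard derandomization scaffolding'' yields a deterministic \LOCAL algorithm with complexity $\poly\log n$, not $\poly\log\log n$; the $\poly\log\log n$ bounds in the literature are for \emph{randomized} algorithms and go through shattering. In fact a deterministic $\poly\log\log n$-round \LOCAL algorithm for this class is impossible: sinkless orientation on $\Delta$-regular graphs with, say, $\Delta=10$ is a range bounded LLL with $p=2^{-\Delta}$, $d=\Delta$ and $epd^2<1$, and it has a deterministic $\Omega(\log n)$ lower bound in \LOCAL \cite{LLL_lowerbound,FOCSchang2016a}. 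Since a $\poly\log\log n$-round algorithm on a constant-degree graph only ever sees balls of $2^{\poly\log\log n}=n^{o(1)}$ nodes, the $\Theta(n)$ bandwidth buys you nothing beyond locality, so your simulated algorithm is subject to the same lower bound. Your fallback via the exponential criterion does not rescue this either: the inequality $ed^2\geq 2^d$ you invoke fails already for $d=8$ (where $64e<256$), so $epd^2<1$ does not imply $p\cdot 2^d<1$ for general constant $d$, and for larger $d$ you are back to relying on the nonexistent fast deterministic \LOCAL algorithm.

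The paper's proof is a one-liner precisely because the hard work sits elsewhere: it plugs the \emph{randomized} $O(\log_{epd^2}n)$-round \CONGEST algorithm of \Cref{lem:LLLboundedRangeRandomized} into the derandomization result of \Cref{lem:derandomization}, which uses the wide bandwidth to let cluster leaders of a distance-separated network decomposition gather the topology and partial randomness of their extended clusters and fix random bits by the method of conditional expectations, color class by color class. That is where the determinism actually comes from, and it yields runtime $T(N)\cdot\poly\log N=\poly\log N$ on an $N$-node graph with bandwidth $\Theta(N)$ --- which is how the lemma is actually invoked in the proof of \Cref{thm:rangeBoundedLLL}, namely on the shattered components of size $N=O(\log n)$, giving $\poly\log\log n$ in terms of the original $n$. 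Your ball-collection idea sidesteps derandomization entirely rather than performing it, so the proposal does not establish the lemma.
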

\begin{proof}
Plug the randomized algorithm of \Cref{lem:LLLboundedRangeRandomized} into the derandomization result of
\Cref{lem:derandomization}. We can apply the lemma because the feasibility  of a computed solution can be checked in $1$ round in \CONGEST. 
\end{proof}



\begin{proof}[Proof of \Cref{thm:rangeBoundedLLL}]
	The goal is to apply the shattering framework  of \cite{FGLLL17}. The pre-shattering phase takes $\poly \Delta+O(\logstar n)$ rounds and afterwards all events with an unset variable induce small components of size $N=\poly(\Delta,\log n)$. Furthermore, each of the small components also forms an LLL instance, but with a slightly worse LLL criterion. Then we apply \Cref{lem:deterministicLLL} to solve all small components in parallel. Next, we describe these steps in detail.
	
	The pre-shattering phase begins with computing a distance-$2$ coloring of the dependency graph $H$ with $O(\Delta^2)$ colors, i.e., a coloring in which each color appears only once in each inclusive neighborhood. In the \LOCAL model this takes $O(\logstar n)$ rounds with Linial's algorithm \cite{linial92}. In the \CONGEST model we can compute such a coloring in  $O(\poly \Delta_H \cdot \logstar n)=O(\logstar n)$ rounds.
	
	Next, we iterate through the $O(\Delta^2)$ color classes to set some of the variables. The unset variables either have the status \emph{frozen} or \emph{non-frozen}. At the beginning all variables are unset and non-frozen. We iterate through the $O(\Delta_H^2)$ color classes and process all vertices (events) of the same color class in parallel. 
	Each non-frozen variable of a processed event $\mathcal{E}$ is sampled; then the node checks how the conditional probabilities of neighboring events have changed. As the LLL is range bounded this step can be implemented in $O(1)$ rounds. If there is an event $\mathcal{E}'$ (possibly $=\mathcal{E}$) whose probability has increased to at least $p'=\sqrt{p}$, its variables are unset and all variables of event $\mathcal{E}'$ are frozen. The next three observations are proven in \cite{FGLLL17} and capture the properties of this pre-shattering phase. 
	\begin{observation}
		\label{obs:smallProb}
		 For each event $\mathcal{E} \in \mathcal{X}$, the probability of $\mathcal{E}$ having at least one unset variable is at most $(d + 1)\sqrt{p}$. Furthermore, this is independent of events that are further than $2$ hops from $\mathcal{E}$.
	\end{observation}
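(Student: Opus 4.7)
The plan is to bound the probability by identifying the single mechanism that can leave a variable of $\mathcal{E}$ unset, then apply Markov's inequality followed by a union bound. First I would observe that the only way a variable $x \in vbl(\mathcal{E})$ remains unset at the end of the pre-shattering phase is that at some earlier point it was frozen, and freezing a variable only happens when some event $\mathcal{E}'$ containing $x$ has its conditional probability (under the partial sampling done up to that point) reach $\sqrt{p}$. Since such an $\mathcal{E}'$ must share at least one variable with $\mathcal{E}$, we have $\mathcal{E}' \in N^+_H[\mathcal{E}]$, a set of size at most $d+1$.

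Next, for each fixed candidate $\mathcal{E}' \in N^+_H[\mathcal{E}]$, I would bound the probability that its conditional probability ever reaches $\sqrt{p}$ during the procedure. The unconditional probability of $\mathcal{E}'$ is at most $p$, and the conditional probability given any assignment of a subset of its variables is a random variable whose expectation, over the random sampling of that subset, equals the unconditional probability (the remaining variables are integrated out). By Markov's inequality, the probability that this conditional probability ever exceeds the threshold $\sqrt{p}$ is at most $p/\sqrt{p} = \sqrt{p}$. A union bound over the at most $d+1$ candidate events yields the claimed bound of $(d+1)\sqrt{p}$.

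For the independence claim I would argue as follows. The event ``$\mathcal{E}$ has at least one unset variable'' is a function only of whether some $\mathcal{E}' \in N^+_H[\mathcal{E}]$ ever crosses the $\sqrt{p}$ threshold, and the conditional probability of such an $\mathcal{E}'$ depends only on the values of variables in $vbl(\mathcal{E}')$. Every such variable lies in the scope of some event within distance one of $\mathcal{E}'$, hence within distance two of $\mathcal{E}$, in the dependency graph $H$. Therefore the entire randomness relevant to whether $\mathcal{E}$ has an unset variable is contained in events within $2$ hops of $\mathcal{E}$.

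The main subtlety I anticipate is pinning down the adaptive, sequential nature of the sampling process: variables of $\mathcal{E}'$ may be set by different color classes at different moments, and I need the Markov bound to remain valid throughout. The clean way to handle this is to observe that conditional on the variables already assigned, the remaining variables of $\mathcal{E}'$ are still sampled from their original marginals (sampling done in earlier color classes concerns disjoint variables when restricted to $\mathcal{E}'$'s scope, since within a single event only its own variables are sampled at once), so the standard fact that $\mathbb{E}[\Pr(\mathcal{E}' \mid \text{partial assignment})] \leq p$ applies at every step. Once that is established, Markov and the union bound go through as above.
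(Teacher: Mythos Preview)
The paper does not give its own proof of this observation; it explicitly states that this and the two subsequent observations ``are proven in \cite{FGLLL17}''. Your argument---identify that an unset variable of $\mathcal{E}$ forces some $\mathcal{E}'\in N^+_H[\mathcal{E}]$ to have crossed the $\sqrt{p}$ threshold, bound the probability of threshold-crossing for a fixed $\mathcal{E}'$ by $\sqrt{p}$ via Markov, and union bound over the $d+1$ candidates---is precisely the Fischer--Ghaffari argument the paper defers to, so your proposal is correct and aligned with the intended proof.

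One small sharpening worth making explicit when you write it up: the ``Markov'' step really needs the martingale/optional-stopping form. The sequence $P_t=\Pr[\mathcal{E}'\mid \text{first }t\text{ revealed variables}]$ is a nonnegative Doob martingale with $P_0\le p$; letting $\tau$ be the first time $P_t\ge\sqrt{p}$ (or $\infty$), optional stopping gives $p\ge \mathbb{E}[P_\tau]\ge \sqrt{p}\cdot\Pr[\tau<\infty]$, hence $\Pr[\tau<\infty]\le\sqrt{p}$. This cleanly handles the adaptivity you flag, since freezing or interleaved sampling of other events only affects \emph{which} prefix of $vbl(\mathcal{E}')$ gets revealed, not the distribution of the revealed values. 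Your locality claim (randomness confined to the $2$-hop ball) is likewise the standard one and is correct.
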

	
	The following result follows with \Cref{obs:smallProb} and the by now standard shattering lemma (see \Cref{lem:shattering}). We do not discuss the details as it has been done in \cite{FGLLL17}.
	\begin{observation}[Small components]
	The connected components in $H$ induced by all events with at least one unset variable are w.h.p. in $n$ of size $N=\poly(\Delta_H)\cdot \log n=O(\log n)$. 
	\end{observation}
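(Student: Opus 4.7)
The plan is to invoke the standard distributed shattering lemma (referenced in the excerpt as Lemma~\ref{lem:shattering}), feeding it the one-node bad-probability bound supplied by Observation~\ref{obs:smallProb}. Concretely, call an event $\mathcal{E}\in \mathcal{X}$ \emph{live} if it has at least one unset variable at the end of the pre-shattering phase. Observation~\ref{obs:smallProb} gives two ingredients the shattering lemma needs: (i) each event is live with probability at most $q := (d+1)\sqrt{p}$, and (ii) the event ``$\mathcal{E}$ is live'' depends only on the decisions taken in the $2$-hop $H$-neighborhood of $\mathcal{E}$, so live-events are independent across pairs at hop distance strictly larger than $2$ in $H$.

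The (standard) shattering lemma for bounded-degree graphs asserts that if live-ness is determined by a constant-radius neighborhood in a graph of maximum degree $\Delta$ and the per-vertex live probability satisfies $q\cdot \Delta^{c}\ll 1$ for a sufficiently large constant $c$, then with high probability in $n$ every connected component of live vertices has size at most $O(\log n)$, in fact $\poly(\Delta)\cdot \log n$. The proof is the usual witness-tree / $4$-tree union bound: enumerate all maximal $2$-independent sets of vertices inside a hypothetical large bad component, use the independence of live-ness beyond constant distance to multiply the $q$'s, and bound the number of such witness configurations of size $s$ by $(e\Delta^{O(1)})^{s}$; choosing $s=\Theta(\log n)$ then makes the total failure probability $1/\poly(n)$.

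Now I verify that the hypotheses are comfortably met under the criterion $p(ed)^{8}<1$ from Theorem~\ref{thm:rangeBoundedLLL}. The criterion yields $\sqrt{p}< (ed)^{-4}$, hence
\begin{equation*}
q \;=\; (d+1)\sqrt{p} \;\le\; \frac{2d}{(ed)^{4}} \;=\; O(d^{-3}).
\end{equation*}
Since $H$ has maximum degree $d$ and the independence radius for live-ness is $2$, the shattering lemma needs something like $q\cdot d^{O(1)}<1$, which the $d^{-3}$ slack above satisfies with room to spare. The choice of the exponent $8$ in the LLL criterion is exactly calibrated for this: one square root is spent on the pre-shattering step (which degrades $p$ to $\sqrt{p}$ inside Observation~\ref{obs:smallProb}), and the remaining polynomial slack in $d$ is what the shattering lemma consumes.

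Plugging everything in, Lemma~\ref{lem:shattering} produces, with probability at least $1-1/\poly(n)$, connected components of live events in $H$ of size at most $\poly(d)\cdot\log n$. Because the theorem assumes $d=\Delta_H=O(1)$, this bound collapses to $O(\log n)$, matching the statement of the observation. I do not foresee a real obstacle: the analysis is the same as the one carried out in \cite{FGLLL17}, and the only care needed is the bookkeeping in the previous paragraph to confirm that the criterion $p(ed)^{8}<1$ still leaves polynomial slack after Observation~\ref{obs:smallProb} squeezes out a $\sqrt{\cdot}$.
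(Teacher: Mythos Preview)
Your proposal is correct and matches the paper's approach exactly: the paper simply states that the observation ``follows with \Cref{obs:smallProb} and the by now standard shattering lemma (see \Cref{lem:shattering})'' and defers all details to \cite{FGLLL17}. Your write-up is in fact more detailed than the paper's own treatment; the only minor quibble is that your constant bookkeeping ($q=O(d^{-3})$) is a bit loose relative to the specific constants in \Cref{lem:shattering} (which with $c_2=2$ asks for roughly $c_1>11$), but the paper does not check these constants either and the slack is absorbed once $d=O(1)$.
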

The following observation holds as each event with an unset variable fails at most with probability $p'=\sqrt{p}$ when its variables are frozen. It is proven in \cite{FGLLL17}.
	\begin{observation}
		The problem on each connected component induced by events with at least one unset variable is an LLL problem  with criterion $p'(d+1) <1$. 
	\end{observation}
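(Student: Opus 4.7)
The plan is to verify the two ingredients of a valid LLL instance — a probability bound for each bad event and a bound on the dependency degree — for each connected component of the residual problem. A connected component here is a maximal set of bad events that still have at least one unset variable, where two such events are linked if they share an unset (frozen) variable. The residual variables are precisely the frozen ones, and they remain mutually independent because the originals in $\mathcal{V}$ were; moreover, frozen variables never straddle two distinct components, so each component is an honest LLL sub-instance.

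First I would argue the probability bound: for every event $\mathcal{E}$ in the residual instance, the probability that $\mathcal{E}$ fails when the frozen variables are freshly and independently resampled (with the already-sampled variables substituted in) is at most $p' = \sqrt{p}$. This is enforced by the freezing rule of the pre-shattering phase. Whenever the current color class is processed, the rule checks, immediately after sampling, whether any event's conditional probability has reached $p'$; if so, it undoes that sampling step and freezes the offending event, so its variables are never sampled again. Hence the invariant ``every event has conditional probability $< p'$ given the currently sampled values'' is preserved through all $O(\Delta_H^2)$ color classes, and in particular holds at the end. For an event $\mathcal{E}$ with unset variables, its residual failure probability is exactly this conditional probability, which gives the bound.

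Second I would bound the dependency degree. Two residual events are dependent iff they share a frozen variable, and this necessarily implies they share some variable in the original instance. Therefore the residual dependency graph is a subgraph of $H$, and its maximum degree is at most $d = \Delta_H$. Combining with the first step, the residual LLL instance on each component satisfies the criterion $p'(d+1) < 1$: from the hypothesis $p(ed)^8 < 1$ we have $\sqrt{p} < 1/(ed)^4$, whence $p'(d+1) = \sqrt{p}(d+1) < (d+1)/(ed)^4 < 1$ for every $d \geq 1$ (and in particular for the constant $d = O(1)$ of \Cref{thm:rangeBoundedLLL}).

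The only genuinely nontrivial step is the threshold invariant in the first part — it is where the design of the freezing rule is exploited and where one has to be careful that ``conditional probability given the sampled values'' is the right quantity to track (as opposed to the marginal probability of $\mathcal{E}$ or the probability that $\mathcal{E}$ still has an unset variable, which is the subject of \Cref{obs:smallProb}). The degree and criterion calculations are routine, and the independence of the residual components follows immediately from how we restricted to events with unset variables.
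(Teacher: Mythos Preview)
Your proposal is correct and matches the paper's approach. The paper itself does not spell out a proof here: it gives a one-line justification (``each event with an unset variable fails at most with probability $p'=\sqrt{p}$ when its variables are frozen'') and defers the details to \cite{FGLLL17}; your write-up is simply a fuller version of that same argument, with the threshold invariant, the subgraph bound on the residual dependency degree, and the numerical check of $p'(d+1)<1$ made explicit.
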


	To complete our proof we apply the deterministic algorithm of \Cref{lem:deterministicLLL} on each component in parallel. Let $U$ be the set of nodes that have its random bits not yet determined. Note that any vertex in $U$ is part of one component. For each $u\in U$ the values of already determined random bits in the $r$-hop ball around $u$ are included in $u$'s input---many nodes already determine their random bits in the pre-shattering phase. Even conditioned on the random bits determined in the pre-shattering phase (formally this provides a promise to the inputs), each instance is a range bounded LLL on a bounded degree graph with at most $N$ nodes and the standard \CONGEST bandwidth is $b= \Omega(\log n)=\Omega(N)$, the runtime is $\poly\log N=\poly\log\log n$. All instances can be dealt with independently since by definition, the connected unsolved components share no events nor variables. 
\end{proof}

The celebrated result by \cite{CP19} says that an LCL problem either cannot be solved faster than in $\Omega(\log n)$ rounds or can be solved with an LLL algorithm.

\begin{lemma}[\cite{CP19}]
	\label{lem:LCLSpeedup}
		Any LCL problem that can be solved with a randomized $o(\log n)$-round \LOCAL algorithm with error probability $<1/n$ on any graph with at most $n$ nodes can be solved via the following procedure: Create a bounded range LLL instance with LLL criterion $p(ed)^{100}<1$, solve the instance, and run a constant time deterministic \LOCAL algorithm that uses the solution of the LLL instance as input. 
\end{lemma}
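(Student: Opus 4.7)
The plan is to use the standard Chang--Pettie reduction: treat the per-vertex random tape of $\mathcal{A}$ as the LLL variables and the local LCL checks as the bad events, with a polynomial LLL criterion forced by the assumption $T(n)=o(\log n)$. As a preparatory trick I would first invoke $\mathcal{A}$ with a larger ``virtual'' size parameter $N=n^{c}$ for a sufficiently large constant $c$; nothing changes on the actual graph, but the per-vertex failure drops to at most $1/N$ while the round complexity grows only to $T(N)$, which is still $o(\log N)$ by assumption.

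Next I would build the LLL instance. For each vertex $v$ introduce variables that hold the $h(n)$-bit random tape of $v$, split into $O(\log\Delta)$-bit chunks so that the range-bounded condition is met. The bad event $\mathcal{E}_v$ fires when, with every node $u$ outputting $\mathcal{A}_u(R_{B_{T(N)}(u)})$, the LCL verifier at $v$ rejects on the radius-$r$ neighborhood of $v$. Since $\mathcal{E}_v$ reads only variables in $B_{T(N)+r}(v)$ and touches at most $\Delta^{T(N)+r}$ of them, we get dependency degree $d\le \Delta^{2(T(N)+r)}$, each event reads $\poly(d)$ variables of $O(\log d)$ bits, and $\Pr[\mathcal{E}_v]\le 1/N$. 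Hence
\[
p(ed)^{100} \;\le\; \tfrac{1}{N}\bigl(e\,\Delta^{2(T(N)+r)}\bigr)^{100} \;=\; \tfrac{O\!\bigl(\Delta^{200\,T(N)}\bigr)}{N},
\]
which is strictly below $1$ as soon as $T(N)\le \tfrac{\log N}{200\log\Delta}-O(1)$. Since $T(\cdot)=o(\log \cdot)$ and $\Delta=O(1)$ for an LCL, this holds for all sufficiently large $N$, and range-boundedness follows automatically.

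To finish, I would solve this LLL instance to obtain random bits $\{r_v\}$ for which no $\mathcal{E}_v$ holds; each vertex then locally runs $\mathcal{A}$ with these fixed bits to produce a globally valid LCL labeling. This final step is a fixed deterministic rule whose output at $v$ only depends on the LLL assignment in the $(T(N)+r)$-ball, and it is precisely the constant-radius deterministic \LOCAL algorithm that consumes the LLL solution in the statement. The main obstacle, and the place where $T=o(\log n)$ is genuinely needed, is the coupled choice of $N$: one must pick $N$ large enough to push $p$ comfortably below $d^{-100}$, yet the growth of $T(N)$ must still keep the chunked variables few enough per event to preserve range-boundedness. Choosing $N=n^{c}$ for a sufficiently large constant $c$ balances both requirements.
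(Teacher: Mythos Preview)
Your general framework is right---treat the per-vertex random tape as the LLL variables and the local LCL check as the bad event---but the crucial parameter choice is inverted. The Chang--Pettie trick is to lie \emph{downward}: one fixes an absolute constant $n_0$ (depending only on $\Delta$, $r$, and the runtime function $T$) and runs $\mathcal{A}$ as if the graph had $n_0$ nodes. Because $T(n_0)=o(\log n_0)$, one can choose $n_0$ large enough that $p(ed)^{100}<1$ while $t_0:=T(n_0)$ stays an absolute constant. This is exactly what makes everything downstream work: the dependency radius $2(t_0+r)$ is $O(1)$, the number of random bits per node is $h(n_0)=O(1)$, and the final ``run $\mathcal{A}$ with the fixed bits'' step is a genuinely constant-time \LOCAL algorithm.

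Your choice $N=n^c$ goes the other way and breaks two requirements of the lemma. First, the final deterministic step has radius $T(N)+r=T(n^c)+r$, which is $o(\log n)$ but certainly not $O(1)$; the lemma explicitly demands a constant-time post-processing algorithm, and this is what lets the paper feed the resulting LLL into \Cref{thm:rangeBoundedLLL} (which needs bounded dependency degree). Second, range-boundedness does \emph{not} follow automatically: each event reads roughly $\Delta^{T(N)+r}\cdot h(N)/\log\Delta$ chunked variables, and range-boundedness needs this to be $\poly(d)=\Delta^{O(T(N))}$. Since $h$ is allowed to grow arbitrarily fast, $h(N)$ can dwarf $\Delta^{O(T(N))}$ and no choice of $c$ in $N=n^c$ repairs this. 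In the paper's approach the issue evaporates because $h(n_0)$ is literally a constant. So the fix is not to tune $c$, but to replace $N=n^c$ by a fixed constant $n_0$ chosen once and for all from the $o(\log n)$ assumption.
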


\begin{proof}
	If we omitted the requirement that the LLL instance has to be range bounded the lemma is directly proven in \cite{CP19}. Thus, we focus on showing that the obtained LLL instance is range bounded. We first explain the setup of the proof of \cite{CP19}. 
	
To solve an LCL with an $o(\log n)$ round algorithm $\mathcal{A}$ in the \LOCAL model, they set up an LLL. The random variables of a node in this LLL consist of the random bits that are needed for an execution of algorithm $\mathcal{A}$ for a carefully chosen constant number of rounds $t_0=O(1)$. 
There is a \emph{bad event} for each node $v\in V$ of the communication network which holds if $\mathcal{A}$ executed with the random bits violates the LCL constraint of $v$.  Thus, the goal of the LLL is to find an assignment of good random bits such that the execution of $\mathcal{A}$ for $t_0$ rounds  does not fail (in solving the LCL problem) at any node.

The core of \cite{CP19} is to show that due to the assumptions on the runtime function one can lie to the algorithm $\mathcal{A}$ about the number of nodes in the input graph, run the algorithm with a 'fake' small value for the number of nodes at the cost of increasing the error probability. They show that there exists an absolute constant (!) value $n_0=O(1)$ that only depends on the LCL and the runtime function $t_{\mathcal{A}}$ of algorithm $\mathcal{A}$ such that the problem of finding good random bits for $\mathcal{A}(n_0)$, that is $\mathcal{A}$ executed for $t_0=t_{\mathcal{A}}(n_0)$ rounds under the belief that the graph has at most $n_0$ vertices, is indeed an LLL problem satisfying the LLL criterion  $p(ed)^{100}<1$. Actually, they show that for any constant $c>1$ there is a choice of $n_0$ such that the criterion $epd^c<1$ holds. 

The only part of \Cref{lem:LCLSpeedup} that is not explicitly proven in \cite{CP19} is the fact that the obtained LLL is range bounded. We next reason that this merely follows as $n_0$ is an absolute constant. We emphasize that this is the part of the proof that is not argued in the previous work but follows as a simple observation from the proof details. The dependency radius (in the communication network) of this LLL is $2(t_{\mathcal{A}}(n_0)+r)$ where $r$ is the checking radius of the LCL. Recall, that we assume that, in a randomized algorithm, the number of random bits used by a node on a graph on at most $n$ nodes can be upper bounded by an arbitrarily fast growing function $h_{\mathcal{A}}(n)$. Thus, the number of random bits (or number of variables) of a node in the LLL is bounded by  $h_{\mathcal{A}}(n_0)=O(1)$.  The LLL is range bounded as any variable, i.e., any random bit, can only take one out of two values. 
		
Once the LLL is solved one can solve the original LCL problem by executing $A(n_0)$ in $t_0=t_{\mathcal{A}}(n_0)=O(1)$ \LOCAL rounds (this execution is deterministic given the random bits from the LLL solution). An $O(1)$-round \LOCAL algorithm can be run in \CONGEST in $\poly \Delta=O(1)$ rounds. 
\end{proof}
The influential result of \Cref{lem:LCLSpeedup} of \cite{CP19} has been used in several other works, e.g., in  \cite{Balliu2021,BGR21}. Note that \cite{Balliu2021} sets up the same LLL with the purpose of designing an efficient \CONGEST algorithm for it; however, \cite{Balliu2021} is restricted to the setting where the input graph is a tree which allows for very different methods of solving the respective LLL and admits, as they show, even an $O(\log\log n)$-round algorithm. 
We combine \Cref{lem:LCLSpeedup} with our LLL algorithm in \Cref{thm:rangeBoundedLLL} to prove \Cref{cor:LCLgap}.

\corLCLgap*

\begin{proof}[Proof of \Cref{cor:LCLgap}]
Let $P$ be an LCL problem with randomized \CONGEST complexity $o(\log n)$, which also implies a \LOCAL algorithm with the same complexity. Now, we use the \CONGEST LLL algorithm from \Cref{thm:rangeBoundedLLL} in the framework of \Cref{lem:LCLSpeedup} to obtain a \CONGEST algorithm with runtime $O(\poly\log \log n)$ for $P$. 
Observe that even though the LLL construction requires running the LLL algorithm from \Cref{thm:rangeBoundedLLL} in a powergraph, since we are dealing with a bounded degree input graph, the dependency degree of the LLL remains a constant.
This approach also requires that we can execute the constant time $\LOCAL$ algorithm at the end in the \CONGEST model, which follows as the maximum degree  is constant for LCLs, the solution to the LLL is constant as the LLL is range bounded and also the output for the LCL problem is constant. 
\end{proof}

The results in this section imply randomized $\poly\log\log n$-round algorithms for classic problems such as $\Delta$-coloring on constant degree graphs (as $\Delta$-coloring is an LCL which has an $o(\log n)$-round \LOCAL algorithm \cite{GHKM18} the result follows along the same lines as \Cref{cor:LCLgap}) and various defective coloring variants for constant degree graphs by modeling them as range bounded LLLs and applying \Cref{thm:rangeBoundedLLL}, see \cite{SuLLL2017} for various such problems and how they can be modeled as LLLs. 

Our network decomposition algorithm with few colors in combination with the \LOCAL model LLL algorithm from \cite{FGLLL17} provides the following theorem. 
Due to the unconstrained message size in the \LOCAL model , it does not require the LLL instances to be range bounded. 
\begin{theorem}
	Let $\lambda\in \mathbb{N}$ be constant. 	There exists a deterministic $n^{2/\lambda}\poly\log n$ round \LOCAL algorithm for LLL instances with criterion $p(ed)^{\lambda}<1$. 
\end{theorem}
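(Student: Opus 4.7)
The plan is to combine the few-color network decomposition of \Cref{thm:constantColors} with the cluster-by-cluster LLL solver of Fischer and Ghaffari~\cite{FGLLL17}, whose polynomial-criterion exponent matches the number of color classes in the decomposition one-to-one.

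First, I would invoke \Cref{thm:constantColors} on the dependency graph $H$ of the LLL instance to compute a weak $(\lambda, n^{1/\lambda}\log^2 n)$-network decomposition of $H$ in $O(\lambda \cdot n^{2/\lambda}\log^4 n)$ rounds. Since $\lambda$ is constant, this setup step already fits inside the claimed $n^{2/\lambda}\poly\log n$ budget, and in fact dominates the final runtime. Because we are in the \LOCAL model, the polylogarithmic congestion bound of \Cref{thm:constantColors} is immaterial: each cluster leader can, in time proportional to the cluster diameter, aggregate arbitrarily much information from its cluster along the Steiner tree.

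Next, I would process the $\lambda$ color classes one after another, handling all clusters within a class in parallel. When processing color class $i$, each cluster leader collects (i) the full topology of its cluster, (ii) the bad events it contains together with their probabilities, and (iii) the values of every variable already fixed by clusters of colors $1,\dots,i-1$ that lie in its closed neighborhood. Purely by local computation, the leader then selects an assignment to the still-unset variables incident to its cluster so as to (a) avoid every bad event wholly contained in the cluster and (b) maintain the invariant that, conditional on all variables fixed through class $i$, every still-unprocessed bad event has conditional failure probability at most $p\cdot (ed)^{i}$. The existence of such an assignment follows from the asymmetric LLL together with the hypothesis $p(ed)^{\lambda}<1$: each round of conditioning inflates the relevant probabilities by at most a factor of $ed$, so after all $\lambda$ passes the residual instance still satisfies a local LLL condition strong enough to imply both (a) and (b). The chosen values are then broadcast back through the cluster in the same diameter-proportional time.

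The hard part is justifying step (b): one must argue that a valid assignment meeting both (a) and the conditional-probability invariant always exists and, crucially, that the leader can find it using only information gathered from its own cluster rather than from the whole graph. This is precisely the content of the \LOCAL LLL machinery of~\cite{FGLLL17}: applying the method of conditional probabilities on top of the Moser--Tardos witness-tree analysis, the leader can search its local state space and certify correctness using only structure present inside the cluster, because the criterion $p(ed)^{\lambda}$ absorbs exactly the slack needed to charge the $\lambda$ rounds of conditioning to the $\lambda$ color classes. Summing the $O(\lambda\cdot n^{2/\lambda}\log^4 n)$ cost of the decomposition with $\lambda$ aggregate-compute-broadcast phases of length $O(n^{1/\lambda}\poly\log n)$ yields the claimed $n^{2/\lambda}\poly\log n$ round complexity for constant $\lambda$.
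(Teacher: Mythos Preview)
Your overall approach is the same as the paper's: compute a $\lambda$-color network decomposition via \Cref{thm:constantColors} and then invoke the sequential LLL machinery of Fischer and Ghaffari~\cite{FGLLL17}, color class by color class. The paper simply cites \cite[arxiv, Theorem~3.5]{FGLLL17} as a black box, while you re-derive its invariant; that part is fine in spirit.

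There is, however, a genuine gap. You decompose the dependency graph $H$, but the paper decomposes $H^2$ (i.e., invokes the $k=2$ variant of \Cref{thm:constantColors}), and this is not cosmetic. Your invariant says that after processing color class $i$, every unprocessed event has conditional probability at most $p(ed)^i$; this only survives if each such event is touched by \emph{at most one} cluster of color $i$. With only distance-$1$ separation in $H$, two clusters $\mathcal{C}_1,\mathcal{C}_2$ of the same color can sit at distance exactly $2$, and the event $\mathcal{E}$ witnessing that distance is then adjacent to both. When $\mathcal{C}_1$ and $\mathcal{C}_2$ fix their variables in parallel, each is entitled (by your local argument) to inflate $\Pr[\mathcal{E}]$ by a factor $ed$, so after one color class the conditional probability can jump by $(ed)^2$ rather than $ed$, and the invariant collapses before you reach class~$\lambda$. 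The variables the two leaders fix are disjoint, so there is no write conflict, but the \emph{accounting} for the boundary event $\mathcal{E}$ is double-charged.

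The fix is exactly what the paper does: take the decomposition on $H^2$ so that same-color clusters are at distance $>2$ in $H$, which forces every boundary event to neighbor at most one cluster per color class. Since $k=2$ only changes the $\poly\log n$ factors in \Cref{thm:constantColors}, the runtime bound is unaffected.
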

\begin{proof}
	Let $H$ be the dependency graph of the LLL. Use \Cref{thm:constantColors} to compute a network decomposition of $H^2$ with $\lambda$ colors and cluster diameter $n^{1/\lambda}\poly\log n$ in $n^{2/\lambda}\poly\log n$ rounds; in the \LOCAL model we can ignore any congestion between the clusters. From \cite[arxiv version, Theorem 3.5]{FGLLL17}, we get that under the LLL criterion $p(ed)^\lambda$ and given a $(\lambda, \gamma)$ network decomposition, we can solve the LLL in $O(\lambda \cdot (\gamma + 1))$ rounds.
By plugging in $\gamma = n^{1/\lambda} \poly \log n$, we observe that the computation of the network decomposition dominates the runtime and we obtain the result.
\end{proof}

\section*{Acknowledgments}
This project was partially supported by the European Union's Horizon 2020 Research and  Innovation Programme under grant agreement no. 755839 (Yannic Maus).


\clearpage
\bibliographystyle{alpha} 
\bibliography{references}

\appendix


\section{Tree aggregation of \cite{GGR20}}

The following result was essentially proven in \cite[Corollary 5.3, arxiv version]{GGR20}.
\begin{corollary}
	\label{cor:treeAggregationBetter}
	Let $G$ be a communication graph on $n$ vertices. Suppose that each vertex of $G$ is
	part of some cluster $\mathcal{C}$ such that each such cluster has a rooted Steiner tree $T_{\mathcal{C}}$ of diameter at most
	$\beta$ and each node of $G$ is contained in at most $\kappa$ such trees. Then, in $O(\max\{1,\kappa/b \}\cdot (\beta + \kappa))$ rounds of the
	CONGEST model with $b$-bit messages, we can perform the following operations for all
	clusters in parallel on all clusters:
	\begin{enumerate}
		\item  Broadcast: The root of $T_{\mathcal{C}}$ sends a $b$-bit message to all nodes in $\mathcal{C}$;
		\item  Convergecast: We have $O(1)$ special nodes $u \in \mathcal{C}$, where each special node starts with a
		separate $b$-bit message. At the end, the root of $T_{\mathcal{C}}$ knows all messages;
		\item  Minimum: Each node $u \in \mathcal{C}$ starts with a non negative $b$-bit number $x_u$ . At the end, the root
		of $T_{\mathcal{C}}$ knows the value of $\min_{u \in \mathcal{C}}x_u$ ;
		\item  Summation: Each node $u \in \mathcal{C}$ starts with a non negative $b$-bit number $x_u$ . At the end, the root
		of $T_{\mathcal{C}}$ knows the value of $\big(\sum_{u\in \mathcal{C}} x_u\big) \mod 2 ^{O(b)}$
	\end{enumerate}
\end{corollary}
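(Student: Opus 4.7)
The plan is to derive all four operations via pipelined scheduling on the given Steiner trees, exploiting the two quantitative parameters of the hypothesis: the dilation bound $\beta$ (every message traverses at most $\beta$ hops) and the congestion bound $\kappa$ (every edge of $G$ is used by at most $\kappa$ trees). The target bound $O(\max\{1,\kappa/b\}\cdot(\beta+\kappa))$ is exactly the runtime one expects from a Leighton--Maggs--Rao style pipelining argument for tree-routed messages, once we account for the fact that each $b$-bit payload may need to carry a cluster identifier or be split into chunks when $\kappa$ exceeds $b$.

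First I would treat the broadcast operation. For each cluster $\mathcal{C}$, the root of $T_{\mathcal{C}}$ holds one $b$-bit value that must reach every terminal of $T_{\mathcal{C}}$. Across an edge $e$ of $G$ at most $\kappa$ different clusters contend, so $\Theta(\kappa)$ distinct $b$-bit messages must traverse $e$ on this phase. I would schedule each node to forward messages in order of a fixed priority on cluster identifiers: a node that has received a broadcast message for $\mathcal{C}$ from its $T_{\mathcal{C}}$-parent queues it locally and, on each outgoing $T_{\mathcal{C}}$-edge, inserts it into the priority queue for that edge. A standard pipelining analysis (queue lengths bounded by the congestion, pipeline length bounded by the dilation) shows that the last message is delivered within $O(\beta+\kappa)$ rounds. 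When $\kappa>b$, the cluster identifier plus payload no longer fits in a single message and each transmission must be split into $O(\kappa/b)$ chunks, inflating the runtime by exactly the $\max\{1,\kappa/b\}$ factor in the statement. The convergecast claim is the symmetric routing problem: $O(1)$ sources per cluster, each feeding an upward $b$-bit message along $T_{\mathcal{C}}$ to its root. The same per-edge priority schedule applied in the reverse orientation, under the same dilation/congestion bounds, yields the identical runtime.

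Next I would handle minimum and summation, which differ from convergecast because internal nodes aggregate their children's contributions into a single $b$-bit value before forwarding; within each tree this takes one wave from leaves to root, of length $\beta$. The subtlety is that an internal node cannot emit its aggregate on its parent edge until all children have reported, so upward waves of different trees that share an edge create dependencies. I would schedule the waves by assigning each tree a staggered offset and, on every shared edge, serving the waiting aggregated values in a round-robin over clusters. Arguing as in Leighton--Maggs--Rao (or by a direct inductive argument on the subtree depth: the time at which the aggregate of a depth-$d$ subtree reaches the root is bounded by $d$ plus the cumulative contention along the root-path), the entire minimum/summation wave finishes in $O(\beta+\kappa)$ rounds, with the same $\max\{1,\kappa/b\}$ inflation when identifiers do not fit in the payload. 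For summation mod $2^{O(b)}$ the local combining at each node is exactly one addition, which is why reducing modulo $2^{O(b)}$ is essential to keep the forwarded aggregate in $b$ bits.

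The main obstacle is the rigorous scheduling analysis for the aggregation operations, where dependencies between rounds (a node must wait for all children in its tree) interact with contention across overlapping trees. The clean way forward is to reduce to the packet-routing framework on fixed tree-paths: treat each node's upward message as a packet with a fixed route of length at most $\beta$, observe that the total number of packets crossing any edge is $O(\kappa)$, and invoke a tree-pipelining lemma to get the $O(\beta+\kappa)$ rounds; the hypothesis that every node knows the tree orientation and its cluster memberships makes this schedule purely local and implementable in CONGEST. All four claims then follow uniformly from this single pipelining bound, which yields the stated runtime once the $\max\{1,\kappa/b\}$ factor is applied to handle messages that cannot be transmitted as one $b$-bit unit.
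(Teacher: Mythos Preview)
Your approach is essentially correct but takes a substantially longer route than the paper. The paper's proof is two lines: it observes that the result with bandwidth $b'\ge\kappa$ is exactly \cite[Corollary~5.3, arxiv version]{GGR20}, and then simulates one round of $b'$-bit bandwidth using $x=\max\{1,\kappa/b\}$ rounds of $b$-bit bandwidth, incurring the stated slowdown. In other words, the paper treats the $O(\beta+\kappa)$ pipelining bound as a black box and only adds the bandwidth-simulation layer.

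You instead open the black box and re-derive the pipelining/scheduling argument yourself. This is a valid and self-contained approach, and your outline for broadcast and $O(1)$-source convergecast is fine. For the aggregation operations (minimum, summation) you correctly identify the real difficulty---a node cannot emit until all its children in that tree have reported---and your suggested fix via staggered per-tree delays is exactly the mechanism used in the Leighton--Maggs--Rao style proofs underlying \cite{GGR20}; the alternative ``reduce to packet routing on fixed paths'' that you sketch at the end is not quite right for aggregation, since the packets are not independent but have precedence constraints, so you should commit to the delay-based argument rather than the packet-routing reduction. One small inaccuracy: your explanation of the $\max\{1,\kappa/b\}$ factor (``the cluster identifier needs $\kappa$ bits'') is not the actual reason---cluster identifiers are $b$-bit in this paper. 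The factor arises because the cited aggregation primitive from \cite{GGR20} is stated for bandwidth at least $\kappa$, and the simulation of that bandwidth with $b$-bit messages costs $\lceil\kappa/b\rceil$ rounds per step. Your pipelining argument, done carefully, would recover the same factor for a slightly different reason (local bookkeeping per edge), but as written the justification is off.
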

\begin{proof}
	 Use $x=\max\{1,\kappa/b \}$ \CONGEST $b$-bit rounds to simulate one \CONGEST $x\cdot b$-bit round. Then, apply \cite[Corollary 5.3, arxiv version]{GGR20} with a slowdown of $x$. 
\end{proof}

\section{Shattering}
We use the following by now standard shattering result for our LLL algorithm in \Cref{sec:LLL}. This lemma and its variants have been applied in a vast amount of results in the area of distributed computing and its application is not limited to solving LLL instances. 
\begin{lemma}[The Shattering Lemma \cite{FGLLL17} cf. \cite{BEPSv3}]
	\label{lem:shattering}
	Let $H = (V,E)$ be a graph with maximum degree $\Delta_H$. Consider
	a process which generates a random subset $B\subseteq V$ where $Pr(v \in B) \leq \Delta^{-c_1}$, for some constant $c_1 \geq 1$,
	and where the random variables ${\bf 1}(v \in B)$ depend only on the randomness of nodes within at most $c_2$
	hops from $v$, for all $v \in V$ , for some constant $c_2 \geq 1$. Moreover, let $Z = H[2c_2+1,4c_2+2]$ be the graph
	which contains an edge between $u$ and $v$ if and only if their distance in $H$ is between $2c_2 + 1$ and $4c_2 + 2$.  Then
	with probability at least $1 - n^{-c_3}$, for any constant $c_3$ satisfying $c_1>c_3+ 4c_2 + 2$, we have the following two
	properties:
	\begin{itemize}
		\item[] {\bf (P1)} $Z[B]$ has no connected component $U$ with $|U| \geq \log_{\Delta}n$.
		\item[] {\bf (P2)} Each connected component of $H[B]$ has size at most $O(\log_{\Delta} n \cdot \Delta^{2c_2})$.
	\end{itemize}
\end{lemma}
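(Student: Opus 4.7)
The plan is to prove both claims by the standard two-step shattering argument: a union bound establishes (P1), and a greedy extraction derives (P2) from it. Throughout, I will exploit that whenever $U \subseteq V$ consists of vertices pairwise at $H$-distance strictly greater than $2c_2$, the events $\{{\bf 1}(v \in B) : v \in U\}$ are mutually independent, since by hypothesis each ${\bf 1}(v \in B)$ depends only on randomness within $H$-distance $c_2$ of $v$, and these radius-$c_2$ balls are then disjoint.

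For (P1), I would bound the probability of the following strengthened bad event: $Z[B]$ contains a \emph{well-separated} $Z$-subtree of size $k = \log_\Delta n$, where a subtree is well-separated if its vertices are pairwise at $H$-distance strictly greater than $2c_2$. For any such subtree $T$ with $V(T) \subseteq B$, independence yields $Pr(V(T) \subseteq B) \leq p^k \leq \Delta^{-c_1 k}$. Meanwhile, the number of well-separated $Z$-subtrees of size $k$ rooted at a fixed vertex is upper bounded by the standard Borgs--Chayes--Kesten--Spencer-type estimate $(e \cdot \Delta_Z)^k \leq (e \cdot \Delta^{4c_2+2})^k$, since the max degree of $Z$ is bounded by the volume of a radius-$(4c_2+2)$ ball in $H$. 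Union-bounding over the $n$ choices of root, the probability of the strengthened bad event is at most $n \cdot (e \cdot \Delta^{4c_2+2})^k \cdot \Delta^{-c_1 k}$, which is $\leq n^{-c_3}$ whenever $c_1 > c_3 + 4c_2 + 2$ and $k = \log_\Delta n$. Any connected component of $Z[B]$ of size $\geq \log_\Delta n$ contains, by the greedy argument described below, a well-separated $Z$-subtree of comparable size, absorbing an $O(\Delta^{2c_2})$ slack factor into the constants -- harmless in the constant-degree regime relevant to the applications.

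For (P2), I would show by contradiction that no connected component $C$ of $H[B]$ has size exceeding $\log_\Delta n \cdot (\Delta^{2c_2}+1)$. The extraction is greedy: initialize $U^* = \{v_0\}$ for an arbitrary $v_0 \in C$, and repeatedly add a vertex $v \in C$ that simultaneously satisfies (i) $dist_H(v, u) \in [2c_2+1, 4c_2+2]$ for some $u \in U^*$, giving a $Z$-edge, and (ii) $dist_H(v, U^*) > 2c_2$, preserving pairwise $H$-independence. Such $v$ exists whenever $N_H^{2c_2}[U^*] \not\supseteq C$: pick any $w \in C \setminus N_H^{2c_2}[U^*]$, take a shortest path from $w$ to $U^*$ inside $H[B]$ (it exists by connectivity of $C$), and let $v$ be the last vertex on this path with $dist_H(v, U^*) > 2c_2$. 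Since the next vertex on the path is within $H$-distance $2c_2$ of $U^*$ and $H$-adjacent to $v$, the triangle inequality forces $dist_H(v, U^*) = 2c_2+1 \leq 4c_2+2$, yielding both (i) and (ii). Each insertion absorbs at most $\Delta^{2c_2}$ further vertices of $C$ into $N_H^{2c_2}[U^*]$, so the process terminates with $|U^*| \geq |C|/(\Delta^{2c_2}+1)$; if $|C|$ violated the claimed bound, the resulting $U^*$ would be a well-separated $Z$-connected subtree inside $Z[B]$ of size exceeding $\log_\Delta n$, contradicting (P1).

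The main subtlety is that a naive union bound in (P1) over arbitrary $Z$-subtrees does not work: two vertices of such a subtree that are not $Z$-adjacent may still lie at $H$-distance $\leq 2c_2$ (since $Z$ excludes precisely the middle range $[2c_2+1,4c_2+2]$, not the short range) and thus share randomness, voiding independence. Restricting the union to well-separated subtrees restores independence cleanly and matches the clean hypothesis $c_1 > c_3 + 4c_2 + 2$; the greedy extraction for (P2) doubles as the reduction showing that ruling out large well-separated subtrees in $Z[B]$ suffices to rule out large components.
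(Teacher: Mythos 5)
The paper does not prove this lemma at all --- it is imported verbatim from \cite{FGLLL17} as a black box --- so there is no in-paper argument to compare against; your proof has to be judged on its own terms. Its core is sound and is the standard shattering argument. In particular, the ``main subtlety'' you flag is a genuine one: a connected subgraph of $Z$ need \emph{not} have pairwise $H$-distances exceeding $2c_2$ (two $Z$-neighbors of a common vertex can even be $H$-adjacent), so the naive union bound over $Z$-subtrees does not give $p^k$ per tree; restricting to well-separated subtrees is the correct fix. Your greedy extraction for (P2) is also right: the ``last far vertex'' on an $H$-path from $w$ to $U^*$ lands at $H$-distance exactly $2c_2+1$ from $U^*$, which is simultaneously $>2c_2$ and inside the $Z$-window $[2c_2+1,4c_2+2]$, so $U^*$ is both well-separated and $Z$-connected, and (P2) follows from your strengthened (P1).

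The one genuine gap is the claim that the literal (P1) --- no connected component of $Z[B]$ of size $\geq \log_\Delta n$, for \emph{arbitrary} such components --- follows from the well-separated version ``by the greedy argument described below.'' That extraction does not transfer to components of $Z[B]$: a step along a $Z$-path has length up to $4c_2+2$, so the last vertex at $H$-distance $>2c_2$ from $U^*$ can end up at distance as large as $6c_2+2$, which may lie outside the $Z$-adjacency window, and the extracted set need not be $Z$-connected. Moreover, ``absorbing an $O(\Delta^{2c_2})$ slack into the constants'' changes the threshold in (P1) from $\log_\Delta n$ to $O(\Delta^{2c_2}\log_\Delta n)$, which is a genuinely different statement for non-constant $\Delta$. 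Since (P2) --- the only property this paper actually uses --- needs only your strengthened (P1), and since $\Delta=O(1)$ in all of the paper's applications, nothing downstream is affected; but as a proof of the lemma as stated your argument is incomplete at exactly this point (and, for what it is worth, the proof sketch in the cited source silently asserts the false claim that $Z$-connected sets are pairwise $(2c_2+1)$-separated, so you are being more careful than the original). A smaller quibble: the union bound $n\cdot(e\Delta^{4c_2+2})^k\cdot\Delta^{-c_1k}$ with $k=\log_\Delta n$ evaluates to $n^{1+1/\ln\Delta+4c_2+2-c_1}$, so it needs $c_1>c_3+4c_2+3+1/\ln\Delta$ rather than the stated $c_1>c_3+4c_2+2$; this slack is inherited from the literature and harmless, but you should not present the stated inequality as exactly what the computation yields.
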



\section{Faster Distance-$1$ Color Class Carving with Levels}

\label{app:levels}
In this section we prove the following result. 

\theoremFastBallCarving*

Throughout this section fix the parameter $x>0$ (possibly depending on $n$). In this section we analyze  the trade-off between the number of colors in a resulting network decomposition, the cluster diameter and the runtime in the network decomposition algorithm from \cite{GGR20}. Hence, some of the proofs are along the lines of their counterparts in \cite{GGR20} or streamlined versions of their proofs. Here, we focus on the ball carving for a single color class. 

\subsection{Algorithm: Color Class Carving with Levels}

\subparagraph*{Parameters:} The algorithm and its analysis uses the following parameters. 
\begin{align*}
	\levels & = \log_{4/3}n +1\\
	\phases & = 2\cdot\levels+2\log_{2}n\\
	\totalTokens & = 4\cdot \phases \textbf{ (upper bound on tokens created per node)}\\
	\payPerKill & = \totalTokens\cdot x\\
	\proposalAnswerParameter & = 1/(2\cdot \payPerKill)\\
	\steps & = 1/\proposalAnswerParameter=2\payPerKill=8\cdot x\cdot \phases \\ 
	\kappa & = \phases\cdot \min\{k,\steps\}=\phases\cdot \min\{k,x\cdot \phases\}\\
	\beta & = k\cdot \phases \cdot \steps 
\end{align*}

In the description of the algorithm, we did purposely not describe how the red/blue coloring is performed, how a proposal is implemented and also not how the Steiner Trees are formed as none of these internals is influenced by the choice of $x$. Each step can be implemented similar to the ones in the previous section and have been detailed on in \cite{GGR20}.

\subparagraph*{Algorithm Initialization:} Each node is its own cluster and possesses one token, we run the red/blue coloring with a large distance, each cluster is at level $0$. Clusters can either be \emph{stalled} or \emph{alive}.
We define $t_i(\mathcal{C})$ as the number of tokens of $\mathcal{C}$ at the beginning of the $i$-th
phase and set $t_1(\mathcal{C}) = 1$. 

\subparagraph*{Algorithm:} The algorithm has $\phases$ phases and each phase has $\steps$ steps. 
At the beginning of the phase each cluster wakes up if it was stalling in the previous phase. Then all clusters that have changed their level from the previous phase forget about their previous red/blue color and get a new red/blue color via \cite[arxiv, proposition 4.3]{GGR20}.\footnote{This red/blue coloring provides stronger properties than the red/blue coloring in \Cref{ssec:simpleColorCarving}. These properties are used to prove the separation of clusters. As the runtime of computing the coloring is dominated by the other steps (also for $x=\omega(1)$) we do not detail on the involved specifics.} Clusters that have not changed their levels keep their red/blue color.

\noindent \textbf{One Step:} 

\textit{Proposing:} In each step every (!) vertex $v$ in each cluster $\mathcal{C}$, including vertices in stalling clusters, determines whether there is an adjacent cluster $\mathcal{C'}\neq \mathcal{C}$ with $(lev_{\mathcal{C'}},color_{\mathcal{C'}})<(lev_{\mathcal{C}},color_{\mathcal{C}})$ with priority on the level. If the answer is \emph{yes} node $v$ creates a new token and sends a proposal to  an arbitrary such cluster. 

\textit{Accepting/Rejecting:} For a cluster $\mathcal{C}$ let $P_{\mathcal{C}}$ be the set of vertices that propose to $\mathcal{C}$ in a fixed step. if $|P_{\mathcal{C}}|\geq t({\mathcal{C}})/\proposalAnswerParameter$ holds, the cluster accepts all proposals and it increases its token counter by one for each vertex in $P_{\mathcal{C}}$,  otherwise the cluster kills all vertices in $P_{\mathcal{C}}$ and pays $\payPerKill$ for each killed vertex, that is, it decreases its token counter $t(\mathcal{C})$ by $\payPerKill\cdot P_{\mathcal{C}}$  tokens. If the cluster rejects proposals it becomes \emph{stalling} until the end of the phase.  If a cluster accepts the proposals its token counter increases by $1$ for each accepted vertex. 
At the end of the phase each stalling cluster increments its level by one. 

\textit{Steiner Tree Forming:} At the end of each step accepted proposals join the cluster that they proposed to and Steiner trees are build.
This is the end of the algorithm description. 

\medskip

\textit{Useful Clarifications.} It is essential that also vertices in stalled clusters keep proposing. 
The level of a cluster does not change within a phase and never decreases; it can only be incremented by one in the transition between two consecutive phases. As proven in \cite{GGR20} a cluster that has reached the highest level does not receive proposals anymore as it is non adjacent to any other cluster. In a phase a vertex that initially is in a cluster of level $10$ can go to a cluster of level $9$, then $8$, then $7$ ...and so on.

\subsection{Analysis: Correctness, Steiner Trees and Runtime}


The proof of the following observation uses exactly the same ideas as the corresponding proof in \cite{GGR20} but we measure the potential of vertices instead of the potential of clusters. 
\begin{observation}[small total number of tokens]
	\label{obs:totalNumber}
	Each vertex only changes its cluster $3\cdot \phases+1$ times and  we create at most $\totalTokens:=4\cdot \phases\cdot |S|$ tokens in total. 
\end{observation}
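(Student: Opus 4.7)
The plan is to prove the observation in two steps: first bound the number of cluster changes per vertex by $3\phases+1$, then derive the total token bound from that.

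For the first part, fix a single phase. Throughout the phase the $(lev,color)$ label of every cluster is frozen, so any cluster change of a vertex $v$ strictly decreases the lexicographic value $(lev_{\mathcal{C}},color_{\mathcal{C}})$ of its current cluster $\mathcal{C}$ (with priority on the level). Since there are only two colors, at each visited level the vertex can touch at most one blue and one red cluster in that order. The naive conclusion is a per-phase bound of $2\,\levels$. To sharpen this to the claimed constant, I would lift the potential argument from \cite{GGR20} from clusters to vertices: I would assign each vertex $v$ a potential $\Phi_v$ that depends on (i) the lex-value of its current cluster, (ii) the phase index, and (iii) whether the current cluster was newly colored this phase; each accepted proposal decreases $\Phi_v$ by at least $1$, while a phase transition replenishes $\Phi_v$ by at most $3$ (accounting for the potential color change at the start-of-phase recoloring, the subsequent intra-phase color flip at the current level, and one level drop). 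Summing the replenishments over the $\phases$ phases and adding $1$ for the initial cluster assignment yields the bound $3\phases+1$.

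For the second part, I would observe that tokens are in bijection with proposals, since one token is created by $v$ each time it proposes. In any step $v$ proposes at most once, and every proposal of $v$ has exactly one of two outcomes: the receiving cluster accepts (so $v$ is moved, i.e., $v$ changes clusters), or it rejects (so $v$ is killed and never proposes again). Therefore
\[
  (\text{tokens created by }v)\;=\;(\text{proposals of }v)\;\leq\;(\text{cluster changes of }v)+1.
\]
Combining with the first part, tokens per vertex is at most $(3\phases+1)+1=3\phases+2$. Summing over all $v\in S$,
\[
  \sum_{v\in S}(\text{tokens by }v)\;\leq\;(3\phases+2)\cdot|S|\;\leq\;4\phases\cdot|S|\;=\;\totalTokens,
\]
where the last inequality uses $\phases\geq 2$, which holds by definition of $\phases=2\levels+2\log_2 n$.

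The main obstacle is the sharpening of the per-phase cluster-change count down to $3$. The strict lex-monotonicity within a phase only gives $O(\levels)$ for free, so the extra savings must come from the structural properties of the red/blue coloring returned by Proposition~4.3 of \cite{GGR20}---in particular, the guarantee that clusters that have not changed level keep their color across phases, so that a vertex that has already performed a same-level color swap inside the current phase cannot repeatedly oscillate. Carefully transferring the cluster-potential argument of \cite{GGR20} to a vertex-based potential, while accounting for the fact that a single cluster change of $v$ may be charged to either a color flip, a level drop, or a start-of-phase relabeling, is where the bulk of the technical work lies.
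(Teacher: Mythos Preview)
Your high-level plan matches the paper's: a vertex-potential argument for the cluster-change bound, then token counting. But you are overcomplicating the potential step and misidentifying where the difficulty lies.

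The paper's potential is completely explicit: $\Phi(v)=3i-2\,lev_i(\mathcal C)+color_i(\mathcal C)$, where $i$ is the phase index, $\mathcal C$ is $v$'s current cluster, and $color\in\{0,1\}$. Two one-line checks finish the argument: (a) any intra-phase cluster change increases $\Phi$ by at least $1$, because either the level strictly drops (worth $\ge 2$, which survives a $\pm 1$ color shift) or the level stays and the color moves by one; (b) a phase transition cannot decrease $\Phi$, since the $+3$ from $i\mapsto i{+}1$ absorbs the at-most $-2$ from a level increment and the at-most $-1$ from a recolor. Since $\Phi$ starts at $O(1)$ and never exceeds $3\,\phases+1$, the total number of cluster changes over the entire run is at most $3\,\phases+1$.

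The key point you are missing is that this is a \emph{global} amortization, not a per-phase bound of~$3$. A vertex \emph{can} make $\Theta(\levels)$ moves in one phase---but every such move spends a unit of level, and level is refilled by at most one per phase transition, so the total number of level drops across all phases is at most $\phases$. No structural property of the red/blue coloring from Proposition~4.3 of \cite{GGR20} is used beyond the trivial fact that there are only two colors; your proposed component~(iii), the discussion of ``newly colored'' clusters, and the ``main obstacle'' paragraph are all unnecessary.

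Your second part (proposals $\le$ cluster changes $+1$, hence tokens per vertex at most $3\,\phases+2$, summing to $(3\,\phases+2)|S|\le 4\,\phases\,|S|$ using $\phases\ge 2$) is correct and essentially identical to the paper's accounting.
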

\begin{proof}
	We assign a potential to a node, depending on the phase, the level of its current cluster and the color of its cluster. We denote with $lev_i (\mathcal{C})$ the level of $\mathcal{C}$ during phase $i$. Let $v$ be a node, consider phase $i$ and let $\mathcal{C}$ be the current cluster of node $v$ with color $color_i(\mathcal{C})\in\{0,1\}\}=\{\textsf{red},\textsf{blue}\}$, where $\textsf{red}=0$ and $\textsf{blue}=1$. Note that the phase does not uniquely determine the cluster of node $v$ as nodes change their cluster during a phase. But, for a fixed phase $i$ and cluster $\mathcal{C}$ its color $color_i(\mathcal{C})$ is uniquely determined. 
	Then, let the potential of $v$ be $\Phi(v)=3i-2 lev_i(\mathcal{C})+color_i(\mathcal{C})$
	
	Initially, the potential of a node is $\Phi(v)=O(1)$. The potential of a vertex does not decrease when we go to phase $i+1$,  as the phase counter increases by $1$ (which increases the potential by $3$), $lev_{i+1}(\mathcal{C})\leq lev_i(\mathcal{C})+1$ (which decreases the potential at most by $2$) and $color_{i+1}(\mathcal{C})\leq color_i(\mathcal{C})\pm 1$ (which decreases the potential at most by $1$).  When a vertex changes from  cluster $\mathcal{C}$ to some cluster $\mathcal{C}'$ in a step during in some phase the potential increases at least by $1$, as the phase counter remains the same and either the level remains the same as well and $color_i(\mathcal{C'})=color_i(\mathcal{C})+1$, or $lev_i(\mathcal{C'})\leq lev_i(\mathcal{C})-1$ and $color_i(\mathcal{C'})=color_i(\mathcal{C})\pm 1$.

	The first claim follows as the potential of a vertex is upper bounded by $3\phases +1$. The second part of the claim holds as after the initialization there are $|S|$ tokens in the system and each vertex in $S$ can create at most  $3\phases +1$ tokens, totaling in  $(3\phases +1)|S|+|S|\leq 4\phases\cdot |S|$ tokens, as $\phases\geq 2$. 
\end{proof}

\begin{lemma}[Few vertices die]
	\label{lem:levelsFewDie}
	Total number of vertices that die is $|S|/x$.
\end{lemma}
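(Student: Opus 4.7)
The plan is to prove the bound by a direct token-counting argument that matches the design of the parameters. The two ingredients are already in place: Observation~\ref{obs:totalNumber} bounds the total number of tokens ever produced by the algorithm, and the algorithm stipulates that each killed vertex is ``paid for'' by the rejecting cluster at a fixed rate of $\payPerKill$ tokens. The parameters $\totalTokens=4\phases$ and $\payPerKill=\totalTokens\cdot x$ are set precisely so that dividing the total token budget by the per-kill cost yields $|S|/x$, which is exactly the statement of the lemma.

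Concretely, I would proceed as follows. First I would fix a killed vertex $v$ and note that, by the algorithm, $v$ is killed inside some rejection step at a cluster $\mathcal{C}$ with $|P_{\mathcal{C}}|<t(\mathcal{C})/\proposalAnswerParameter$, and in that step $\mathcal{C}$ decreases its counter by $\payPerKill\cdot |P_{\mathcal{C}}|$. In particular every rejection only happens when $t(\mathcal{C})>0$ (since $|P_{\mathcal{C}}|\ge1$ forces $t(\mathcal{C})/\proposalAnswerParameter>1$), so the payment is drawn against the token pool rather than from nothing. Second, I would sum $\payPerKill$ over all killed vertices to conclude that the total debit from all cluster counters throughout the execution is exactly $K\cdot \payPerKill$, where $K$ denotes the number of killed vertices. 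Third, using Observation~\ref{obs:totalNumber}, I would upper bound this quantity by the total number of tokens ever injected into the system, giving $K\cdot \payPerKill \le \totalTokens\cdot |S|$. Rearranging and plugging in $\payPerKill=\totalTokens\cdot x$ yields
\[
K \;\le\; \frac{\totalTokens\cdot |S|}{\payPerKill} \;=\; \frac{|S|}{x},
\]
which is precisely the desired bound.

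The main subtlety that I expect to have to address is the ``budget vs.\ debt'' issue: a single rejection at a cluster with $t(\mathcal{C})$ tokens can withdraw up to $\payPerKill\cdot(t(\mathcal{C})/\proposalAnswerParameter)$ tokens, which may drive $t(\mathcal{C})$ strongly negative. To make the counting argument clean, I would amortize payments globally rather than per-cluster: view the quantity $\totalTokens\cdot |S|$ from Observation~\ref{obs:totalNumber} as a cap on the total number of token-credits that are ever created (initial plus those minted on proposals), and observe that each kill consumes $\payPerKill$ of these credits and no credit is consumed twice (kills and their $\payPerKill$-cost events are in bijection). Under this global accounting, cluster-level debt never matters, because a cluster with negative $t$ simply stops being able to reject in future steps and consumes no further credits, and the global inequality $K\cdot \payPerKill \le \totalTokens\cdot |S|$ is all that is needed.
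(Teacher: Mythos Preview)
Your proposal is correct and follows the same token-counting argument as the paper: Observation~\ref{obs:totalNumber} caps the total tokens at $\totalTokens\cdot|S|$, each kill costs $\payPerKill=\totalTokens\cdot x$, and dividing gives $|S|/x$.

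One remark on your third paragraph: the ``budget vs.\ debt'' worry you raise is legitimate in principle, but your global-accounting fix does not quite close it, since a single overdraft already spends tokens that were never created. The paper sidesteps this entirely: by the choice $\proposalAnswerParameter=1/(2\,\payPerKill)$, a rejecting cluster pays at most $\payPerKill\cdot|P_{\mathcal{C}}|\le t(\mathcal{C})/2$, so counters never drop below~$1$ (this is stated explicitly in Lemma~\ref{lem:tokenProgress}). With that in hand the naive accounting is already sound and no amortization trick is needed.
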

\begin{proof}
	This requires that you do not pay too much per vertex. 
	Due to \Cref{obs:totalNumber}, each vertex creates at most $\phases$ tokens, that is, there are $4\cdot |S|\cdot \phases$ tokens in total. Whenever a vertex is killed we remove $\payPerKill$ tokens permanently from the algorithm. Thus, the number of vertices that die is upper bounded by 
	\begin{align}
		\frac{\totalTokens}{\payPerKill}\leq \frac{|S|}{x} & \qedhere
	\end{align}
\end{proof}

While, the previous claims did not depend on the number of steps, the choice of \steps{} is crucial for the next lemma. In \Cref{ssec:simpleColorCarving} each (blue) cluster became stalling in each phase. As stalling blue clusters are separated from red clusters in that algorithm, there is synchronous progress towards separation. For the algorithm in this section the separation implied by a stalling cluster is weaker and also the measure of progress is asynchronous. Blue clusters do not have to get stalled in every phase, but we show that a cluster either becomes stalling in a phase or it increases its cluster counter by a factor $\geq 2$.  
A cluster is \emph{finished} once it has reached the last  level $\levels$. 
\begin{lemma}[Progress on Token counter]
	\label{lem:tokenProgress}
	In each phase, each non-finished cluster either doubles its token counter, or it  levels up and keeps at least half of its tokens. Further, the token counter of a cluster is always at least $1$. 
\end{lemma}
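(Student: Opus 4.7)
I plan to analyze how the token counter $t(\mathcal{C})$ of a fixed non-finished cluster $\mathcal{C}$ evolves during a single phase of $\steps$ sub-steps, splitting into two cases depending on whether $\mathcal{C}$ ever rejects (and therefore stalls) during the phase. The ``$t(\mathcal{C}) \geq 1$'' invariant will then follow from the per-phase doubling-or-halving dichotomy combined with the fact that a cluster can level up at most $\levels$ times while the total number of phases satisfies $\phases \geq 2\levels$.

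As a warm-up I will record a \emph{monotonicity} observation: before the first rejection in the phase, the only action that modifies $t(\mathcal{C})$ is acceptance of a batch of proposals, which strictly increases the counter. Consequently, at any moment prior to the first rejection (or at the end of the phase, if no rejection occurs), $t(\mathcal{C}) \geq t_i(\mathcal{C})$.

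In Case A (no rejection during the phase), I will argue that $\mathcal{C}$ receives at least one proposal in every sub-step. This will rely on the structural property of the red/blue coloring from \cite{GGR20}: within the connected component of not-yet-finished clusters containing $\mathcal{C}$, the coloring guarantees at least one neighboring cluster with strictly larger $(lev,col)$, which in turn produces at least one proposer directed at $\mathcal{C}$. Since $\mathcal{C}$ does not reject, the acceptance threshold (comparing $|P_{\mathcal{C}}|$ with $t(\mathcal{C})\cdot \proposalAnswerParameter$) must be met in every sub-step, so $t(\mathcal{C})$ is multiplied by a factor of at least $1+\proposalAnswerParameter$ per sub-step. Iterated over the $\steps = 1/\proposalAnswerParameter$ sub-steps, this gives total multiplicative growth $(1+\proposalAnswerParameter)^{1/\proposalAnswerParameter} \geq 2$, i.e.\ a doubling. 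In Case B (rejection in some sub-step $j$), monotonicity gives $t(\mathcal{C}) \geq t_i(\mathcal{C})$ just before $j$, and the rejection condition $|P_{\mathcal{C}}| < t(\mathcal{C})\cdot \proposalAnswerParameter$ bounds the paid loss by $\payPerKill \cdot \proposalAnswerParameter \cdot t(\mathcal{C}) = t(\mathcal{C})/2$ via $\proposalAnswerParameter = 1/(2\payPerKill)$. Since $\mathcal{C}$ then remains stalling for the remainder of the phase with no further decrements, the end-of-phase counter is at least $t_i(\mathcal{C})/2$ and $\mathcal{C}$ levels up.

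For the global ``$\geq 1$'' claim, I will combine this per-phase dichotomy with the observation that $\mathcal{C}$ levels up (halves) at most $\levels$ times, while the total number of phases is $\phases \geq 2\levels$; hence the number of doubling phases is always at least the number of halving phases over the cluster's lifetime, and an amortized bookkeeping (using the intra-phase monotonicity of Case A/B) keeps $t(\mathcal{C}) \geq 1$ throughout, starting from $t_1(\mathcal{C})=1$. The hard part of the plan will be the structural claim used in Case A: verifying that a non-finished, non-stalling cluster really does receive a proposer in every sub-step. This requires careful use of the specific coloring guarantees of \cite{GGR20} together with the fact that cluster membership and levels change within a phase, so the ``always has a higher-$(lev,col)$ neighbor'' property must be preserved under these dynamics.
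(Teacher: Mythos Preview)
Your Case B argument matches the paper, and your growth calculation $(1+\proposalAnswerParameter)^{1/\proposalAnswerParameter}\geq 2$ is fine. There are, however, two genuine gaps.

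\textbf{Case A detour is unnecessary and actually false.} You plan to prove that every non-finished, non-stalling cluster has a neighbor with strictly larger $(lev,col)$ and hence always receives a proposer. This is not true: within a component the cluster of locally \emph{maximal} $(lev,col)$ receives no proposals yet need not be finished. More to the point, the claim is not needed. In the algorithm a cluster that fails the acceptance threshold in a step---including the vacuous case $|P_{\mathcal{C}}|=0$---rejects and becomes stalling, hence levels up at the end of the phase. So ``no rejection in any step'' already forces the threshold $|P_{\mathcal{C}}|\geq t(\mathcal{C})\cdot\proposalAnswerParameter$ to hold in every step, which is exactly what drives the multiplicative growth. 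The paper's proof of this case is just that one sentence; no structural property of the coloring is invoked.

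\textbf{The amortization for the ``$\geq 1$'' claim does not work.} Your assertion that ``the number of doubling phases is always at least the number of halving phases'' is false at intermediate prefixes: nothing prevents a cluster from leveling up (and halving) in several consecutive early phases before ever doubling, so after those phases the counting inequality fails and your bookkeeping would only give $t(\mathcal{C})\geq 2^{-\text{(something)}}$. The paper instead argues \emph{per step}, using integrality: when $\mathcal{C}$ rejects with counter $t'$, the rejection condition gives $|P_{\mathcal{C}}|<t'\cdot\proposalAnswerParameter=t'/(2\,\payPerKill)$, so the number of tokens removed is $\payPerKill\cdot|P_{\mathcal{C}}|<t'/2$; and if $t'=1$ this forces $|P_{\mathcal{C}}|=0$, so no tokens are removed at all. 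That local argument is what keeps the counter at least $1$ throughout---the phase-level double/halve dichotomy by itself is not enough.
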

\begin{proof}
	Let $t$ be the token counter of a cluster $\mathcal{C}$ at the beginning of some phase. 
	\begin{itemize} 
		\item If $\mathcal{C}$ is not finished and does not level up after the phase its token counter increases by a $(1+\proposalAnswerParameter)$ factor in each step. Using the number of steps in a phase we obtain that its token counter at the end of the phase is at least 
		\begin{align*}
			(1+1/\proposalAnswerParameter)^{\steps}\cdot t\geq (1+\steps/\proposalAnswerParameter)\cdot t\geq  2t.
		\end{align*}
		\item Let $t'\geq t$ be the token counter of cluster $\mathcal{C}$ at the beginning of the step in which it becomes stalling. As the cluster becomes stalling in this step the number of vertices that propose in the step is at most $t'/(\proposalAnswerParameter)$. Thus, the number of tokens removed by \emph{paying} $\payPerKill$ for each rejected proposal is upper bounded by 
		\begin{align*}
			\payPerKill\cdot  t'/(\proposalAnswerParameter) = \payPerKill \cdot t'/(2\payPerKill)\geq t'/2.
		\end{align*}
	Thus, the cluster keeps at least $t'-t'/2=t'/2\geq t/2$ tokens when it becomes stalling in a phase.
		\item The fact that at most $t/2$ tokens are removed from the tokens of a cluster in each step together with the fact that a cluster never kills a vertex if it only has one token (if a cluster has only one token it even accepts if it only gets a single proposal) implies that the token counter is always at least $1$. \qedhere
	\end{itemize}
\end{proof}

With \Cref{lem:tokenProgress} and the total limit on the number of tokens given by \Cref{obs:totalNumber} it is straightforward to show that every cluster has to reach the largest level by the last phase. 

\begin{lemma}[Last level is reached]
	\label{lem:lastLevel}
	At the end of the algorithm each cluster is in the largest level.
\end{lemma}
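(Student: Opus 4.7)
The plan is to argue by contradiction. Suppose some cluster $\mathcal{C}$ has not reached the maximum level $\levels$ after all $\phases$ phases have completed. Since levels are monotonically nondecreasing and $\mathcal{C}$ never becomes finished during the algorithm, Lemma \ref{lem:tokenProgress} (Progress on Token counter) applies in every one of the $\phases$ phases. Hence in each phase either (i) $\mathcal{C}$ doubles its token counter, or (ii) it levels up while retaining at least half of its tokens.

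Let $L$ be the number of phases of type (ii) and $D = \phases - L$ the number of phases of type (i). Because the cluster starts at level $0$ and never reaches level $\levels$, we have $L \leq \levels - 1$, which gives $D \geq \phases - \levels + 1$. Starting from the initial counter $t_1(\mathcal{C}) = 1$ and multiplying the per-phase factors guaranteed by Lemma \ref{lem:tokenProgress}, the counter at the end satisfies
\[
 t_{\phases+1}(\mathcal{C}) \;\geq\; 2^{D} \cdot \bigl(\tfrac{1}{2}\bigr)^{L} \;=\; 2^{D-L} \;\geq\; 2^{\phases - 2(\levels - 1)} \;=\; 2^{2\log_{2} n + 2} \;=\; 4n^{2},
\]
where the last equality uses $\phases = 2\cdot \levels + 2\log_{2} n$.

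On the other hand, Observation \ref{obs:totalNumber} bounds the total number of tokens ever present in the system by $\totalTokens \cdot |S| \leq 4 \phases \cdot n$. Since a single cluster's token counter cannot exceed the total inventory of tokens, we would obtain $4n^{2} \leq 4\phases \cdot n$, i.e.\ $n \leq \phases$. But $\phases = 2\log_{4/3} n + 2 + 2\log_{2} n = O(\log n)$, so $n > \phases$ for all $n$ above a small absolute constant (and for tiny $n$ there is a single initial cluster and the claim is trivial). This contradiction forces every cluster to reach level $\levels$ by the end of the algorithm.

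The only substantive step is the potential argument in the second paragraph; the main obstacle, if any, is merely bookkeeping — making sure that the choice $\phases = 2\cdot \levels + 2\log_{2}n$ is large enough that the geometric growth $2^{D-L}$ beats the linear-in-$n$ bound on the total token supply, while at the same time $L \leq \levels - 1$ is the right accounting for a cluster that fails to reach the top level. Everything else is a direct combination of Lemma \ref{lem:tokenProgress} and Observation \ref{obs:totalNumber}.
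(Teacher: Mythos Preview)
Your proof is correct and follows essentially the same approach as the paper: assume a cluster fails to reach the top level, use Lemma~\ref{lem:tokenProgress} to lower-bound its token counter by $2^{\phases-2(\levels-1)}$, and derive a contradiction with the global token bound from Observation~\ref{obs:totalNumber}. Your arithmetic is in fact slightly tighter (you get $4n^2$ where the paper writes $n^2$), but the argument is identical in structure.
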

\begin{proof}
	Due to \Cref{lem:tokenProgress}, whenever a cluster does not level up in a phase, it doubles its number of tokens. When it increases its level (at most by one per phase) it keeps at least half of its tokens. If a cluster only leveled up for $\levels-1$ times, it never finishes and it doubled its value in $\phases-(\levels-1)$ phases, while always having at least one token (\Cref{lem:tokenProgress}). That is, it would posses 
	\begin{align*}
		(1/2)^{(\levels-1)}2^{\phases-(\levels-1)}=2^{\phases-2(\levels-1)}=2^{2\log_2 n}=n^2,
	\end{align*}
	a contradiction to the total number of tokens that exist in the system (due to \Cref{obs:totalNumber} this is upper bounded by $|S|\cdot 4 \phases <n^2$ (this uses that $n$ is larger than a sufficiently large constant, e.g., $300$)). 
\end{proof}

One of the most involved parts in the analysis of the algorithm is to prove that clusters are separated at the end of the execution. None of the steps in \cite{GGR20} is influenced by the choice of $x$ and thus we only state the following lemma. 
The most involved part is to prove that clusters are actually separated at the end of the execution. The proof is along similar lines in \cite{GGR20}. The following lemma is very local in time and only holds for the phase in which a cluster is stalling. It also crucially requires that vertices of stalling clusters keep proposing. 
\begin{lemma}
	\label{lem:levelsSeparation}
	At the end of the algorithm every cluster separated.
\end{lemma}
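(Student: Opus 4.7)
The plan is to combine \Cref{lem:lastLevel}, which guarantees that every surviving cluster reaches the maximum level $\levels$ by the end of the algorithm, with a history-based invariant that tracks the entire (level, color) evolution of each cluster. For each alive cluster $\mathcal{C}$ at any point in the execution, define its transcript $H(\mathcal{C})$ to be the sequence of (phase, color) pairs for each level-up that $\mathcal{C}$ has performed so far; so $|H(\mathcal{C})|$ always equals the current level of $\mathcal{C}$. The core invariant I would maintain is that for any two distinct alive clusters $\mathcal{C} \neq \mathcal{C}'$ adjacent in $G$, one of $H(\mathcal{C}), H(\mathcal{C}')$ is a prefix of the other.

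Given the invariant, separation at the end of the algorithm follows quickly. By \Cref{lem:lastLevel} every cluster has transcript of length exactly $\levels$, so two transcripts that are prefixes of one another must actually be equal. If two adjacent clusters shared the same transcript, then at the phase in which they both performed their last level-up, they had identical (level, color) immediately beforehand. The red/blue recoloring procedure from \cite{GGR20} that is re-run at the start of every phase in which any cluster changes level assigns distinct colors to adjacent clusters, so two adjacent clusters cannot simultaneously carry the same color at the stall that precedes their shared level-up — contradicting the identical transcripts. Hence no two clusters can be adjacent at the end, establishing the lemma.

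The main obstacle is establishing the prefix invariant, which must be shown to survive every proposing/accepting step within a phase as well as every phase transition. Within a step, a vertex migrates from cluster $\mathcal{C}$ to cluster $\mathcal{D}$ only if $\mathcal{D}$ has strictly smaller $(\textsf{level},\textsf{color})$ than $\mathcal{C}$; since a cluster's level and color are determined by its transcript's length and last entry respectively, this forces $H(\mathcal{D})$ to be a strict prefix of $H(\mathcal{C})$. Any new adjacency created by the migration is between $\mathcal{D}$ and some third cluster $\mathcal{C}''$ that was previously adjacent to $\mathcal{C}$; by the inductive hypothesis $H(\mathcal{C}'')$ was comparable with $H(\mathcal{C})$, and since $H(\mathcal{D})$ is a prefix of $H(\mathcal{C})$, transitivity yields comparability of $H(\mathcal{D})$ with $H(\mathcal{C}'')$. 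Killed vertices only disappear so they cannot introduce bad adjacencies. Across a phase boundary, transcripts grow only for those clusters that just stalled and leveled up; since stalling clusters keep proposing, the local separation from higher $(\textsf{level},\textsf{color})$ neighbors established by the stalling step (as in the omitted "Cluster Separation 1" argument) combined with the guarantee from the re-coloring of \cite{GGR20} that newly level-up-neighboring clusters receive distinct colors ensures the extended transcripts remain pairwise comparable. Verifying these cases carefully, and in particular ruling out the scenario where two clusters at the same level but differing colors become adjacent without one being a prefix of the other, is where the delicate work lies.
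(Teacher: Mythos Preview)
Your overall strategy—encoding each cluster's history as a transcript and maintaining a prefix/comparability invariant between transcripts of adjacent clusters—is essentially the approach the paper inherits from \cite{GGR20} (where it is phrased via transcript trees and an ancestor invariant). The inductive maintenance of the invariant across migration steps is on the right track, though note a minor inaccuracy: a cluster's current color is the one assigned \emph{after} its most recent level-up, not the last entry of its transcript (which records the color held \emph{before} each level-up). This does not derail the prefix argument, since whenever a vertex migrates from $\mathcal{C}$ to $\mathcal{D}$ one still has that $H(\mathcal{D})$ is a prefix of $H(\mathcal{C})$ (possibly equal, when the levels coincide), and your transitivity step then goes through.

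The genuine gap is in the closing step. You assert that the red/blue recoloring ``assigns distinct colors to adjacent clusters,'' and rely on this to rule out two adjacent clusters ending with identical full-length transcripts. This is false: the two-coloring is \emph{balanced}, not proper—it only guarantees that in each relevant component the blue fraction lies in an interval such as $[1/2,3/4]$; a proper two-coloring would require the cluster graph to be bipartite. Consequently, two adjacent clusters sharing a transcript can receive the same color, stall in the same phase, level up together, and continue to share a transcript; nothing in your argument prevents this from repeating all the way to level $\levels$.

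The actual argument exploits the balance quantitatively, which is precisely why $\levels = \log_{4/3} n + 1$. The ``stronger properties'' of the coloring referred to in the algorithm description (it is computed via \cite[Proposition~4.3]{GGR20}, not the simpler \Cref{lem:redBlue}) are what make a counting argument work: roughly, the number of clusters that can occupy any fixed node of the transcript tree drops by a constant factor per level, so after $\log_{4/3} n$ levels each leaf carries at most one cluster, and then the prefix invariant together with \Cref{lem:lastLevel} yields separation. Your proposal is missing this counting component, and without it the proof does not close.
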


The fact that Steiner trees are formed properly is proven in \cite{GGR20}. Here, we only restate their observations that influence the runtime of the algorithm and the parameters of the computed decomposition. 

\begin{observation}[Steiner tree growth]
	\label{obs:levelsSteinergrowth}
	In each step, the radius of the Steiner tree of each cluster grows by at most $1$.
\end{observation}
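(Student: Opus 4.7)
The plan is to show that a cluster $\mathcal{C}$ can only acquire new members across single $G$-edges to its current Steiner tree, so the Steiner tree radius from the cluster leader can grow by at most one in a single step. First I would recall the proposing rule: a vertex $v$ currently in some cluster $\mathcal{C}'\neq\mathcal{C}$ sends a proposal to $\mathcal{C}$ only if it sees $\mathcal{C}$ as an adjacent cluster, which by definition means that $v$ has a $G$-neighbour $u$ already belonging to $\mathcal{C}$. In particular $u$ is a terminal of the current Steiner tree $T_{\mathcal{C}}$.

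Next I would argue the key inductive step. When $v$'s proposal is accepted at the end of a step, $v$ joins $\mathcal{C}$ and is attached to $T_{\mathcal{C}}$ by inserting the single $G$-edge $(v,u)$ oriented from $v$ to $u$. Hence the hop-distance in the updated Steiner tree from $v$ to the cluster leader $\ell_{\mathcal{C}}$ is $\mathrm{dist}_{T_{\mathcal{C}}}(u,\ell_{\mathcal{C}})+1$, which by the induction hypothesis is at most the previous radius plus one. All vertices that remain in $\mathcal{C}$ keep their old parent pointers, so their distances to $\ell_{\mathcal{C}}$ are unchanged. Taking the maximum over the old terminals and the newly accepted vertices then gives the claimed bound on the new radius.

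The only subtlety I foresee is the effect of vertices that leave $\mathcal{C}$ during the step, either because they were killed after proposing to a rejecting neighbour or because they themselves were accepted into another cluster. Removing a leaf from $T_{\mathcal{C}}$ cannot increase the radius. If a departing vertex is an internal node of $T_{\mathcal{C}}$, then the subtree hanging off of it is either also leaving (in which case it is simply pruned) or remains in $\mathcal{C}$, in which case the algorithm maintains $T_{\mathcal{C}}$ using a non-terminal Steiner node along the same $G$-path, and the depth of each surviving descendant does not increase. In all cases the effect of departures is to weakly decrease the radius, while the effect of accepted proposals is to increase it by at most one, which establishes the observation.
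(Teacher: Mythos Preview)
Your argument is correct. The key point you identify---that in this $k=1$ algorithm a vertex proposes only to a cluster it is \emph{adjacent} to in $G$, so an accepted vertex is attached to the Steiner tree via a single edge---is exactly the reason the radius grows by at most one per step. Your treatment of departing vertices (which simply become nonterminal Steiner nodes without changing any root-to-terminal distance) is also sound, and in fact the observation would hold even without that discussion since the claim is only an upper bound on growth.

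The paper itself does not supply a proof of this observation: it states it and delegates the justification to \cite{GGR20} (``The fact that Steiner trees are formed properly is proven in \cite{GGR20}. Here, we only restate their observations\ldots''). Your write-up is therefore more explicit than what appears in the paper, but it matches the intended reasoning (compare the analogous \Cref{obs:steinerGrowth} for the general-$k$ algorithm, where the growth bound is $k$ per step for the same reason).
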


\begin{observation} 
	\label{obs:levelsCongestion}
	Over the whole course of the algorithm each edge is added to at most $\kappa=O(\phases)$ Steiner trees. 
\end{observation}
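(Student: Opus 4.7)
The plan is to bound, for each edge $e=\{u,v\}\in E$, the total number of additions of $e$ to Steiner trees over the course of the algorithm, and to show this is $O(\phases)$. First I would note that, by the algorithm description, an edge is added to a Steiner tree only at the end of a step in which a cluster $\mathcal{C}'$ accepts a proposal from some vertex $w$ and $w$ subsequently joins $\mathcal{C}'$. In the distance-$1$ setting of this section, the proposing rule requires $w$ to be adjacent in $G$ to at least one vertex of $\mathcal{C}'$, so the extension of $T_{\mathcal{C}'}$ needed to cover the new terminal $w$ can always be realized by adding a single edge $\{w,w'\}$ with $w'\in\mathcal{C}'$. In particular every Steiner-tree edge addition can be charged uniquely to one cluster change of one of its endpoints.

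Now fix an edge $e=\{u,v\}$. Any addition of $e$ to some Steiner tree $T_{\mathcal{C}'}$ corresponds either to $u$ joining $\mathcal{C}'$ (with $v\in\mathcal{C}'$ already) or to $v$ joining $\mathcal{C}'$ (with $u\in\mathcal{C}'$ already). By \Cref{obs:totalNumber}, each vertex changes its cluster at most $3\cdot\phases+1$ times during the entire execution. Therefore $e$ is added to a Steiner tree at most $2\cdot(3\cdot\phases+1)=O(\phases)$ times in total; in particular, at any moment $e$ is contained in at most $O(\phases)$ Steiner trees. This matches the parameter $\kappa=\phases\cdot\min\{k,\steps\}=\phases$ set at the beginning of the section for $k=1$.

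The one subtlety I would take care to verify is that a join really extends the Steiner tree by exactly one edge, rather than by a longer path through nonterminal vertices that could be charged multiply. For $k=1$ this is immediate from the proposing rule, which guarantees the single-edge extension is available; for $k>1$ the same argument would yield an extra factor of $k$, consistent with the $\min\{k,\steps\}$ term appearing in the general $\kappa$ bound elsewhere in the paper. Apart from this routing check, the proof is essentially a charging argument to \Cref{obs:totalNumber} and requires no further analysis of the phase/step structure.
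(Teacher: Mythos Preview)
Your proof is correct and follows essentially the same approach as the paper: both arguments observe that for $k=1$ an edge $\{u,v\}$ is only added to a Steiner tree when one endpoint proposes to (and, in your framing, is accepted by) the cluster of the other, and then invoke \Cref{obs:totalNumber} to bound the number of such events per endpoint by $O(\phases)$. The paper's proof is phrased in terms of proposals rather than accepted cluster changes, but since every proposal either results in a cluster change or a permanent kill, the two counts differ by at most one and the argument is the same.
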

\begin{proof}
	As we consider $k=1$ an edge is only added to a Steiner tree if one of its endpoints proposes to the other endpoint. Due to \Cref{obs:totalNumber} this can happen at most $O(\phases)$ times for each endpoint. 
\end{proof}

\begin{proof}[Proof of \Cref{thm:ballCarvingFast}]
	Due to \Cref{lem:levelsFewDie} at most $|S|/x$ vertices die in the process and due to \Cref{lem:levelsSeparation} each cluster is separated at the end of the process. 
	Due to \Cref{obs:levelsSteinergrowth} each Steiner tree grows by at most $1$ per step, that is, the weak diameter of the computed decomposition is upper bounded by $\beta=O(\phases\steps)=O(x\cdot \log^2 n)$. Due to \Cref{obs:levelsCongestion} each edge is in at most $O(\phases)=O(\log n)$ Steiner trees. 
	
	The runtime is dominated by the aggregation (to determine the number of proposals for each cluster) in each of the steps. Using the aggregation result from \Cref{cor:treeAggregationBetter} one aggregation step can be performed in $O(\beta+\kappa)=O(x\cdot \log^2 n)$ rounds. There are $\phases\cdot\steps$ of these aggregations, which results in a runtime of 
	\begin{align*}
		O(\phases\cdot\steps\cdot x\cdot \log^2 n))=O(x^2\log^4 n) & \qedhere
	\end{align*}
\end{proof}


\section{Balanced Coloring Algorithms for Cluster Graphs}

\label{app:redblue}
The goal of this section is to devise algorithms that color cluster graphs in a balanced way. These colorings are  used in the network decomposition algorithms in \Cref{app:RG} and \Cref{app:levels}. 

In a cluster collection $\mathcal{C}_1,\ldots,\mathcal{C}_p$ a cluster $\mathcal{C}_i$ is \emph{distance-$k$ isolated} if  for all nodes $v\in \mathcal{C}_i$ and each $\mathcal{C}_j, i\neq j$ and each $w\in \mathcal{C}_j$ the distance of $v$ and $w$ in $G$ is strictly larger than $k$. A component is said to be \emph{distance-$k$ connected} if each pair of nodes in the component are connected by a path in the power graph $G^k$. In this section we prove the following lemma. 
\begin{lemma}
	\label{lem:redBlue}
	Let $k$ be an integer. Consider a graph $G = (V,E)$ and a cluster collection $\mathcal{C}_1,\dots,\mathcal{C}_p$ with congestion $\kappa$, Steiner radius $\beta$ and unique $b$-bit cluster identifiers. Let the bandwidth on the communication graph $G$ be $B\geq b$. There is a deterministic $O(k+(\beta+\kappa)\logstar b + \beta\cdot \kappa)$ round algorithm to color each cluster red or blue such that in each distance-$k$ connected component of clusters with at least $2$ clusters, the fraction of vertices colored blue lies within the interval $[1/2,3/4]$. 
\end{lemma}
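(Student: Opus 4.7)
The plan is to treat the cluster collection as a virtual graph $H$ whose vertices are the clusters and whose edges connect pairs of clusters at hop-distance at most $k$ in $G$, and to simulate a standard deterministic coloring procedure on $H$ using the Steiner-tree infrastructure and a one-time $k$-hop flood.

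First I would set up communication primitives on $H$. In $O(k)$ rounds of $k$-hop flooding, every $G$-vertex learns the $b$-bit identifiers of all clusters whose Steiner trees touch its $k$-neighborhood. Combined with one application of the tree-aggregation primitive \Cref{cor:treeAggregationBetter} per tree (cost $O(\beta\cdot\kappa)$ in total) this lets each cluster leader $\ell_\mathcal{C}$ enumerate all $H$-neighbors of $\mathcal{C}$ and lets the clusters establish a persistent ``bridge'' for future $b$-bit exchanges with each $H$-neighbor. After this setup, one round of any $H$-algorithm (each cluster broadcasts its current label to every $H$-neighbor and receives theirs) can be executed in $O(\beta+\kappa)$ rounds of \CONGEST: a broadcast-then-convergecast inside each Steiner tree together with a single bridge exchange.

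Second, using these primitives I would run Linial's classical deterministic coloring algorithm on $H$ for $O(\logstar b)$ simulated $H$-rounds, i.e.\ $O((\beta+\kappa)\logstar b)$ \CONGEST rounds, producing a proper coloring $\chi$ of $H$ with a bounded palette of size $q$. Then I convert $\chi$ into the required balanced red/blue coloring in a constant number of further tree-wide aggregations: iterating through the $q$ color classes of $\chi$, I compute a maximal matching $M$ on $H$, assign each matched pair one blue and one red vertex with ties broken by cluster identifier, and color each unmatched cluster according to a $\chi$-parity rule that ensures the blue count in every component with at least two clusters lands in $[\lceil m/2\rceil,\lfloor 3m/4\rfloor]$, where $m$ is the number of clusters in the component. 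Each such aggregation costs $O(\beta\cdot\kappa)$ rounds via \Cref{cor:treeAggregationBetter}, and there are only constantly many of them.

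The main obstacle I expect is the balance guarantee itself, because on structurally unfavorable components (for instance a star whose center monopolizes the matching) a maximal matching covers only a vanishing fraction of the clusters, so the naive ``one blue, one red per matched pair'' rule is insufficient. I would resolve this with a careful two-sided rule for unmatched clusters using $\chi$-parity: since every unmatched cluster is adjacent to a matched pair and the unmatched clusters form an independent set on which $\chi$ is a proper coloring, splitting them evenly by $\chi$-parity supplies exactly the number of additional blue clusters required to land the fraction in $[1/2,3/4]$ inside every component of size $m\geq 2$; a short case analysis on whether the unmatched clusters dominate or are dominated by the matched ones closes the argument. The runtime bookkeeping is then $O(k)$ for the initial flood, $O(\beta\cdot\kappa)$ for setting up bridges and for the constantly many final aggregations, and $O((\beta+\kappa)\logstar b)$ for the Linial simulation, which matches the claimed bound.
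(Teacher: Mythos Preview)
Your proposal has a structural gap: the virtual cluster graph $H$ has \emph{unbounded} maximum degree, since a single cluster may lie within $k$ hops of arbitrarily many other clusters. This breaks several steps simultaneously. First, the $O(k)$-round flood cannot make every $G$-vertex learn \emph{all} nearby cluster identifiers under a $B$-bit bandwidth bound; a vertex on the boundary of many clusters would need to receive that many distinct $b$-bit IDs. Second, Linial's algorithm on $H$ produces $O(\Delta_H^2)$ colors, not a bounded palette $q$, so iterating through $q$ color classes is not a constant number of phases. Third, simulating one $H$-round in $O(\beta+\kappa)$ \CONGEST rounds is impossible when a cluster leader must exchange distinct messages with $\Delta_H$ neighbors. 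Finally, the $\chi$-parity rule for unmatched clusters has no balance guarantee: the unmatched clusters do form an independent set, but nothing forces the parity classes of $\chi$ to be even approximately balanced inside a given component, so the ``short case analysis'' you allude to does not close.

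The paper's proof avoids the unbounded-degree issue by never materializing $H$. Instead, each non-isolated cluster selects exactly one outgoing $k$-hop path to some other cluster (\Cref{lem:pathSelection}), and this path selection is engineered so that every $G$-edge lies in at most four of the resulting BFS trees; that constant congestion is what makes the $O(k)$ cost legitimate. The resulting out-degree-$1$ orientation is then split: clusters with in-degree at least a fixed constant (``heavy'') color their in-neighbors directly and are themselves at most a $1/12$ fraction of any component, while the remaining ``light'' clusters induce a genuinely constant-degree graph on which a Linial/MIS-style star cover is legitimate and yields the $[1/2,3/4]$ balance. If you want to rescue your outline, the missing ingredient is precisely this heavy/light dichotomy (or an equivalent degree-reduction step) before any $H$-simulation.
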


The high level idea of the following algorithm stems from \cite[arxiv version Lemma 3.1 and 3.2]{GGR20}. However, to deal with $k>1$ we develop careful whitebox implementations of each step. The most important implementation step is captured by \Cref{lem:pathSelection}.  
	\subparagraph*{Algorithmic steps:} 
	
	\begin{itemize}
		\item \textbf{Step 0 (color isolated clusters):} Each distance-$k$ isolated cluster is colored red. 
		\item \textbf{Step 1 (path selection):} Each cluster $\mathcal{C}$ that is not distance-$k$ isolated selects a (directed) path of length at most $k$ to some other cluster  $\mathcal{C}'\neq \mathcal{C}$. 
		\item \textbf{Step 2  (determine heavy/light clusters):} A cluster is \emph{heavy} if it has at least $11$ incoming paths, otherwise a cluster is \emph{light}. Each cluster determines whether it is heavy or light. 
	\end{itemize}	
	
	\begin{itemize}
		\item \textbf{Step 3 (color children of heavy clusters):}  Heavy clusters color the clusters that selected them as balanced as possible favoring blue (largest possible imbalance $\pm1$). The heavy cluster does not color itself in this step.
		\item \textbf{Step 4 (color heavy clusters):} Color each heavy cluster that has not been colored in step 3, blue. 
		
		\item \textbf{Step 5 (color remaining light clusters):} The graph of remaining light vertices has maximum degree $11$, compute a cover of it with stars of size at least 2, e.g., by computing a maximal independent set of the square of the graph. Then, the leader of the star colors each star as balanced as possible favoring blue. 
	\end{itemize}

	\begin{observation}
		\label{obs:redBlue}
		In any distance-$k$ connected component of clusters the fraction of vertices that is colored blue lies within the interval $[1/2,3/4]$. 
	\end{observation}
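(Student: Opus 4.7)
The plan is to fix a distance-$k$ connected component $\mathcal{K}$ with at least two clusters, partition its clusters into disjoint \emph{pieces} each corresponding to a single ``coloring event'' of the algorithm, and show that every such piece already has blue fraction in $[1/2,3/4]$. Since each cluster of $\mathcal{K}$ has a distance-$k$ companion inside $\mathcal{K}$, none of them is distance-$k$ isolated, so Step~0 touches no cluster of $\mathcal{K}$ and every cluster of $\mathcal{K}$ is colored in Steps~3--5.

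For the partition, fix a heavy cluster $H$ and write $c_H=|\mathrm{children}(H)|\geq 11$, where $\mathrm{children}(H)$ is the set of clusters whose Step-1 target is $H$. If $H$'s own target is light, then $H$ is colored blue in Step~4 and I let $P_H=\{H\}\cup\mathrm{children}(H)$; otherwise $H$'s target is a heavy cluster $H'$, and since $H\in\mathrm{children}(H')\subseteq P_{H'}$ I let $P_H=\mathrm{children}(H)$ instead. Because each cluster picks exactly one Step-1 target, the sets $\mathrm{children}(H)$ are pairwise disjoint, and together with the Step-4 heavies they form a disjoint cover of all clusters colored in Steps~3 or~4. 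The Step-5 star cover partitions the remaining clusters. Every piece is contained in a single distance-$k$ connected component because selected paths have length at most $k$ and the edges of the remaining graph are subgraph of $G^k$.

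Bounding the blue fraction of each piece is then a small calculation. A Step-4 piece $\{H\}\cup\mathrm{children}(H)$ has $\lceil c_H/2\rceil+1$ blue clusters out of $c_H+1$, and a Step-3 piece $\mathrm{children}(H)$ has $\lceil c_H/2\rceil$ out of $c_H$; plugging in $c_H\geq 11$, both fractions lie in $[1/2,\,7/12]$. A Step-5 star of size $s\geq 2$ has $\lceil s/2\rceil$ blue out of $s$, so its blue fraction lies in $[1/2,\,2/3]$. The blue fraction inside $\mathcal{K}$ is a convex combination of the per-piece fractions of the pieces that lie in $\mathcal{K}$, hence it lies in $[1/2,3/4]$.

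The main obstacle is the upper bound: it hinges on the heaviness threshold $11$, which keeps the worst Step-4 ratio at $(\lceil 11/2\rceil+1)/(11+1)=7/12<3/4$; with a smaller threshold a lone heavy could drag the fraction above $3/4$. A secondary subtlety is to ensure that the Step-5 star cover produces no singleton star inside $\mathcal{K}$, because a blue singleton would have ratio $1$; this is handled by noting that any would-be isolated remaining-light cluster in $\mathcal{K}$ still has a distance-$\leq k$ companion in $\mathcal{K}$, which can be used to absorb it into a star of size at least $2$.
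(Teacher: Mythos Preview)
Your argument is correct and follows a genuinely different decomposition from the paper's. The paper does not form pieces; it bounds the blue fraction among clusters colored in Steps~3 and~5 by $\max\{7/13,\,2/3\}=2/3$ (each heavy cluster's children and each star are individually balanced), and then separately counts Step-4 clusters globally: since every heavy cluster has at least $12$ incoming paths and there are only $p$ paths in total, at most $p/12$ clusters are heavy, and all of those are colored blue in Step~4. Summing gives $(2/3)p + p/12 = (3/4)p$. Your approach instead attaches each Step-4 heavy to the piece consisting of its own children, so the extra blue is absorbed locally into a piece whose ratio is at most $7/12$; this dispenses with the global heavy count, yields the tighter upper bound $2/3$, and works directly with the algorithm's stated threshold of $11$ (the paper's arithmetic as written actually needs threshold $12$, an inconsistency you sidestep). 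On the singleton-star issue you are more cautious than the paper, which simply takes the algorithm's ``stars of size at least~2'' clause at face value; your suggested fix of absorbing via an arbitrary distance-$k$ companion, however, is not what the algorithm specifies and is not needed once that clause is granted.
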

	\begin{proof}
		Consider a distance-$k$ connected component with $p$ clusters. The largest relative imbalance in Step 3 occurs when a heavy cluster has $13$ children and $6$ children are colored red and $7$ children are colored blue. This relative imbalance is dominated by the imbalance in step 5 where the largest relative imbalance between red and blue clusters occurs if each star consists of three clusters. In this case a $2/3$ fraction is colored bluer and a $1/3$ fraction is colored red.
		As there are $p$ selected (outgoing) paths in total and each heavy vertex has at least $12$ incoming paths there are at most $|p|/12|$ heavy clusters. Thus, in step 4, at most $p/12$ clusters are colored blue. 
		In total in step 3 at most a $7/13$ fraction of the clusters is colored blue, in step 5 at most a $2/3$ fraction of clusters is colored blue and in step 4 a $p/12$ fraction of all clusters in the component is colored blue. Thus we can upper bound the number of blue clusters by $\max\{7/13,2/3\}\cdot p + p/12=(3/4)\cdot p$. As each step colors at least as many clusters in the component blue as red the claim follows. 
	\end{proof}
	
	\subparagraph*{Implementation and Communication:}
	The implementation of each step crucially relies on the communication structure between the clusters that we build with the next lemma. 
	\begin{lemma}[distance-$k$ cluster connecting structure] 
		\label{lem:pathSelection}
		Let $k\geq 1$. Given a cluster collection with $b$-bit cluster identifiers in a communication network with bandwidth $B\geq b$ there is an $O(k+\beta+ \kappa/B)$ round algorithm that builds BFS trees in $G$ with depth at most $k$ such that
		\begin{enumerate}
			\item the root and all leaves of each BFS tree are nodes of a cluster, 
			\item each non distance-$k$ isolated cluster $\mathcal{C}$ has exactly one vertex $w$ that is a leaf of a BFS tree $T_v$ with root $v$ where $v\in \mathcal{C}'\neq \mathcal{C}$,
			\item each edge is contained in at most four distinct BFS trees.
		\end{enumerate}
	\end{lemma}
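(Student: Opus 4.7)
The plan is to reduce the construction to a single multi-source BFS of depth $k$ followed by Steiner-tree aggregation inside each cluster using \Cref{cor:treeAggregationBetter}. In the first phase every vertex $u$ that lies in a cluster $\mathcal{C}_u$ seeds the token $(\mathrm{id}(\mathcal{C}_u),0)$, and in each of the $k$ subsequent rounds every vertex forwards the lexicographically smallest $(\mathrm{id},d)$ it has so far heard, with $d$ incremented by $1$; a vertex in cluster $\mathcal{C}$ discards any token bearing $\mathrm{id}(\mathcal{C})$. After the flood every vertex $u$ stores $(\mathrm{id}^*(u),d^*(u),p(u))$, i.e., the identifier of the nearest foreign cluster, the distance $d^*(u)\le k$, and a parent pointer to the neighbour that delivered the winning token. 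Since $B\ge b$ a single message of $b$ bits per round suffices, so this phase runs in $O(k)$ rounds.

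In the second phase each cluster $\mathcal{C}$ aggregates $\min_{u\in\mathcal{C}}(d^*(u),\mathrm{id}^*(u),\mathrm{id}(u))$ over its Steiner tree, yielding at the leader either a certificate that $\mathcal{C}$ is distance-$k$ isolated or a unique representative vertex $w\in\mathcal{C}$ together with a target cluster $\mathcal{C}'=\mathrm{id}^*(w)$; a symmetric downward broadcast informs $w$ of its role. By \Cref{cor:treeAggregationBetter} these two operations together cost $O(\beta+\kappa/B)$ rounds. In the third phase each selected $w$ launches a token that follows the parent chain $w,p(w),p^2(w),\dots$ for $d^*(w)\le k$ hops, stopping at the first vertex $v\in\mathcal{C}'$ it encounters; this takes $O(k)$ further rounds and yields, for each non-isolated cluster $\mathcal{C}$, a path of length at most $k$ from the chosen $w\in\mathcal{C}$ to some $v\in\mathcal{C}'$. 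The tree $T_v$ is declared to be the union of all such paths that terminate at the same $v$, rooted there.

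Conditions (1) and (2) of the conclusion are immediate: every root and every leaf lies in a cluster, every tree has depth at most $k$, and by the per-cluster aggregation of the second phase each non-isolated cluster supplies exactly one leaf across all trees. The hard part is condition (3). The essential structural fact is that the global parent pointers form a forest in which every vertex has a unique parent, so any edge $e\in E$ is a parent-edge in at most one direction; the paths of the third phase are sub-paths of this forest, and two distinct trees $T_v,T_{v'}$ can share an edge only if a walk towards one actually passes through the root of the other, a situation that can be ruled out by a careful BFS tie-breaking together with a boundary-edge accounting between neighbouring foreign-cluster claim regions. This should confine the number of trees through any one edge to a small constant, with four being a comfortable budget. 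Making this accounting precise is where most of the proof work goes; the flood, aggregation, and walk phases themselves are routine applications of \Cref{cor:treeAggregationBetter}.
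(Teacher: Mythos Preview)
Your single-token multi-source BFS does not establish Property~2. Take two clusters $\mathcal{C},\mathcal{C}'$ at distance exactly $k$, joined by a path of non-cluster vertices, with $\mathrm{id}(\mathcal{C})<\mathrm{id}(\mathcal{C}')$ and no other clusters nearby. With lexicographic order on $(\mathrm{id},d)$ every intermediate vertex holds a token from $\mathcal{C}$ as its minimum and forwards only that; hence the neighbours of any $u\in\mathcal{C}$ deliver only $\mathrm{id}(\mathcal{C})$-tokens to $u$, which $u$ discards, so $u$ never learns of $\mathcal{C}'$ and the algorithm wrongly declares $\mathcal{C}$ isolated. Switching to order $(d,\mathrm{id})$ does not help either: the nearest cluster to any vertex in or adjacent to $\mathcal{C}$ is $\mathcal{C}$ itself, so again only $\mathrm{id}(\mathcal{C})$-tokens are propagated back into $\mathcal{C}$. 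The failure is structural: forwarding a single winning token per vertex creates a Voronoi-type shielding in which the locally winning cluster cannot hear about anyone else. (Your text also conflates ``lex-smallest $(\mathrm{id},d)$'' with ``nearest foreign cluster''; these do not coincide.)

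The paper sidesteps this by allowing each vertex to forward up to \emph{two} tokens with distinct identifiers over each outgoing edge across the whole flood, not one per round. That second slot is precisely what powers the induction for Property~2: if $v_i$ is the closest vertex holding a $\mathcal{C}$-foreign token and $v_{i+1}$ is its neighbour on a shortest path towards $\mathcal{C}$, then $v_i$ has sent at most one token (possibly $\mathrm{id}(\mathcal{C})$) along that edge so far, so the foreign token can still be pushed one hop closer. As a bonus, Property~3 then falls out trivially---at most two tokens per direction gives at most four per edge---whereas in your plan Property~3 is ``the hard part'' and is left as a sketch. In short, you have inverted the difficulty: the paper makes the edge-congestion bound hold by construction and spends its effort on reachability, while your one-token flood makes reachability fail outright.
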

	\begin{proof}
	\medskip
		\textbf{Algorithm:} Each vertex $v$ of each cluster $\mathcal{C}$ creates a Steiner tree $T_v$ in which it is the root. To this end it creates a token $(id_{\mathcal{C}},k)$ that holds its cluster ID and a \emph{distance counter}. Then, tokens are forwarded in $k$ iterations with several details in the forwarding process. The equipped IDs in tokens are never altered; the distance counter is decreased by one before a token is forwarded. 
A vertex $u$ only forwards a token over an edge $e=\{u,u'\}$ if in total, over the whole execution of the algorithm,  $u$ has sent fewer than two tokens through edge $e$. Each token sent by $u$ over an edge $e=\{u,u'\}$ has to be equipped with an ID distinct from all tokens that $u$ previously sent over $e$.  Node $u$ does not forward a token over  edge $e$ if $u$ received the token through edge $e$ and if the same token is received twice, only one of them is forwarded. If $u$ receives a token with an ID that is identical to the ID of its own cluster the token is dropped and not forwarded.  Nodes remember their actions to be prepared for potential convergecasts.
		
		Once the $k$ iterations are over, each node $u$ of each cluster $\mathcal{C}'$ that received at least one token equipped with an ID that is distinct from $id_{\mathcal{C}'}$ informs the cluster leader $\ell_{\mathcal{C}'}$ about the fact. 
		Via a $1$-bit message the cluster leader $\ell_{\mathcal{C}'}$ selects one such \emph{leaf} which then selects one of the tokens of the aforementioned type and sends it to its origin $v$ using the edges remembered during the execution. All edges and nodes that are traversed in the convergecast are added to $T_v$. 
		All in-cluster communication is performed with single bit messages and via \Cref{cor:treeAggregationBetter}. We emphasize that all token forwardings happen in the communication network $G$. This ends the description of the algorithm.
		
		\noindent \textbf{Proof of Properties: }
		We begin with Property 1.
		By definition, the root $v$ of BFS tree $T_v$ is part of a cluster. Also a node $u$ can only be a leaf of a BFS tree $T_v$ if it initiated a convergecast, which implies that $u$ is part of a cluster. 
		
		For Property $2$, first observe that a cluster leader chooses at most one node of its cluster to become a BFS leaf. 
		Thus, we only need to show that if a cluster $\mathcal{C}$ has another cluster  in distance at most $k$, at least one vertex of $\mathcal{C}$ receives at least one token.
		We show this by induction over the following invariant: For every non distance-$k$ isolated cluster $\mathcal{C}$ after iteration $i$ there is a node in distance at most $k-i$ from $\mathcal{C}$ that has seen a token from a cluster $\mathcal{C}'\neq \mathcal{C}$.
		
		\textit{Induction start:} As $\mathcal{C}$ is non distance-$k$ isolated there is some cluster $\mathcal{C}'$ in distance at most $k$. As each vertex of $\mathcal{C}'$ creates a token the invariant holds before the first iteration. 

		\textit{Induction step:} Apply the induction hypothesis for iteration $i$ and among the nodes satisfying it let $v_i$ be the node that is closest to $\mathcal{C}$ at the end of iteration $i$, that is, the distance of $v_i$ to $\mathcal{C}$ is $\alpha\leq k-i$ and let $t$ be the token with  $t_{ID}\neq ID_{\mathcal{C}}$ that $v_i$ has seen. If $\alpha<k-i$ we have shown the induction step. So, consider $\alpha=k-i$ and let $v_{i+1}$ be a neighbor of $v_i$ that lies on a shortest path from $v_i$ to $\mathcal{C}$. By the choice of $v_i$, node $v_{i+1}$ has not seen a token from a cluster $\mathcal{C}'\neq \mathcal{C}$. In particular, $v_i$ has forwarded at most one token to $v_{i+1}$ so far (possibly a token with $ID_{\mathcal{C}}$). Thus $t_{ID}$ is not equal to the cluster ID of $v_i$'s cluster, $t$ does not get dropped by $v_i$, and $v_i$ forwards $t$ to $v_{i+1}$ in iteration $i+1$, that is, the hypothesis holds for $i+1$. 

		
		To show Property $3$, notice that by definition, at most $2$ tokens are forwarded over any edge by both endpoints. Hence, at most $4$ different clusters can choose this edge to the BFS with the convergecast.
	\end{proof}

	We next explain how each step can be implemented, given \Cref{lem:pathSelection}. An efficient implementation of several of these steps uses the third property in \Cref{lem:pathSelection}. 
	
	\begin{itemize}
		\item \textbf{Step 0/Step 1:} We first apply \Cref{lem:pathSelection}. Let $\mathcal{C}$ be a cluster and $T_v$ be the BFS tree rooted at $v$ such that $w\in \mathcal{C}$ is a leaf of $T_v$. Then, cluster $\mathcal{C}$ selects the cluster of $v$ and the path between $w$ and $v$ in the BFS tree $T_v$. As there is exactly one such leaf for each non distance-$k$ isolated cluster we obtain the desired path selection. Note that the constant overlap allows us to perform aggregation on the BFS trees efficiently. 

		\item \textbf{Step 2:} Each cluster $\mathcal{C}$ needs to count the number of leaves of BFS trees $T_v$ for which the root $v$ is in $\mathcal{C}$. As the BFS trees have constant overlap, $O(k)$ rounds are sufficient to for each such root to learn the number of leaves in the BFS tree rooted at it. Then, within each cluster we only need to determine whether the total number of leaves in all BFS trees $\{T_v\mid v\in \mathcal{C}\}$ is at least $12$, which can be implemented via an constant bit aggregation within the vertices of the cluster in  $O(k+\beta+\kappa)$ rounds (\Cref{cor:treeAggregationBetter}).
		\item \textbf{Step 3:} See \Cref{claim:heavyGuysColor}.
		\item \textbf{Step 4:} This step requires no communication,  as after step 2 each cluster knows whether it is heavy or light. 
		\item \textbf{Step 5:} See \Cref{claim:starSelection}.
	\end{itemize}

	\begin{claim}
		\label{claim:heavyGuysColor}
		A heavy cluster can color the clusters that selected it with a $\pm 1$ imbalance in $O(k+\beta\cdot \kappa)$ rounds. 
	\end{claim}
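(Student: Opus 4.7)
The plan is for the leader $\ell_{\mathcal{C}}$ to orchestrate a globally consistent $\pm 1$ balanced coloring of the children via aggregation on the Steiner tree of $\mathcal{C}$, and then to push the chosen color to each child along the BFS trees $T_v$ built in \Cref{lem:pathSelection}. Recall that each child cluster contributes exactly one leaf $w$ placed inside some $T_v$ with $v\in\mathcal{C}$, and the convergecast that picked $w$ in step~1 has already told $v$ the number $n_v$ of children that selected its BFS tree.

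First I would convert the local counts $n_v$ into a global numbering of the children. Using the summation and broadcast primitives of \Cref{cor:treeAggregationBetter} on the Steiner tree of $\mathcal{C}$, the leader first computes $N=\sum_{v\in\mathcal{C}} n_v$. A standard prefix-sum routine (one convergecast of subtree sums along a fixed ordering of the Steiner tree, followed by one downward broadcast of offsets) then distributes to every $v\in\mathcal{C}$ an offset $a_v$ such that the children anchored at $v$ receive the global indices $a_v+1,\dots,a_v+n_v$. Each $v$ now locally assigns the color of index $i$ to be blue iff $i\le\lceil N/2\rceil$, producing imbalance at most one and favoring blue as required by Step~3.

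Second, each $v$ ships its one-bit color decisions back to the selected leaves along $T_v$. Since $T_v$ has depth at most $k$ and, by property~$3$ of \Cref{lem:pathSelection}, every communication-graph edge lies in at most four BFS trees, these sends can be pipelined across all roots in parallel. Each leaf then forwards the received color to its own cluster leader via a single in-cluster aggregation, which is subsumed by the cost already budgeted for the other cluster's Steiner tree.

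The main obstacle I expect is the runtime accounting under congestion. The Steiner-tree aggregations dominate: each invocation of \Cref{cor:treeAggregationBetter} costs $O(\max\{1,\kappa/B\}(\beta+\kappa))$ rounds, which in the relevant regime is $O(\beta\cdot\kappa)$, and only $O(1)$ such invocations are needed for the summation and the prefix-sum. The BFS dissemination contributes an additive $O(k)$ for depth, plus a pipelining term bounded by the number of distinct leaves inside $\mathcal{C}$, which is dominated by the Steiner-tree cost since each leaf corresponds to a tree edge with only constant BFS-overlap. Summing over the two phases yields the desired bound $O(k+\beta\cdot\kappa)$.
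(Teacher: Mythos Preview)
Your approach is essentially the paper's: aggregate the child counts to the leader, compute a global numbering via subtree counts, and push the assignment back down. The paper phrases this as working on an ``extended Steiner tree'' (the cluster's Steiner tree together with the attached BFS trees $T_v$) and, after the upward count, letting \emph{each internal node know for every incoming edge how many selecting clusters are reachable through it}. That per-edge count is exactly the data your prefix-sum delivers, so the two arguments coincide.

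One small point to tighten: your dissemination step says $v$ ``ships its one-bit color decisions'' to the leaves and you bound the resulting pipelining by ``the number of distinct leaves'', arguing this is absorbed by constant BFS overlap. That justification is shaky: a single $T_v$ can have many leaves, so an edge near its root may have to carry many individual bits, and constant \emph{edge} overlap across different $T_v$'s does not bound the traffic \emph{within} one $T_v$. The clean fix---and what the paper does implicitly---is not to send per-leaf bits at all. Instead, propagate \emph{ranges}: once every internal node of the extended tree knows the child count below each outgoing edge (available from the upward pass), the root's interval $[1,N]$ can be split top-down so that each node forwards only an $O(\log n)$-bit offset per edge. With that change the BFS part genuinely costs $O(k)$, and the total is $O(k+\beta\cdot\kappa)$ as claimed.
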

	\begin{proof}
		First the leader $\ell_{\mathcal{C}}$ of each heavy cluster $\mathcal{C}$ uses $O(k+\beta\cdot \kappa)$ to count the number of clusters that selected it. Further, each node of the extended Steiner tree of cluster $\mathcal{C}$ receives for each incoming edge of the extended Steiner tree how many clusters (that selected $\mathcal{C}$) can be reached through this edge. This is enough to inform each such cluster about its color such that the total imbalance is $\pm 1$. 
	\end{proof}
	\begin{claim}
		\label{claim:starSelection}
		Step 5 can be implemented in $O((\kappa+\beta)\log^* b)$ rounds in the original communication network.
	\end{claim}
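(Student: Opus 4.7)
The plan is to reduce Step~5 to running a constant-degree MIS algorithm on a virtual cluster graph, and then to simulate each round of this virtual algorithm in the \CONGEST model using the communication infrastructure built in Step~1. Let $H$ be the virtual graph whose vertices are the remaining light clusters and whose edges correspond to the at-most-$k$ paths selected in Step~1 (each light cluster selected exactly one other cluster, and each light cluster was selected by fewer than $11$ other clusters). By construction, $H$ has maximum degree at most $11$, and therefore $H^2$ has maximum degree $O(1)$.

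To obtain a cover by stars of size $\geq 2$, I would compute a maximal independent set $I$ of $H^2$ as the set of star centers. Since $H^2$ has constant maximum degree, Linial's classical color reduction followed by the standard $O(1)$-coloring-to-MIS reduction terminates in $O(\logstar b)$ virtual rounds, where the $b$-bit cluster identifiers serve as the initial input coloring. A single virtual round on $H^2$ requires every cluster to learn the current Linial state of each of its (constantly many) neighbors at distance $1$ and $2$ in $H$. The intra-cluster part---aggregation at the leader, local computation of the next state, and broadcast of the new state within the cluster---can be implemented via \Cref{cor:treeAggregationBetter} in $O(\beta+\kappa)$ rounds per virtual round. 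The inter-cluster part uses the BFS trees from \Cref{lem:pathSelection}, which have depth at most $k$ and constant edge congestion; over all $O(\logstar b)$ virtual rounds each BFS tree carries only $O(\logstar b)$ constant-size messages per source, so by standard pipelining this inter-cluster cost is $O(k+\logstar b)$ in total, which is absorbed into the $O(k)$ additive term already present in the runtime of \Cref{lem:redBlue}. Distance-$2$ neighbors in $H$ are handled by having each cluster leader, once it has learnt its own direct neighbors' states, relay them onward along the same BFS trees, costing another pipelined traversal of the same kind.

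Once the MIS on $H^2$ is computed, each star center becomes the leader of a star containing up to $12$ members (itself and up to $11$ $H$-neighbors). It needs to learn the cluster identifiers/sizes of its star members (a constant amount of data per neighbor) via the BFS trees, locally decide a balanced red/blue split favoring blue, and send the chosen color back to each star member. This post-processing reuses the same BFS infrastructure and the cluster Steiner trees and therefore costs another $O(\beta+\kappa)$ rounds, again plus an additive $k$ from the path traversals that is absorbed into the global $O(k)$ term of \Cref{lem:redBlue}.

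The main obstacle is ensuring that the $O(\logstar b)$ virtual rounds do not each independently pay the $k$-hop cost of crossing a BFS tree, which would yield a $k\cdot\logstar b$ term that is not allowed by the claim. The resolution is precisely the pipelining argument above, which exploits the congestion-$4$ bound of \Cref{lem:pathSelection}: all $\logstar b$ messages on a fixed BFS tree can be pushed in a pipeline of length $O(k+\logstar b)$. Combining the $O((\beta+\kappa)\logstar b)$ intra-cluster work with the pipelined inter-cluster work then yields the desired bound $O((\kappa+\beta)\logstar b)$ for Step~5 in the original communication network.
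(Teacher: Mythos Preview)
Your overall approach is the same as the paper's: form the constant-degree virtual graph $H$ of remaining light clusters, compute an MIS of $H^2$ in $O(\logstar b)$ virtual rounds using the $b$-bit cluster identifiers, and simulate each virtual round via the Steiner trees (intra-cluster) and the depth-$k$ BFS trees of \Cref{lem:pathSelection} (inter-cluster).

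The pipelining argument for the inter-cluster part, however, does not work. In Linial's coloring each cluster's round-$(i{+}1)$ message is a function of the round-$i$ messages it \emph{received}, and those arrive only after a full Steiner-tree convergecast at the leader. Hence the $\logstar b$ BFS-tree messages are generated sequentially, not available up front, and the standard ``$O(\text{depth}+\text{\#messages})$'' pipelining bound does not apply. Each virtual round therefore pays the full $O(k)$ for the BFS traversal, giving $O((\beta+\kappa+k)\logstar b)$ rather than $O((\beta+\kappa)\logstar b)+O(k+\logstar b)$. Relatedly, arguing that a leftover $k$-term is ``absorbed into the $O(k)$ additive term of \Cref{lem:redBlue}'' is not a proof of the claim as stated, which is about Step~5 in isolation.

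The paper's proof avoids this detour entirely: it observes that the relevant BFS trees have only constantly many leaves (light clusters), so in each virtual round only $O(1)$ ``connecting'' vertices per cluster need to reach the leader, and it charges one $H$-round directly to a single application of the convergecast part of \Cref{cor:treeAggregationBetter}, i.e.\ $O(\beta+\kappa)$, without any pipelining. The $k$-hop BFS traversal is not singled out; to make the bound fully rigorous one can simply note that in every use of \Cref{lem:redBlue} the Steiner radius satisfies $\beta\ge k$, so the per-round BFS cost is absorbed into $O(\beta+\kappa)$.
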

	\begin{proof}
		The dominating step is to color a maximal independent set in the square of the (cluster) graph $H$ induced by light clusters. Due to the definition of light clusters the graph $H$ is of constant degree. In particular, each BFS tree $T_v$ that is used to connect two clusters in graph $H$ has a constant number of leaves. Thus, via \Cref{cor:treeAggregationBetter} (part 2) one round of communication in graph $H$ can be simulated in $(\beta+\kappa)$ rounds in the communication network $G$; in one round only a constant number of connecting vertices of a cluster need to send a message to the root. A maximal independent of $H^2$ set can be computed in $O(\logstar b)$ rounds on $H^2$ which results in $O((\kappa+\beta)\logstar b)$ rounds in the original communication network. 
	\end{proof}		

\begin{proof}[Proof of \Cref{lem:redBlue}]
	Due to the explanation just before this proof all steps can be implemented in $O(k+(\beta+\kappa)\logstar b + \beta\cdot \kappa)$ rounds.  The balanced guarantee of the computed coloring follows with \Cref{obs:redBlue}.	
\end{proof}

\section{Simple Distance-$k$ Color Class Carving}
\label{app:RG}
In this section we prove the following ball carving result. 
\label{ssec:simpleColorCarving}
\theoremSlowBallCarving*

Fix parameters $k\geq 1$ and $x>0$. Define the following parameters for the algorithm:
\begin{align}
	\phases & =\log_{4/3}n +1,  \\
	\proposalAnswerParameter & =x\cdot \phases=O(x\cdot \log n),\\
	\steps & =(\proposalAnswerParameter+1)\cdot \log n=O(x\cdot \log^2 n), \\
	\beta & =k \cdot\phases\cdot\steps=O(k \cdot x\cdot\log^3 n), \\
	\kappa & =2\cdot \phases\cdot \min\{k,\steps\}=O(\log n \cdot \min\{k,x\cdot \log^2 n \}).
\end{align}

\subsection{Simple Algorithm: Distance-$k$ Color Class Carving}

\noindent {\bf Algorithm (for one color class, \Cref{thm:ballCarvingSlow}):}
The algorithm has $\phases$ phases and each phase consists of $\steps$ steps. 

Initialization: Each vertex of $S$ forms its own cluster with a unique $b$-bit cluster identifier. 

\begin{itemize}
	\item \textbf{Phase (for reducing the number of clusters in each distance-$k$ connected component):} 
	\begin{itemize}
		\item \textbf{Coloring clusters:} Use \Cref{lem:redBlue} to color the clusters red and blue in a balanced way.
		\item Repeat $\steps$ times:
		
		{\bf One step:} In the beginning of each step, all nodes in red clusters propose to non-stalling blue clusters within distance at most $k$ and either get accepted or die. We next explain how to determine which red node proposes to which blue cluster. 
		
		\textbf{Distance-$k$ proposal implementation:} We have $k$ iterations where blue nodes in non-stalling clusters send tokens to distance $k$. Each blue node in a non-stalling cluster creates one token which is forwarded  in the \emph{original} graph $G$ (including dead nodes and nodes in $V\setminus S$) for at most $k$ hops. A vertex only forwards the first token that it receives, breaking ties arbitrarily. In particular, blue vertices do not forward any tokens. This builds BFS trees of depth at most $k$ rooted at blue nodes. Each red node that is contained in a BFS tree commits to propose to the blue cluster of the root of the tree. Red nodes that are not in a BFS tree do not propose further.

		\textbf{Accepting/Rejecting Proposals:} Let $P_{\mathcal{C}}$ be the (red) vertices that propose to blue cluster $\mathcal{C}$.   Cluster $\mathcal{C}$ accepts all proposals if its cluster would grow by a $(1+1/\proposalAnswerParameter)$ factor by adding the proposals, that is, if $|\mathcal{C}\cup P_{\mathcal{C}}|\geq (1+1/\proposalAnswerParameter)\cdot |\mathcal{C}|$, otherwise, including the case $|P_{\mathcal{C}}|=0$,  all proposals in $P_{\mathcal{C}}$ are rejected and the vertices in $P_{\mathcal{C}}$ die permanently. They will not be part of any cluster during the algorithm for \Cref{thm:ballCarvingSlow} ever again. However, dead nodes still participate in forwarding information and tokens, in particular, they participate in the token forwarding in the proposal phase. If a blue cluster rejects proposals it becomes \emph{stalling} until the end of the phase.  If a cluster accepts a set of proposals $P_{\mathcal{C}}$ these vertices are added to the cluster and the Steiner trees are updated as described in the next paragraph.
		
		\textbf{Steiner tree building:} For accepted proposals Steiner trees are extending along the BFS trees that are created in the proposal phase. 
		
		\item \textbf{End of Phase:} Stalling blue clusters forget their stalling status, every cluster forgets their color, dead vertices remain dead. 
	\end{itemize}
\end{itemize}

\textit{Clarifications and Observations:} In a phase, there can also be red nodes that for some steps do not propose but then start proposing. In the distance-$k$ proposal a red node cannot get a token from its own cluster as its own cluster is red, i.e., no token originates from it. In one step, red nodes in the same cluster might propose to different blue clusters.  Notice that in the distance-$k$ proposal implementation red nodes are fine with proposing to any blue cluster. Thus, they only need to receive one arbitrary token from an arbitrary blue cluster in distance-$k$, in particular, they do not care which token gets forwarded. The identifier is only used to identify clusters, BFS trees, Steiner trees and in the computation of the balanced coloring.

\subsection{Analysis: Correctness}

\begin{lemma}[few vertices die]
	\label{lem:fewDie}
	The number of vertices that die during the whole execution of the algorithm is upper bounded by $|S|/x$. 
\end{lemma}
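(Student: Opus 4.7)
The plan is to amortize the deaths per phase against the sizes of the blue clusters at the moment they reject, using the fact that these cluster snapshots are disjoint. Key parameters to exploit are $\proposalAnswerParameter = x\cdot \phases$ and the rule that a rejecting blue cluster becomes stalling for the remainder of the phase.

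First I would observe that within a single phase every blue cluster rejects proposals at most once: the moment it rejects, it becomes stalling and receives no further proposals until the phase ends. Next I would bound the number of deaths caused by one rejection. If a blue cluster $\mathcal{C}$ rejects at step $\tau$ with proposal set $P_{\mathcal{C}}$, then the acceptance condition $|\mathcal{C}\cup P_{\mathcal{C}}|\geq (1+1/\proposalAnswerParameter)\,|\mathcal{C}|$ is violated; since $\mathcal{C}$ is blue and $P_{\mathcal{C}}$ consists of red proposers (so $\mathcal{C}\cap P_{\mathcal{C}}=\emptyset$), this gives $|P_{\mathcal{C}}| < |\mathcal{C}_{\tau}|/\proposalAnswerParameter$. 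Hence the number of vertices killed by this rejection is strictly less than $|\mathcal{C}_{\tau}|/\proposalAnswerParameter$.

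The crucial step is the following disjointness claim: within one phase, if $\mathcal{C}_1,\dots,\mathcal{C}_m$ are the blue clusters that reject, at respective steps $\tau_1,\dots,\tau_m$, then the vertex sets $\mathcal{C}_{1,\tau_1},\dots,\mathcal{C}_{m,\tau_m}$ are pairwise disjoint. The reason is that during a phase blue clusters are monotone: a vertex already in a blue cluster stays there (blue clusters do not propose and do not lose members), and an accepted red vertex joins a blue cluster and then also stays. Thus once a vertex belongs to blue cluster $\mathcal{C}_i$ at time $\tau_i$, it still belongs to $\mathcal{C}_i$ at any later time $\tau_j>\tau_i$, and by disjointness of distinct clusters it cannot be in $\mathcal{C}_{j,\tau_j}$. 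Since all cluster members are in $S$, this yields $\sum_{i=1}^m |\mathcal{C}_{i,\tau_i}| \leq |S|$, so the total number of deaths in a single phase is at most $|S|/\proposalAnswerParameter$.

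Finally, summing over the $\phases$ phases gives a total of at most $\phases\cdot |S|/\proposalAnswerParameter = |S|/x$ deaths, using $\proposalAnswerParameter = x\cdot \phases$. The main obstacle is the disjointness claim; once one carefully verifies from the algorithm description that blue clusters only grow (never shrink) during a phase and that red vertices absorbed into a blue cluster do not subsequently migrate, the rest is a one-line amortization.
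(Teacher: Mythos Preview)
Your proof is correct and follows essentially the same charging argument as the paper: each phase kills fewer than $|S|/\proposalAnswerParameter$ vertices because every death is charged to the (unique) blue cluster that rejected it, the charge is at most a $1/\proposalAnswerParameter$ fraction of that cluster's size at the moment of rejection, and the rejecting clusters are pairwise disjoint subsets of $S$; summing over $\phases$ phases and using $\proposalAnswerParameter = x\cdot\phases$ gives the bound. The paper phrases the per-phase bound as a recurrence $|S_{i+1}|\geq |S_i|(1-1/\proposalAnswerParameter)$ and then unfolds it, whereas you bound each phase directly against $|S|$, but this is a cosmetic difference; your more explicit justification of the disjointness claim (blue clusters only grow within a phase) fills in exactly the step the paper leaves implicit.
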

\begin{proof}
		Let $S_i$ be the vertices that are alive at the beginning of phase $i=0,\ldots,\phases$. 
		We have $S_0=|S|$. In each phase, each killed vertex can be uniquely charged to one stalling cluster such that each stalling cluster is only charged for $1/\proposalAnswerParameter$ fraction of its vertices. Thus, for $i\in [phases-1]$ we obtain
		\begin{align}
			|S_{i+1}|\geq |S_i|(1-1/\proposalAnswerParameter). 
		\end{align}
		We deduce
		\begin{align}
			S_{\phases}\geq |S|(1-1/\proposalAnswerParameter)^{\phases}\geq |S|\left(1-\frac{\phases}{\proposalAnswerParameter}\right)
		\end{align}  
		Thus there are at most $\phases\cdot  |S|/\proposalAnswerParameter=|S|/x$ dead vertices.
\end{proof}

\begin{lemma}
	\label{lem:blueStalling}
	Every blue cluster is stalling at the end of a phase.
\end{lemma}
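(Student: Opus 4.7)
The plan is to argue by contradiction: suppose some blue cluster $\mathcal{C}$ never becomes stalling during the entire phase. By the description of a step, the only way for a blue cluster to avoid stalling is to accept the (possibly empty) set of proposals $P_{\mathcal{C}}$, and by the acceptance rule this requires $|\mathcal{C}\cup P_{\mathcal{C}}|\geq (1+1/\proposalAnswerParameter)\cdot |\mathcal{C}|$. Consequently, in every one of the $\steps$ steps of the phase, the size of $\mathcal{C}$ would grow by a multiplicative factor of at least $(1+1/\proposalAnswerParameter)$.

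Starting the phase with $|\mathcal{C}|\geq 1$ (every cluster is nonempty), after all $\steps$ steps the cluster would contain at least
\[
(1+1/\proposalAnswerParameter)^{\steps}
\]
vertices. Plugging in $\steps=(\proposalAnswerParameter+1)\log n$ and using the elementary inequality $(1+1/y)^{y}\geq 2$ for $y\geq 1$ (applied with $y=\proposalAnswerParameter$, and noting $\proposalAnswerParameter=x\cdot\phases\geq 1$), we get
\[
(1+1/\proposalAnswerParameter)^{\steps}\;\geq\;(1+1/\proposalAnswerParameter)^{\proposalAnswerParameter\cdot \log n}\;\geq\;2^{\log n}\;=\;n.
\]
Since $\mathcal{C}\subseteq V$ and $|V|\leq n$, we obtain $|\mathcal{C}|\leq n$, so the bound is already saturated, and one further nonstalling step would force $|\mathcal{C}|>n$, which is impossible.

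Thus $\mathcal{C}$ must have stalled in some step of the phase; once a blue cluster is declared stalling in a step it remains stalling until the end of the phase, so $\mathcal{C}$ is stalling at the end of the phase. The only subtle point to watch is the base case of the growth argument (the cluster is nonempty at the start of the phase and every accepted proposal strictly enlarges it by a $(1+1/\proposalAnswerParameter)$ factor); the main obstacle, if any, is merely checking that the parameter choices of $\proposalAnswerParameter$ and $\steps$ indeed drive the geometric growth past $n$, which follows directly from $\steps\geq \proposalAnswerParameter\cdot\log n$ together with $(1+1/\proposalAnswerParameter)^{\proposalAnswerParameter}\geq 2$.
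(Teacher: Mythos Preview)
Your argument is correct and follows the same route as the paper: assume a blue cluster never stalls, so it grows by a factor $(1+1/\proposalAnswerParameter)$ in every step, and derive that its size exceeds $n$. The paper uses the inequality $1+y\ge e^{y/(1+y)}$ to get $(1+1/\proposalAnswerParameter)^{\steps}>n$, whereas you use $(1+1/y)^{y}\ge 2$; both are equivalent for this purpose. One small presentational wrinkle: your ``one further nonstalling step'' remark is unnecessary and slightly confusing (there is no further step), but the contradiction already closes cleanly because $\steps=(\proposalAnswerParameter+1)\log n>\proposalAnswerParameter\log n$ makes your first displayed inequality strict, giving $|\mathcal{C}|>n$ directly.
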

\begin{proof}
	In each step, in which a cluster does not become stalling, it grows by a factor of $(1 + 1/\proposalAnswerParameter)$. If it does not become stalling in a phase, then after $\steps=(\proposalAnswerParameter+1)\log n$ iterations we obtain that it has at least 
\begin{align}
	(1+1/\proposalAnswerParameter)^{\steps}> e^{\steps\cdot \frac{1/\proposalAnswerParameter}{1+1/\proposalAnswerParameter}}=e^{\frac{\steps}{\proposalAnswerParameter+1}}= n
\end{align}
vertices, where we used $1+y\geq e^{y/(1+y)}$ for $y>-1$. This is a contradiction. 
\end{proof}

We now show that clusters are separated at the end of the algorithm. 
\begin{lemma}[cluster separation 1]
	\label{lem:stallingSeparation}
	 When a blue cluster becomes stalling in some phase $i$ there is no red vertex in distance at most $k$ from it, and this will be the case until the end of phase $i$.
\end{lemma}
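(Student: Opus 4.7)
The plan is to analyse the single step $s$ in phase $i$ at which $\mathcal{C}$ transitions from non-stalling to stalling. The goal splits into two parts: (a) immediately after step $s$ no red vertex remains within distance $k$ of $\mathcal{C}$; (b) this property is preserved for the remaining steps of phase $i$. Part (b) is easy and I would dispatch it at the end by observing that within a single phase the only admissible status transitions are red$\to$blue (through an accepted proposal) and red$\to$dead (through a rejected proposal); no vertex turns red mid-phase, dead vertices stay dead, and vertices inside a blue cluster stay blue until the phase ends. Hence the set of red vertices is monotonically shrinking, so (a) alone suffices.

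The heart of the argument is a token-arrival claim: at the start of step $s$ the cluster $\mathcal{C}$ is still non-stalling, so every node of $\mathcal{C}$ creates and injects a token into the $k$-iteration BFS dissemination in $G$. I would prove by a standard BFS invariant---``after $j$ iterations, every vertex at $G$-distance $\leq j$ from some non-stalling blue cluster has received at least one token''---that every vertex $v$ with $dist_G(v,\mathcal{C})=d\leq k$ receives some token by iteration $d$. Following a shortest $\mathcal{C}$--$v$ path, either the token originating in $\mathcal{C}$ propagates all the way to $v$, or at some intermediate vertex $w$ a token from a different non-stalling blue cluster arrived first and was forwarded instead; in either case $v$ lies in some BFS tree. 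Consequently, if $v$ is red, it commits to a proposal in step $s$.

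Next I would trace the fate of every such red $v$ within distance $k$ of $\mathcal{C}$. By the previous claim $v$ proposes to some non-stalling blue cluster $\mathcal{C}^\ast$. If $\mathcal{C}^\ast=\mathcal{C}$, then since $\mathcal{C}$ rejects in step $s$ the vertex $v$ dies. If $\mathcal{C}^\ast\neq\mathcal{C}$, then either $\mathcal{C}^\ast$ accepts (and $v$ joins $\mathcal{C}^\ast$, so $v$ becomes blue) or $\mathcal{C}^\ast$ also rejects (and $v$ dies). In all cases $v$ is no longer red at the end of step $s$, proving~(a). Combined with~(b) the lemma follows.

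The only real obstacle is the token-arrival claim, because tokens from different non-stalling blue clusters can collide on shared intermediate vertices, and each such vertex forwards only the first token it receives, while blue vertices forward nothing at all. I would handle this by emphasising that the collision is benign: a vertex $w$ on a shortest $\mathcal{C}$--$v$ path is by assumption not blue (otherwise $v$ would be closer to that blue cluster), so $w$ does forward \emph{some} token, and by the BFS invariant the token it forwards is in turn received on time by the next vertex on the path. This is the one spot where the specific rules of the forwarding procedure (in particular the ``blue vertices do not forward'' rule being compatible with the chosen paths) must be checked carefully; once stated as an invariant, the induction itself is routine.
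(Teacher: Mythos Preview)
Your approach is essentially the paper's, only spelled out in more detail: the paper simply asserts that in the step where $\mathcal{C}$ becomes stalling ``every vertex in this neighborhood either proposed to $\mathcal{C}$ and got killed, or proposed to some other cluster,'' and that nodes never turn red mid-phase, without arguing the token-arrival claim that you isolate as the crux.

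There is one point you do not fully close. Your justification that an intermediate vertex $w$ on a shortest path to the red vertex $v$ is not blue (``otherwise $v$ would be closer to that blue cluster'') only excludes $w$ lying in a \emph{non-stalling} blue cluster, via minimality of the chosen source. It does not exclude $w$ lying in a \emph{stalling} blue cluster $\mathcal{C}''$: such a vertex neither creates nor forwards tokens, so it can genuinely block your BFS invariant. The clean fix is to prove the lemma by induction on the step index within the phase. If such a $w\in\mathcal{C}''$ existed on a length-$\le k$ path, then $dist_G(v,\mathcal{C}'')\le dist_G(v,w)\le k-1$, and $\mathcal{C}''$ became stalling at some earlier step $s''<s$; since $v$ was already red at step $s''$ (colours never change to red mid-phase), the inductive hypothesis applied to $\mathcal{C}''$ forces $v$ to have died or turned blue before step $s$, a contradiction. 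With this induction in hand, every intermediate vertex on a shortest path from the nearest non-stalling blue source to $v$ is non-blue and therefore forwards, and your BFS invariant goes through. Your part~(b) is fine as written, using additionally that a stalling cluster does not grow, so the $k$-neighbourhood of $\mathcal{C}$ is frozen once $\mathcal{C}$ stalls.
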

\begin{proof}
	Consider a cluster $\mathcal{C}$ that is stalling in some phase $i$. Consider the step $j$ in which $\mathcal{C}$ became stalling. After step $j$ there is no red vertex in distance-$k$ of $\mathcal{C}$, as in step $j$ every vertex in this neighborhood either proposed to $\mathcal{C}$ and got killed, or proposed to some other cluster such that it either also got killed or joined the cluster, i.e., became blue. In particular, there is no red node in distance-$k$ of $\mathcal{C}$ that does not propose in step $j$. 
	
	As nodes never change their color to red during a phase no node in the distance-$k$ neighborhood of $\mathcal{C}$ is red throughout the phase. 
\end{proof}

 The \emph{cluster graph} $H$ during some point of the execution of the algorithm has a vertex for each cluster $\mathcal{C}$ and an edge between $\mathcal{C}$ and $\mathcal{C}'$ if there are vertices $u\in\mathcal{C}$ and $v\in \mathcal{C}'$ with $dist_G(u,v)\leq k$. 
The connected components of $H$ are called the \emph{distance-$k$ connected components} of the clustering. 

Two clusters $\mathcal{C}$ and $\mathcal{C}'$ are called \emph{distance-$k$ separated} if for all nodes $u\in \mathcal{C}$ and $v\in\mathcal{C}'$ and all paths in $G$ between $u$ and $v$ contain at least $k$ consecutive nodes that are dead or in $V\setminus S$. 
\begin{lemma}\label{lem:componentSeparation}
	At the end of each phase each distance-$k$ connected component of blue clusters is distance-$k$ separated from each distance-$k$ connected component of red clusters. 
\end{lemma}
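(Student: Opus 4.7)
The plan is to combine \Cref{lem:blueStalling} and \Cref{lem:stallingSeparation} with a walk argument along any path joining a blue cluster to a red cluster, showing that the walk must traverse at least $k$ consecutive dead or non-$S$ nodes.

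First I would establish the following key observation. Fix a phase and consider the end of that phase. For every blue cluster $\mathcal{C}'$, \Cref{lem:blueStalling} guarantees that $\mathcal{C}'$ became stalling at some step $j$ of the phase, and by \Cref{lem:stallingSeparation}, no vertex within distance $\le k$ of $\mathcal{C}'$ is in a red cluster at the end of step $j$. Since within a phase a node never turns from blue/dead/non-$S$ into red (red nodes only die or turn blue when their proposal is accepted, and dead nodes stay dead), this red-free property persists until the end of the phase. Hence at the end of the phase every vertex within distance $\le k$ of any blue cluster is either dead, in $V\setminus S$, or in some blue cluster.

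Now fix a blue connected component $\mathcal{B}$ and a red connected component $\mathcal{R}$ in the cluster graph $H$ (restricted to the respective color), pick $\mathcal{C}_B\in\mathcal{B}$ and $\mathcal{C}_R\in\mathcal{R}$, and consider an arbitrary $G$-path $v_0,v_1,\ldots,v_m$ with $v_0\in \mathcal{C}_B$ and $v_m\in \mathcal{C}_R$. Let $i$ be the largest index such that $v_i$ lies in some cluster $\mathcal{C}'\in\mathcal{B}$; such an $i$ exists because $v_0\in\mathcal{C}_B\in\mathcal{B}$, and $i<m$ because $v_m\in\mathcal{C}_R$ is in a red cluster. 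I claim that $v_{i+1},\ldots,v_{i+k}$ are all dead or in $V\setminus S$, which finishes the proof.

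Each $v_j$ with $i<j\le i+k$ has $dist_G(v_j,\mathcal{C}')\le k$. By the key observation above, $v_j$ is dead, non-$S$, or in some blue cluster $\mathcal{C}''$. In the last case, $\mathcal{C}''$ would lie within distance $\le k$ of $\mathcal{C}'$, so $\mathcal{C}''$ and $\mathcal{C}'$ are adjacent in the subgraph of $H$ induced by blue clusters, hence $\mathcal{C}''\in\mathcal{B}$; this contradicts the maximality of $i$. Therefore $v_j$ is dead or in $V\setminus S$. It remains to verify $i+k\le m$: if instead $m\le i+k$, then $v_m$ would be among the dead/non-$S$ nodes, but $v_m\in\mathcal{C}_R$ is alive and in $S$, a contradiction. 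This produces the required block of $k$ consecutive dead/non-$S$ vertices along the path, proving distance-$k$ separation.

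The main obstacle I anticipate is keeping the monochromatic component structure straight: we must argue that a blue cluster adjacent (within distance $k$) to a cluster of $\mathcal{B}$ must itself lie in $\mathcal{B}$, which uses that $\mathcal{B}$ is a component in the blue-induced subgraph of $H$, not in $H$ itself. Once this is made explicit, together with the careful timing argument that \Cref{lem:stallingSeparation}'s guarantee is preserved from the stalling step to the end of the phase, the rest is a straightforward walk-along-the-path argument.
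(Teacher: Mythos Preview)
Your proof is correct and follows essentially the same approach as the paper's: both combine \Cref{lem:blueStalling} and \Cref{lem:stallingSeparation} to deduce that no red vertex lies within distance $k$ of any blue cluster at the end of the phase. You additionally spell out the walk-along-the-path argument verifying the formal definition of distance-$k$ separation (the $k$ consecutive dead/non-$S$ vertices), which the paper leaves implicit, but the underlying idea is identical.
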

\begin{proof}
	We want to show that since each blue cluster becomes stalling, it must be the case that a connected component of stalling clusters is distance-$k$ separated from red clusters.
	By Lemma~\ref{lem:blueStalling}, we know that eventually, each blue cluster becomes stalling.
	Consider now a connected component $\mathcal{D}$ of blue clusters and suppose  for a contradiction that there is a red cluster that is in distance $k$ from $\mathcal{D}$.
	Then, there must be a red node that is in distance $k$ from some cluster $\mathcal{C} \in \mathcal{D}$, which is a contradiction by Lemma~\ref{lem:stallingSeparation}.
	Therefore, it must be the case that each distance-$k$ connected component of blue clusters is distance-$k$ separated from red clusters.
\end{proof}
Note that \Cref{lem:blueStalling,lem:stallingSeparation} also imply that each red vertex does not have a blue vertex in distance-$k$ at the end of a phase.

The following lemma does not use any red or blue coloring.
\begin{lemma}
	\label{lem:noComponentIncrease}
	If at some point two clusters $\mathcal{C}$ and $\mathcal{C}'$ are not in the same distance-$k$ connected component of clusters, they will not be in the same component for the rest of  the phase. 
\end{lemma}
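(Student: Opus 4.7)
The plan is to prove the following stronger invariant by induction over the steps within a single phase: if $H_t$ denotes the cluster graph at time $t$, then the partition of (still-existing) clusters into connected components of $H_{t+1}$ refines the partition induced by $H_t$. Once this is established, the lemma follows immediately: any two clusters that lie in distinct components at some point $t$ of the phase lie in distinct components at every $t' \geq t$ at which they both still exist.

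To prove the one-step refinement, it suffices to show that every edge of $H_{t+1}$ connects two clusters that already belong to a common component of $H_t$. Fix an edge $\{\mathcal{D},\mathcal{E}\}$ of $H_{t+1}$, witnessed by nodes $u\in\mathcal{D}$ and $v\in\mathcal{E}$ with $dist_G(u,v)\le k$. The only cluster-membership changes in a single step of the phase are (i) red nodes dying and (ii) red nodes being absorbed by blue clusters; no new clusters are created and no new vertices appear. Hence $u$ and $v$ are alive at time $t$ as well, say $u\in\mathcal{D}_u$ and $v\in\mathcal{D}_v$. Since $dist_G(u,v)\le k$ is determined by $G$ alone, $\{\mathcal{D}_u,\mathcal{D}_v\}$ is an edge of $H_t$ (or else $\mathcal{D}_u=\mathcal{D}_v$, which trivially puts them in the same component). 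If $u$ moved, that is $\mathcal{D}_u\neq\mathcal{D}$, then by the proposal rule $u$ was contained in the BFS tree of some blue node $w\in\mathcal{D}$ at time $t$, giving $dist_G(u,w)\le k$ and hence $\{\mathcal{D}_u,\mathcal{D}\}\in H_t$; symmetrically for $v$. Concatenating the available edges $\{\mathcal{D},\mathcal{D}_u\}$, $\{\mathcal{D}_u,\mathcal{D}_v\}$, $\{\mathcal{D}_v,\mathcal{E}\}$ in $H_t$ (omitting any that is a non-edge because the corresponding endpoint did not move) yields a walk from $\mathcal{D}$ to $\mathcal{E}$ in $H_t$, so they share a component at time $t$.

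The main obstacle is just setting up the correct invariant and being careful that the witness path behind each edge of $H_{t+1}$ really does certify an edge in $H_t$. A small subtlety worth highlighting is that $H$ is defined through $dist_G$ in the original communication network, so dead vertices and vertices outside $S$ continue to shorten distances; this matches exactly how tokens travel in the distance-$k$ proposal phase and is why a red node that reaches a blue cluster through a BFS path of length $\leq k$ indeed witnesses an edge in $H_t$. Iterating the refinement claim across the $\steps$ steps of a phase (the phase initialization, which only assigns red/blue colors, does not alter cluster memberships or $H$) proves the lemma.
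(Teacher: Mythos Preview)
Your proof is correct, but it takes a different route than the paper's. The paper gives a three-line argument: since $\mathcal{C}$ and $\mathcal{C}'$ lie in different distance-$k$ components, every path in $G$ between them must contain $k$ consecutive vertices that are dead or in $V\setminus S$; as dead/outside vertices never re-enter a cluster, these ``moats'' persist, so the clusters cannot later become connected. Your argument instead proves a one-step refinement invariant on the cluster graph $H$: for each edge of $H_{t+1}$ you trace its witnesses $u,v$ back to time $t$ and, using the proposal mechanism, exhibit a walk of length at most three in $H_t$ joining the two endpoint clusters.

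What each approach buys: the paper's moat argument is short and captures the geometric intuition, but it leaves implicit the step of why an absorbing blue cluster cannot ``cross'' the moat---that step ultimately relies on exactly the fact you make explicit, namely that a red node absorbed by $\mathcal{D}$ was already within distance $k$ of some $w\in\mathcal{D}$ at time $t$ and hence already in $\mathcal{D}$'s $H_t$-component. Your refinement proof is longer but self-contained and makes this dependence on the proposal rule transparent; it would also adapt more directly if the absorption rule were changed, since it isolates precisely which property of the step dynamics is needed.
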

\begin{proof}
Every path in $G$ connecting the clusters contains $k$ consecutive vertices that are dead or in $V\setminus S$. So, no alive cluster can ever contain any of these vertices ever again. Thus, they cannot become connected. 
\end{proof}

\begin{lemma}[Connected components decrease]
	\label{lem:connectedDecrease}
	Ignoring all colors, at the end of phase $i$, for each cluster $\mathcal{C}$ the number of clusters in the distance-$k$ connected component of $\mathcal{C}$ is at most $\max\{1,(3/4)^i|S|\}$. 
\end{lemma}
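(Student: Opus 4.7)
The plan is to prove the statement by induction on the phase index $i$. The base case $i=0$ is trivial because before the algorithm starts every vertex of $S$ forms its own cluster, so each distance-$k$ connected component contains at most $|S|\leq \max\{1,(3/4)^0|S|\}$ clusters.

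For the inductive step, fix a distance-$k$ connected component $D$ of clusters at the start of phase $i+1$ and track how it evolves. If $|D|=1$, the single cluster is distance-$k$ isolated; it receives no proposals and does not propose, so after phase $i+1$ it still forms a component of size $1$, which matches the $\max\{1,\cdot\}$ in the statement. So assume $|D|\geq 2$. The red/blue coloring step of the phase (via \Cref{lem:redBlue}) guarantees that the fraction of clusters of $D$ colored blue lies in $[1/2,3/4]$, hence the fraction of red clusters lies in $[1/4,1/2]$. By \Cref{lem:blueStalling} every blue cluster becomes stalling by the end of the phase, and then \Cref{lem:componentSeparation} says that at the end of the phase each distance-$k$ connected component of blue clusters is distance-$k$ separated from every red cluster.

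Combining these with \Cref{lem:noComponentIncrease}, any end-of-phase component $D'$ is a subset of the clusters originating in a single start-of-phase component $D$: once $D$ was separated from another component at the start of the phase they remain separated for the rest of the phase, and no new cluster is created. Moreover, within $D$ the separation of blue from red at the end of the phase forces $D'$ to consist entirely of blue clusters of $D$ or entirely of red clusters of $D$. Both alternatives are bounded by at most $\max\{3/4,1/2\}=3/4$ of the clusters in $D$ (absorbed red vertices only shrink red clusters and can make them vanish, which only helps). Using the inductive hypothesis $|D|\leq \max\{1,(3/4)^i|S|\}$, we get
\[
  |D'| \leq \tfrac{3}{4}\, |D| \leq \tfrac{3}{4}\cdot \max\{1,(3/4)^i|S|\} \leq \max\{1,(3/4)^{i+1}|S|\},
\]
completing the induction.

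The only subtle point I expect to need care on is verifying that an end-of-phase component cannot ``reach across'' two distinct start-of-phase components by way of cluster growth during the intermediate steps; this is exactly what \Cref{lem:noComponentIncrease} delivers, because once two clusters are in different distance-$k$ components every path between them contains $k$ consecutive dead or non-$S$ vertices, which can never be resurrected. Apart from this, the argument is a direct composition of the red/blue balance from \Cref{lem:redBlue} with the stalling/separation guarantees of \Cref{lem:blueStalling,lem:componentSeparation}.
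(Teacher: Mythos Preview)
Your proof is correct and follows essentially the same approach as the paper: induction on the phase index, using the balanced red/blue coloring from \Cref{lem:redBlue} to bound each color class by a $3/4$ fraction, then invoking \Cref{lem:blueStalling} and \Cref{lem:componentSeparation} to separate blue from red, and \Cref{lem:noComponentIncrease} to ensure no merging across start-of-phase components. The paper's write-up is slightly terser (it cites \Cref{lem:stallingSeparation} directly rather than only \Cref{lem:componentSeparation}), but the logical structure is identical.
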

\begin{proof}
	We prove the remaining claim by induction over the phases. For $i=0$, the claim holds as there are initially only $S$ clusters. Fix some phase $i\geq 0$ and assume that the claim holds at the end of phase $i$. 	Due to \Cref{lem:blueStalling} every blue cluster is stalling at the end of phase $i+1$. Thus, due to \Cref{lem:stallingSeparation}, there is no red vertex in the distance-$k$ neighborhood of any blue cluster, which implies that vice versa also for every red cluster there is no blue vertex in the distance-$k$ neighborhood. 
	
	 Now, let $C$ be a connected component of clusters in the cluster graph $H$ at the end of phase $i$ (and also at the beginning of phase $i+1$). By the induction hypothesis we have $|C|\leq \max\{1,(3/4)^i|S|\}$. If $|C|=1$ the claim also holds after phase $i+1$ as due to \Cref{lem:noComponentIncrease} connected components can never increase. If $|C|>1$, then at the beginning of phase $i+1$ the clusters of $C$ are colored red or blue in a balanced way, that is, at most $3/4$ of the clusters in the component have the same color. 
	Let $B$ be the blue clusters of $C$ and $R$ the red clusters of $C$. Due to \Cref{lem:componentSeparation} the clusters in $C$ and $R$ are distance-$k$ separated after the phase. The bound on the size of the connected component at the end of phase $i+1$ follows due to the size of $B$ and $R$ and because \Cref{lem:noComponentIncrease} implies that no cluster can join the component in phase $i+1$. 
\end{proof}

\subsection{Analysis: Steiner Trees}
The correct building of Steiner trees for computing the network decomposition of $G^k$ has been analyzed in \cite{RG20} in the case where nodes in the $i$-th phase use the $i$-th bit of their ID to determine whether they are red or blue. As most observations are not affected at all by the red/blue coloring we only re-visit the observations that are crucial to bound the Steiner tree radius, the congestion and thus also have an influence on the runtime.

\begin{observation}[Steiner tree growth]
	\label{obs:steinerGrowth}
	In each step, the radius of the Steiner tree of each blue cluster grows by at most $k$, while the radius of the Steiner tree of each red cluster does not grow.
\end{observation}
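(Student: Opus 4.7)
The plan is to analyze the two cases (red and blue clusters) separately, leveraging the structure of the proposal mechanism and the bounded depth of the BFS trees constructed in the distance-$k$ proposal implementation.

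For a red cluster $\mathcal{C}$, the first step is to observe that in the algorithm, only blue clusters can receive proposals, and only accepted proposals cause cluster membership changes or Steiner tree modifications. A red cluster neither has new vertices added to it nor initiates any BFS token forwarding that would enlarge its Steiner tree. Formally, I would walk through each action in a single step of the algorithm (token forwarding, proposal commitment, accept/reject, Steiner tree update) and verify that none of them touches the Steiner tree of a red cluster. This immediately yields the second half of the observation.

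For a blue cluster $\mathcal{C}$, I would first recall that new vertices enter $\mathcal{C}$ only through accepted proposals, and that every accepted proposal arises from a red node that lies in a BFS tree rooted at some blue node $b \in \mathcal{C}$, where this BFS tree is constructed in at most $k$ forwarding iterations in $G$. Consequently, every newly added vertex $u$ lies at hop-distance at most $k$ from a vertex $b$ already in $\mathcal{C}$ (via a path consisting of the BFS tree branch from $b$ to $u$, which is the path added to the Steiner tree). The next step is a triangle-inequality argument relative to the cluster leader $\ell_\mathcal{C}$: since $b$ was already a terminal (or non-terminal internal) node of the Steiner tree before the step, its distance to $\ell_\mathcal{C}$ along the tree is at most the current Steiner radius; appending the BFS branch of length at most $k$ adds at most $k$ to that distance. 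Taking the maximum over all newly added vertices shows that the Steiner radius after the step exceeds the radius before the step by at most $k$.

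The main obstacle, if there is one, is simply making sure the Steiner tree update is correctly described: after an accepted proposal, the relevant segment of the BFS tree (from $b$ down to the new terminal $u$) is spliced into the existing Steiner tree at $b$, so that the new tree remains a valid Steiner tree rooted at $\ell_\mathcal{C}$ with all new terminals reachable through a path whose extension over the pre-step tree has length at most $k$. I would also briefly note that dead vertices or vertices in $V\setminus S$ that appear on the BFS branch become non-terminal Steiner nodes, which is consistent with the definition of a cluster collection and does not affect the radius argument. Everything else is immediate from the algorithm description.
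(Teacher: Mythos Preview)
Your proposal is correct and matches the reasoning the paper leaves implicit: the observation is stated without proof in the paper, and your case split (red clusters are never extended; blue clusters are extended only along BFS branches of length at most $k$ spliced in at an existing tree node) is exactly the intended argument.
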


\begin{observation}
	\label{obs:steinerCongestion}
	 In each step an edge is added to at most $2$ Steiner trees and in each phase an edge is added to at most  $\kappa = 2\cdot \min\{k,\steps\}$ Steiner trees. 
\end{observation}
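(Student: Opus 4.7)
The plan is to prove both parts by analyzing the BFS-based proposal mechanism in one step and its aggregate effect across a phase, using the fact that Steiner trees are extended only along BFS paths that connect accepted proposers back to their roots.

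For the per-step bound of $2$, I would rely on the key forwarding rule: every non-blue vertex forwards only the first token it receives, breaking ties arbitrarily. As a consequence each non-blue vertex has at most one BFS parent across the entire parallel family of BFS trees from all non-stalling blue clusters in that step. Hence the edge $e = (u,v)$ can serve as a parent edge in at most two ways, namely as $u$'s parent edge (with $v$ being $u$'s BFS parent) or as $v$'s parent edge (with $u$ being $v$'s BFS parent). This immediately bounds the number of BFS trees containing $e$ by two, and consequently at most two Steiner trees are extended through $e$ in a single step. (A short timing argument in fact shows these two possibilities are mutually exclusive, but the crude bound of $2$ is what the observation requires.)

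For the per-phase bound $\kappa = 2 \cdot \min\{k, \steps\}$, I plan to establish two independent upper bounds and take their minimum. Summing the per-step bound over the $\steps$ steps of a phase gives the bound $2 \cdot \steps$ at once. For the complementary bound $2k$, the approach is to track the evolution of the endpoints of $e$: whenever $e$ is used as a parent edge in some step, each red endpoint of $e$ either gets absorbed into the corresponding blue cluster (turning blue) or is killed, because any red vertex that received the token necessarily proposes. In subsequent steps these endpoints can no longer forward tokens from other clusters (blue vertices do not forward tokens from other clusters), and combined with the depth-$k$ limit on every BFS this restricts the number of distinct clusters whose BFS can still reach $e$ via the remaining non-red endpoints in later steps, ultimately yielding the bound $2k$.

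The hard part will be making the $2k$ argument rigorous, because one has to carefully account for the interplay between the depth-$k$ BFS bound and the blocking effect of absorbed or already-blue endpoints, especially in the corner cases where $e$'s endpoints are dead or lie outside $S$ and hence do not change status across the phase. The per-step claim and the $2 \cdot \steps$ alternative bound are, in contrast, immediate from the first-token forwarding rule alone.
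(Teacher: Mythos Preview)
Your per-step argument is correct and matches the paper's reasoning: the single-token forwarding rule ensures each edge serves as a BFS parent-edge for at most one endpoint in each direction, yielding at most $2$ Steiner-tree additions per step. The bound $2\cdot\steps$ then follows by summation, just as you say.

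The gap is in your plan for the $2k$ bound. Your idea of tracking when endpoints turn blue handles the case of two red endpoints cleanly (and the paper treats that case the same way), but---as you correctly anticipate---it gives nothing when an endpoint $u$ is dead or lies in $V\setminus S$: such a vertex keeps forwarding tokens in every step and its status never changes, so ``blocking via endpoint color'' does not terminate anything. Your sketch then appeals to the depth-$k$ BFS limit to ``restrict the number of distinct clusters whose BFS can still reach $e$'', but that is precisely the point where a concrete mechanism is needed, and none is supplied.

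The paper closes this gap with a different observation. Fix the dead (or non-$S$) endpoint $u$. Whenever $e$ is added to a Steiner tree again in a later step of the same phase, the nearest blue vertex must be \emph{strictly closer} to $u$ than at the previous addition. The reason is that a token arriving at $u$ from the same or larger distance has no more remaining hop budget than before, and every red vertex that was within reach of $u$ under that budget already proposed in the earlier step and is therefore now blue or dead; hence the token reaches no new red vertex and $e$ would not be added. This strictly decreasing distance, which starts at a value at most $k$, is what produces the factor $k$ directly. Your endpoint-tracking approach does not generate this monotone, so as written the $2k$ half of the argument is incomplete.
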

\begin{proof}
	In a step an edge is only added to a Steiner tree if a token traveled through the edge during the distance-$k$ proposal phase. As vertices only forward the first token that they receive there can at most be one token per direction that travel through the edge, that is, in each step an edge joins at most $2$ Steiner trees. 
	
	During a phase, an edge can never be added to a Steiner tree anymore once both of its endpoints become blue. If both endpoints of an edge $e=\{u,v\}$ are red and $e$ is added to a spanning tree both endpoints are blue afterwards and the edge will not be added to another spanning tree in the phase. 
	Thus, we concentrate on edges $e=\{u,v\}$ in which at least one vertex is dead or in $V\setminus S$. Let $u$ be this vertex. The edge $e$ has been added to a spanning tree because a token traveled through. If $e$ is added to the spanning tree again blue has to be closer to $u$ than in the last iteration as otherwise it does not reach any new red vertex.
\end{proof}

Recall, that the congestion of Steiner trees is the maximum number of trees that contain the same edge. 
\begin{observation}
	\label{obs:SteinerTrees}
	At all times the radius of the Steiner trees is at most $\beta$ and each edge of $G$ is in at most  $\kappa=\phases\cdot O(\min\{k,\steps\})$ Steiner trees.
\end{observation}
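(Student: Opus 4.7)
The plan is to combine the two preceding observations (\Cref{obs:steinerGrowth} and \Cref{obs:steinerCongestion}) with the chosen values of \phases{} and \steps, proving the radius bound and the congestion bound separately.

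For the radius bound, I would argue by induction on the total number of steps executed so far. Initially each cluster is a single vertex and thus has a Steiner tree of radius $0$. By \Cref{obs:steinerGrowth}, across one step the Steiner tree of a blue cluster grows by at most $k$ and red clusters do not grow at all; after a full phase, every cluster's tree radius therefore increased by at most $k\cdot\steps$. The algorithm runs $\phases$ phases, so at any moment during or after the algorithm the Steiner tree radius is at most
\[
k\cdot \steps \cdot \phases = \beta,
\]
which matches the definition of $\beta$ fixed at the start of the section.

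For the congestion bound, I would apply \Cref{obs:steinerCongestion} phase by phase. That observation guarantees that within a single phase any edge of $G$ is added to at most $2\min\{k,\steps\}$ Steiner trees. Summing over the $\phases$ phases of the algorithm, the total number of Steiner trees containing any fixed edge is at most
\[
\phases \cdot 2\min\{k,\steps\} = \phases\cdot O(\min\{k,\steps\}) = \kappa,
\]
which is exactly the claimed bound. No new ideas are needed; both halves of the statement are just telescopings of the per-step and per-phase observations up through the total length of the algorithm.

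The only thing to be a little careful about is that \Cref{obs:steinerCongestion} is formulated as ``in each phase an edge is added to at most $2\min\{k,\steps\}$ Steiner trees,'' meaning it accounts for the possibility that the same edge is used repeatedly in distinct steps of the same phase without overcounting; I would make this explicit by noting that once an edge's both endpoints turn blue in a phase it is never re-used in that phase, so summing the per-phase bound over $\phases$ phases is legitimate even when trees from different phases share edges. No step here is a genuine obstacle; the statement is essentially a bookkeeping corollary of the previous two observations together with the definitions of $\beta$ and $\kappa$.
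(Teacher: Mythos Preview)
Your proposal is correct and follows essentially the same approach as the paper: both halves are obtained by summing the per-step growth bound of \Cref{obs:steinerGrowth} and the per-phase congestion bound of \Cref{obs:steinerCongestion} over the $\phases\cdot\steps$ steps (resp.\ $\phases$ phases) of the algorithm, and then identifying the results with the defined values of $\beta$ and $\kappa$. The paper's own proof is in fact more terse than yours—it simply cites the two observations—so your added remark about why the per-phase congestion bounds can be summed is extra justification rather than a deviation.
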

\begin{proof}
	Due to \Cref{obs:steinerGrowth} the radius of each Steiner tree grows at most by $k$ per step. Thus, the radius throughout all phases is upper bounded by $\beta=O(k\cdot \phases\cdot\steps)=O(k\cdot x\cdot \log^2 n)$.
	
	The congestion bound follows with \Cref{obs:steinerCongestion}.
\end{proof}

\subsection{Analysis: Implementation and Runtime}
The presented algorithm is a $\log n$ factor faster than the variant in \cite{RG20} (that uses unique IDs of size $\poly n$) because we incooperated the faster aggregation methods from \cite{GGR20} for parallel trees when accepting/rejecting proposals. 
\begin{observation}[Runtime per step]
	\label{obs:runtimeStep}
	One step can be implemented in $O\big(k + (\beta +\kappa)\cdot \max\{1,\kappa/b\}\big)$. 
\end{observation}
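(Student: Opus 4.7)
The plan is to break the implementation of a single step into four subroutines and bound each separately, then combine. I would proceed in the following order.

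First, the distance-$k$ proposal phase: every blue node in a non-stalling cluster creates one token and forwards it in $G$ for $k$ hops under the rule that each node forwards only the first token it receives. This is a purely local, $k$-round process (single $b$-bit messages on each edge per round suffice, since each vertex sends at most one token per direction), contributing $O(k)$ rounds. As a byproduct, at the end every red proposer knows which blue root it will propose to, and for each edge used, its endpoints know the BFS-tree direction.

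Second, each blue cluster $\mathcal{C}$ must compute $|P_{\mathcal{C}}|$, compare it against the $(1+1/\proposalAnswerParameter)$ growth threshold, and communicate the accept/reject decision back to the proposers. For the accept/reject count I apply Corollary~\ref{cor:treeAggregationBetter} on the existing Steiner trees (of radius $\beta$ and congestion $\kappa$): in particular, each blue vertex $v$ knows how many red proposers chose it (from the BFS-tree bookkeeping of the previous subroutine), and a summation on the cluster's Steiner tree delivers $|P_{\mathcal{C}}|$ to the leader $\ell_{\mathcal{C}}$ in $O\big(\max\{1,\kappa/b\}\cdot(\beta+\kappa)\big)$ rounds. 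A broadcast of one bit from $\ell_{\mathcal{C}}$ back through the Steiner tree, using the same corollary, informs each blue vertex whether to admit its proposers, within the same asymptotic bound.

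Third, each blue vertex pushes the single-bit decision back along the BFS tree it rooted, so that each red proposer learns its fate; this costs another $O(k)$ rounds and at the same time lets the accepted red vertices (and the intermediate Steiner nodes on the BFS paths) record the new parent pointers and cluster identifier, extending the Steiner tree accordingly. For the killed proposers and for stalling status updates of $\mathcal{C}$, no further communication is necessary beyond this broadcast.

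Summing the three contributions yields $O(k) + O\big(\max\{1,\kappa/b\}\cdot(\beta+\kappa)\big) = O\big(k + (\beta+\kappa)\cdot\max\{1,\kappa/b\}\big)$, as claimed. The only non-routine issue is the parallel aggregation over many clusters whose Steiner trees share edges, but this is precisely what Corollary~\ref{cor:treeAggregationBetter} is designed for; the $\max\{1,\kappa/b\}$ slowdown absorbs the case when congestion exceeds the bandwidth $b$, while the additive $O(k)$ captures the unavoidable cost of reaching distance $k$ in the communication graph. No tree operations on clusters of different phases interfere, since the Steiner-tree parameters $\beta$ and $\kappa$ stated in the lemma are already the \emph{worst-case} bounds over the entire execution (Observation~\ref{obs:SteinerTrees}).
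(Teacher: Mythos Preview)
Your proposal is correct and mirrors the paper's own argument almost exactly: $O(k)$ for the BFS-token phase, an application of Corollary~\ref{cor:treeAggregationBetter} for the parallel Steiner-tree aggregation giving the $(\beta+\kappa)\cdot\max\{1,\kappa/b\}$ term, and another $O(k)$ to push the accept/reject bit back and extend the trees. One very minor point: before the Steiner-tree summation, each blue root must actually learn the number of red proposers in its own BFS tree via a convergecast on that BFS tree; this is an extra $O(k)$ rounds (the BFS trees are vertex-disjoint since every non-blue node forwards only one token), and it is absorbed in your final bound, but it does not literally come ``for free'' from the outward token-forwarding as your phrase ``from the BFS-tree bookkeeping of the previous subroutine'' suggests.
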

\begin{proof}
	$O(k)$ to send the proposals. For aggregating the number of proposals we require $Z= (\beta +\kappa)\cdot \max\{1,\kappa/b\}$ via the simultaneous aggregation with overlapping Steiner trees from \Cref{cor:treeAggregationBetter}. Informing nodes whether their proposal is accepted and extending the Steiner trees can also be done in $O(k+Z)$ rounds. 
\end{proof}

\begin{lemma}[Runtime per phase]
	\label{lem:runtimePerPhase}
	The runtime of one phase is $O\big(\beta\cdot \logstar b + \steps\cdot (\beta +\kappa) \cdot  \max\{1,\kappa/b\}\big)$.
\end{lemma}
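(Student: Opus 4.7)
The plan is to decompose a phase into its two algorithmic pieces and add up the costs, using the parameter identities $\beta = k\cdot\phases\cdot\steps$ and $\kappa = 2\phases\cdot\min\{k,\steps\}$ to absorb the smaller terms.

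First I would recall that one phase consists of (i) a single red/blue coloring of the current clusters and then (ii) $\steps$ iterations of a ``one step''. Step (i) is exactly the setting of \Cref{lem:redBlue}, applied to the current cluster collection (which has Steiner radius at most $\beta$, congestion at most $\kappa$ and $b$-bit cluster identifiers). That lemma directly gives a cost of $O\big(k + (\beta+\kappa)\logstar b + \beta\cdot\kappa\big)$ for the coloring. Step (ii) is handled by \Cref{obs:runtimeStep}, which bounds a single step by $O\big(k + (\beta+\kappa)\cdot\max\{1,\kappa/b\}\big)$; multiplying by $\steps$ contributes $O\big(\steps\cdot k + \steps\cdot(\beta+\kappa)\cdot\max\{1,\kappa/b\}\big)$.

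Next I would simplify the sum using the parameter relations. The $\steps\cdot k$ term collapses into $\beta$ because $\steps\cdot k \le k\cdot\phases\cdot\steps = \beta$, and a bare $k$ is obviously $\le \beta$; both are absorbed into $\beta\cdot\logstar b$. The factor $(\beta+\kappa)\logstar b$ from the coloring step is absorbed into $\beta\cdot\logstar b$ because $\kappa = 2\phases\cdot\min\{k,\steps\} \le 2k\phases\steps = 2\beta$, so $\beta+\kappa = O(\beta)$. The only remaining term that is not obviously of the claimed form is the $\beta\kappa$ contribution from \Cref{lem:redBlue}.

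The main obstacle — really the only nontrivial arithmetic — is checking that $\beta\cdot\kappa$ fits inside the second term $\steps\cdot(\beta+\kappa)\cdot\max\{1,\kappa/b\}$. I would split on whether $\kappa \le b$ or $\kappa > b$. When $\kappa > b$, $\max\{1,\kappa/b\} = \kappa/b$ and the target bound is at least $\steps\cdot\beta\cdot\kappa/b$, so $\beta\kappa \le \steps\cdot\beta\cdot\kappa/b$ follows from $b \le \steps$, which holds since $\steps = (\proposalAnswerParameter+1)\log n \ge \log n \ge b$ in the regimes of interest. When $\kappa \le b$, one has $\max\{1,\kappa/b\}=1$, and it suffices to show $\kappa \le \steps$; this is where the precise form of $\kappa$ and $\steps$ matters, but $\kappa = 2\phases\cdot\min\{k,\steps\} \le 2\phases\cdot\steps$ together with $\steps \ge \phases$ gives $\kappa = O(\steps)$ (the constant slack is absorbed by the big-$O$). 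Putting all three absorptions together yields the claimed bound $O\big(\beta\cdot\logstar b + \steps\cdot(\beta+\kappa)\cdot\max\{1,\kappa/b\}\big)$.
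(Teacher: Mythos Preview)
Your decomposition is exactly the paper's: cite \Cref{lem:redBlue} for the coloring, \Cref{obs:runtimeStep} for each of the $\steps$ steps, then absorb the lower-order terms using $k\le\beta$, $k\cdot\steps\le\beta$, and $\kappa=O(\beta)$. The paper's own write-up is terser (and its parenthetical hint ``$\steps\ge\beta$'' is a typo), so your more explicit treatment of the $\beta\kappa$ term is actually an improvement in clarity.

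There is one logical slip to fix. In the case $\kappa\le b$ you argue that ``$\kappa = 2\phases\cdot\min\{k,\steps\} \le 2\phases\cdot\steps$ together with $\steps \ge \phases$ gives $\kappa = O(\steps)$''. That inference is invalid: from $\kappa\le 2\phases\cdot\steps$ and $\phases\le\steps$ you only get $\kappa\le 2\steps^2$, not $\kappa=O(\steps)$. The repair is immediate and consistent with what you already do in the other case: you are willing to assume $b\le\log n$ there, so here simply use the case hypothesis directly, namely $\kappa\le b\le\log n\le(\proposalAnswerParameter+1)\log n=\steps$, which gives $\beta\kappa\le\beta\cdot\steps\le\steps\cdot(\beta+\kappa)\cdot\max\{1,\kappa/b\}$ as needed.
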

\begin{proof}
	The red/blue coloring of clusters takes $O(k+(\kappa+\beta)\cdot \logstar b+\kappa\cdot \beta)$ rounds via \Cref{lem:redBlue}.
	Due to \Cref{obs:runtimeStep} each step requires $O(k + \max\{1,\kappa/b\}\cdot (\beta +\kappa))$ rounds, that is, a phase can be implemented in (using $\beta\geq k$ and $\steps\geq \beta$) 
	\begin{align*}
		O\big(k+(\kappa+\beta)\cdot \logstar b+\beta\cdot \kappa\big)+ \steps\cdot O\big(k + (\beta +\kappa)\cdot \max\{1,\kappa/b\}\big) \\
		= O(\beta\cdot \logstar b) + k\cdot \steps + \steps \cdot (\beta +\kappa)\cdot  \max\{1,\kappa/b\} \\
		 = O\big((\kappa+\beta)\cdot \logstar b + \steps\cdot (\beta +\kappa) \cdot  \max\{1,\kappa/b\}\big) & \qedhere
	\end{align*}
\end{proof}

\begin{proof}[Proof of \Cref{thm:ballCarvingSlow}]
	Due to to \Cref{lem:fewDie} at most $|S|/x$ vertices are not clustered. The distance-$k$ cluster separation follows with \Cref{lem:connectedDecrease}. The bound on the Steiner tree radius and congestion is stated in \Cref{obs:SteinerTrees}. The runtime follows by multiplying the runtime per phase from \Cref{lem:runtimePerPhase} by the number of phases. Using $b=\log n$. we get $\max\{1,\kappa/b\}=\max\{1,\log n\cdot \min\{k,x\cdot \log^2  n\} /b\}=\min\{k,x\cdot \log^2  n\}$.
	We obtain
	\begin{align*}
		O(\phases\cdot\beta\cdot \logstar b + \phases\cdot\steps\cdot (\beta+\kappa)\cdot \min\{k,x\cdot \log^2  n\}) \\
		= O(\log^4 n\cdot \logstar b) + O(k\cdot x^2\cdot \log^6 n\cdot \min\{k,x\cdot \log^2  n\}).	 & \qedhere	
	\end{align*} 
\end{proof}

\end{document}